\title{Sequoidal Categories and Transfinite Games: A Coalgebraic Approach to Stateful Objects in Game Semantics\footnote{This work was partially supported by UK EPSRC Grant EP/K037633/1}}
\titlerunning{Sequoidal Categories and Transfinite Games} 
\author[1]{William John Gowers}
\author[1]{James Laird}
\affil[1]{Department of Computer Science, University of Bath, Claverton Down, Bath.  BA2 7AY.  United Kingdom\\
  \texttt{W.J.Gowers@bath.ac.uk} \quad\texttt{jiml@cs.bath.ac.uk}}
\authorrunning{W.\,J. Gowers and J. Laird} 
\subjclass{F3.2.2 Denotational Semantics}
\keywords{Game semantics, Stateful languages, Transfinite games, Sequoid operator}
\DeclareFontFamily{U}{mathb}{\hyphenchar\font45}
\DeclareFontShape{U}{mathb}{m}{n}{
      <5> <6> <7> <8> <9> <10> gen * mathb
      <10.95> mathb10 <12> <14.4> <17.28> <20.74> <24.88> mathb12
      }{}
\DeclareSymbolFont{mathb}{U}{mathb}{m}{n}
\DeclareMathSymbol{\sqsubsetneq}    {3}{mathb}{"88}
\DeclareMathSymbol{\varsqsubsetneq} {3}{mathb}{"8A}
\DeclareMathSymbol{\varsqsubsetneqq}{3}{mathb}{"92}
\DeclareMathSymbol{\sqsubsetneqq}   {3}{mathb}{"90}
\theoremstyle{plain}
\newtheorem{proposition}[theorem]{Proposition}
\newtheorem{exercise}{Exercise}
\newtheorem*{warning}{\lightning\ Warning \lightning}
\theoremstyle{definition}
\newtheorem{notation}[theorem]{Notation}
\newcommand*\from{\colon}
\def \inv {^{-1}}
\DeclareMathOperator{\id}{id}
\DeclareMathOperator{\pr}{pr}
\newcommand{\tensor}{\otimes}
\newcommand{\sequoid}{\oslash}
\renewcommand{\implies}{\multimap}
\newcommand{\comp}[2]{#2 ; #1}
\newcommand{\C}{\mathcal C}
\newcommand{\F}{\mathcal F}
\newcommand{\G}{\mathcal G}
\newcommand{\W}{\mathcal W}
\newcommand{\suchthat}{\,\colon\,}
\newcommand{\OP}{\{O,P\}}
\newcommand{\emptyplay}{\epsilon}
\newcommand{\prefix}{\sqsubseteq}
\newcommand{\pprefix}{\sqsubsetneqq}
\newcommand{\assoc}{{\mathtt{assoc}}}
\newcommand{\lunit}{{\mathtt{lunit}}}
\newcommand{\runit}{{\mathtt{runit}}}
\newcommand{\sym}{{\mathtt{sym}}}
\newcommand{\blank}{\,\underline{\hspace{1.5ex}}\,}
\newcommand{\der}{{\mathtt{der}}}
\newcommand{\wk}{{\mathtt{wk}}}
\newcommand{\toisom}{{\xrightarrow{\cong}}}
\newcommand{\passoc}{{\mathtt{passoc}}}
\newcommand{\run}{{\mathtt{r}}}
\newcommand{\fcoal}[1]{{\leftmoon #1 \rightmoon}}
\renewcommand{\subset}{\subseteq}
\newcommand{\dist}{{\mathtt{dist}}}
\newcommand{\dec}{{\mathtt{dec}}}
\renewcommand{\int}{{\mathtt{coh}}}
\DeclareMathOperator{\CCom}{CCom}
\newcommand{\catname}[1]{{\mathsf{#1}}}
\newcommand{\Ord}{\catname{Ord}}
\newcommand\oppcat[1]{#1^{\mathrm{op}}}
\newcommand{\bN}{{\mathbb{N}}}
\newcommand{\OK}{{\mathtt{OK}}}
\newcommand{\Var}{{\mathtt{Var}}}
\renewcommand{\read}{{\mathtt{read}}}
\newcommand{\wwrite}{{\mathtt{write}}}
\newcommand{\Rel}{\catname{Rel}}
\newcommand{\inr}{{\mathsf{inr}}}
\newcommand{\inl}{{\mathsf{inl}}}
\newcommand{{\Na}}{\bN}
\newcommand{{\cell}}{{\mathsf{cell}}}
\newcommand{\fix}{{\mathsf{fix}}}
\newcommand{\eq}{{\mathsf{eq}}}
\newcommand{\co}{{\textrm{\textexclamdown}}}
\newcommand{\go}[1]{{\underline{#1}}}
\newcommand{\com}{{\mathsf{com}}}
\newcommand{\weak}{{\mathsf{weak}}}
\newcommand{\cellst}{{\mathsf{cell\_ST}}}
\newcommand{\cellinit}{{\mathsf{cell\_init}}}
\newlength{\arrow}
\newcommand*{\constantwidthxrightarrow}[1]{\xrightarrow{\mathmakebox[\arrow]{#1}}}
\newcounter{abspage}
\newcommand{\newSFPage}[1]
  {\global\expandafter\let\csname SFPage@#1\endcsname\null}
\protected@write\@auxout{\let\theabspage=\relax}
  {\string\newSFPage{\theabspage}}%
\begin{document}

\maketitle

\begin{abstract}
The non-commutative sequoid operator $\sequoid$ on games was introduced  to capture algebraically the presence of state in history-sensitive strategies in game semantics, by imposing a causality relation on the  tensor product of games.  Coalgebras for the  functor  $A\sequoid\blank$  --- i.e., morphisms from $S$ to $A \sequoid S$ --- may be viewed as state transformers:  if $A\sequoid\blank$ has a \emph{final coalgebra}, $\oc A$, then the anamorphism of such a state transformer encapsulates its explicit state, so that it is shared only between successive invocations.

We study the  conditions under which a final coalgebra $\oc A$ for  $A\sequoid\blank$ is the carrier of a \emph{cofree commutative comonoid} on $A$. That is, it is a model of the exponential of linear logic in which we can construct imperative objects such as reference cells  coalgebraically, in a game semantics setting. We show that if the tensor \emph{decomposes} into the sequoid,  the final coalgebra $!A$ may be endowed with the structure of the cofree commutative comonoid if the natural isomorphism $\oc (A \times B) \cong \oc A \tensor \oc B$ holds. This condition is always satisfied if $\oc A$ is the \emph{bifree algebra} for $A\sequoid\blank$, but in general it is necessary to impose it, as we establish by giving an example of a sequoidally decomposable category of games in which plays will be allowed to have transfinite length. In this category, the final coalgebra for the functor $A\sequoid\blank$ is not the cofree commutative comonoid over $A$: we illustrate this by explicitly contrasting the final sequence for the functor $A\sequoid\blank$ with the chain of symmetric tensor powers used in the construction of the cofree commutative comonoid as a limit by  Melli\`es, Tabareau and Tasson. 

\end{abstract}

\section{Introduction}
Game semantics has been used to define a variety of models of higher-order programming languages with  mutable state, including Idealized Algol  \cite{SamsonGuyIAPassive}, and various fragments of ML \cite{AMV,AHM}. Unlike traditional denotational semantics, which typically represent imperative programs as state transformers, the state in these models is completely implicit: local declaration of mutable variables is interpreted as composition with a ``history sensitive'' strategy representing a reference cell. 
This is conceptually simple in principle but leads to some quite combinatorial definitions; a more explicit representation of the current state can be very useful for constructing and reasoning about imperative objects.

\subsection{Defining Higher-order Stateful Objects, Coalgebraically}
Let us first motivate the study of the coalgebraically derived cofree comonoid in game semantics by considering a similar but simpler and more familiar phenomenon. 
A \emph{state-transformer} in a symmetric monoidal category is a morphism $f:A \otimes S \rightarrow B \otimes S$ taking an argument together with an input state to a result together with an output state. A well-studied \cite{jacobs} technique in semantics is to  use an appropriate final coalgebra to \emph{encapsulate} the state in such a transformer, allowing multiple  successive invocations, each of which passes its output state as an input state to the next invocation. 

For example, consider the category $\Rel$ of sets and relations, with symmetric monoidal structure given by the Cartesian product (with unit $I$, the singleton set $\{*\}$). This has finite (bi)products (disjoint unions) so we may define the functor $F(A,S) = (A \otimes S) \oplus I$. For any object (set) $A$, let $A^*$ be the set of finite sequences of elements of $A$ (i.e. the carrier of the free monoid on $A$), and $\alpha:A^* \rightarrow F(A,A^*)$ be the morphism $\{(\varepsilon,\inr(*)\}\cup  \{(aw,(\inl(a,w))\ |\ a \in A,w \in A^*\}$. It is straightforward to show that: 
\begin{lemma}$(A^*,\alpha_A)$ is the final coalgebra  for $F(A,\_)$.
\end{lemma}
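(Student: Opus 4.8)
The plan is to show that the coalgebra $(A^*, \alpha_A)$ satisfies the universal property of a final coalgebra: for any $F(A,\blank)$-coalgebra $(S, \beta)$, with $\beta \from S \to F(A, S) = (A \tensor S) \oplus I$, there is a unique coalgebra morphism $h \from S \to A^*$. Here a coalgebra morphism is a relation $h$ making the evident square commute, i.e. $\comp{\alpha_A}{h} = \comp{F(A, h)}{\beta}$ (reading composition in the order the paper's \verb|\comp| macro dictates). Since we are working in $\Rel$, everything is a relation rather than a function, so I must be careful to verify both inclusions when proving the commuting square, and to verify both existence and uniqueness of $h$.

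First I would construct the candidate morphism $h$ explicitly. Given $\beta \from S \to (A \tensor S) \oplus I$, I read off from $\beta$ a ``transition'' relation: $s \mathrel{\beta} \inl(a, s')$ records that from state $s$ we may output $a$ and move to $s'$, while $s \mathrel{\beta} \inr(*)$ records that $s$ may ``halt''. I then define $h \from S \to A^*$ by declaring $s \mathrel{h} (a_1 a_2 \cdots a_n)$ to hold exactly when there is a finite sequence of states $s = s_0, s_1, \ldots, s_n$ with $s_{i-1} \mathrel{\beta} \inl(a_i, s_i)$ for each $i$ and $s_n \mathrel{\beta} \inr(*)$ at the end. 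In particular $s \mathrel{h} \emptyplay$ iff $s \mathrel{\beta} \inr(*)$. This is the relational analogue of ``unrolling'' the coalgebra, and the finiteness of the sequences in $A^*$ is exactly what matches the halting clause.

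Next I would verify that this $h$ is a coalgebra morphism, by checking the commuting square as an equality of relations $S \to F(A, A^*)$ in both directions. Chasing an element through $\comp{\alpha_A}{h}$ relates $s$ to $\inr(*)$ precisely when $s \mathrel{h} \emptyplay$, and to $\inl(a, w)$ precisely when $s \mathrel{h} (aw)$; chasing through $\comp{F(A,h)}{\beta}$ relates $s$ to $\inr(*)$ when $s \mathrel{\beta} \inr(*)$, and to $\inl(a, w)$ when there is $s'$ with $s \mathrel{\beta} \inl(a, s')$ and $s' \mathrel{h} w$. The recursive definition of $h$ makes these two descriptions agree, establishing the square. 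Then for uniqueness I would show that \emph{any} coalgebra morphism $g \from S \to A^*$ must coincide with $h$: the commuting square forces, for each word length $n$, that $s \mathrel{g} w$ with $|w| = n$ is determined by the relation on words of length $n-1$ together with $\beta$, so by induction on $|w|$ the relation $g$ is pinned down to equal $h$ on all words. The base case $n = 0$ fixes $g$ on $\emptyplay$ via the $\inr$ component, and the inductive step fixes $g$ on $aw$ via the $\inl$ component.

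I expect the main obstacle to be the uniqueness argument rather than existence: because morphisms in $\Rel$ are relations, one cannot simply appeal to a ``unique map out of an initial point'' as one would with functions, and one must genuinely run the induction on word length in both inclusions, taking care that the inductive hypothesis constrains $g$ to agree with $h$ as relations (both which pairs lie in $g$ and which do not). The finiteness of sequences in $A^*$ is essential here: it is what makes the length-induction well-founded and rules out the ``infinite-sequence'' coalgebra that would otherwise also satisfy the square. This is precisely the point the paper is foreshadowing, namely that in the transfinite setting finality of $A^*$-like constructions becomes subtle, so I would flag that the argument depends critically on plays having finite length.
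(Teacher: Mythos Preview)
Your proposal is correct and is the natural direct verification; the paper itself omits the proof entirely, declaring it ``straightforward to show,'' so there is nothing to compare against beyond confirming that your unrolling construction and length-induction for uniqueness are the expected argument.
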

Since we have a natural transformation  $\inl_{A,S}:A \otimes S \rightarrow F(A,S)$, we may encapsulate the state in the  state transformer $f:S \rightarrow A \otimes S$ by taking the \emph{anamorphism} of $\tilde{f} = (f;\inl_{A,S})\cup\{(s,\inr (*))\suchthat s\in S\}\from S \rightarrow F(A,S)$, --- i.e. the unique $F(A,\_)$-coalgebra morphism from $(S,\tilde{f})$ into $(A^*,\alpha_{A})$. This  is  a morphism from an initial state $S$ into $A^*$: by definition, composing it with $\alpha:A \rightarrow F(A,A^*)$ (which we can think of as  \emph{invoking} our stateful object) returns a copy of $f$ together with the encapsulated morphism with updated internal state.   

 Distributivity of $\oplus$ over $\otimes$ implies that $F(A \oplus A',S) \cong F(A,S) \oplus F(A',S)$.  This allows state transformers to be aggregated, to construct stateful objects compounded of a series of methods which share access to a common state. 
For example, we may represent a reference cell storing integer values as a state transformer $\cell:\Na \rightarrow (\Na \oplus \Na) \otimes \Na$, obtained by aggregating two ``methods'' which share access to a value in $\Na$ representing the contents of the cell --- returning a ``read'' of the input state (and leaving it unchanged) or accepting a ``write'' of a new value and using it to update the state.   Thus (with appropriate tagging) it is the relation $\{(i, (\read(i),i))\suchthat i \in \Na\} \cup \{(i,({\mathtt{write}}(j),j))\suchthat i,j\in\Na\}$.
The anamorphism of the coalgebra $\cell^{\textasciitilde}\from\Na \rightarrow F(\Na \oplus \Na,\Na) $  is the relation from $\Na$ to $(\Na \oplus \Na)^*$ consisting of pairs of the form $(i_1,\read(i_1)^*{\mathtt{write}}(i_2)\read(i_2)^*\ldots)$. Composition with this morphism is precisely the interpretation of new variable declaration in the semantics  in $\Rel$ of the prototypical functional-imperative language \emph{Syntactic Control of Interference} (SCI) given in \cite{Mcsci}.

Coalgebraic methods thus give us a recipe for constructing and using categorical definitions of stateful semantic objects.  In order to fully exploit these, however, we endow $A^*$ with the structure of a \emph{comonoid} in our symmetric monoidal category, by defining morphisms $\delta_A:A^* \rightarrow A^* \otimes A^* = \{(u\cdot v,(u,v))\ |\ u,v \in A^*\}$ and $\epsilon:A \rightarrow I = \{(\varepsilon,*)\}$. In fact, this is the \emph{cofree comonoid} on $A$ --- there is a morphism $\eta_A:A^* \rightarrow A = \{(a,a)\ | \ a \in A\}$ such that for any comonoid $(B,\delta_B,\epsilon_B)$, composition with $\eta_A$ defines an equivalence (natural in $B$)  between the morphisms from $B$ into $A$, and the comonoid morphisms from $(B,\delta_B,\epsilon_B)$ into  $(A^*,\delta_A,\epsilon_A)$.   
\begin{proposition}$(A^*,\delta,\epsilon)$ is the cofree comonoid on $\Rel$.\footnote{The definitions of $\delta$ and $\epsilon$, and the proof that this is the cofree comonoid  may be derived from the fact that $(A^*,\alpha_A)$ is a \emph{bifree algebra} for $F(\_, A)$  --- i.e. $(A^*,\alpha^{-1})$ is an initial algebra for $F(A,\_)$ ($\alpha$ must be an isomorphism by Lambek's lemma). We leave this as an exercise.} 
\end{proposition}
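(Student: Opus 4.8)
The plan is to exploit the bifree algebra structure on $(A^*,\alpha_A)$, using finality of the coalgebra to construct and pin down the mediating morphism and initiality of the algebra (equivalently, induction on sequence length) to verify the comonoid identities. First I would confirm that $(A^*, \delta, \epsilon)$ is a comonoid. In $\Rel$ the relation $\delta$ connects a sequence $w$ to every pair $(u, v)$ with $u \cdot v = w$, and $\epsilon$ connects $\varepsilon$ to $*$. Both composites in the coassociativity square relate $w$ to all triples $(u, v, t)$ with $u \cdot v \cdot t = w$, and each counit composite relates $w$ to itself by splitting off an empty factor, so the comonoid axioms hold at once. One could instead read $\delta$ and $\epsilon$ off as catamorphisms from the initial algebra $(A^*, \alpha_A\inv)$ and prove their laws by induction, but the relational description makes the checks transparent.

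Next, given a comonoid $(B, \delta_B, \epsilon_B)$ and a morphism $f \from B \to A$, I would assemble the coalgebra $\beta = (\delta_B; (f \tensor \id_B); \inl) \cup (\epsilon_B; \inr) \from B \to F(A, B)$, which reads the head of the output from $f$ applied to the left factor of $\delta_B$ and its stopping condition from $\epsilon_B$. Finality of $(A^*, \alpha_A)$ then provides a unique coalgebra morphism $g \from B \to A^*$. Unwinding the anamorphism, $g$ relates $b$ to $a_1 \cdots a_n$ precisely when iterated application of $\delta_B$ splits $b$ into $d_1, \dots, d_n$ and a residue on which $\epsilon_B$ is defined, with $f$ relating each $d_i$ to $a_i$. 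Postcomposing with $\eta_A$, which selects the length-one sequences, and using the counit law of $B$, then yields the triangle $g; \eta_A = f$.

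The crux is showing that $g$ is a comonoid morphism. Counit preservation $g; \epsilon = \epsilon_B$ is immediate from the $n = 0$ case of the unfolding. For comultiplication, $g; \delta = \delta_B; (g \tensor g)$, I must show that $g$ relates $b$ to $u \cdot v$ if and only if $\delta_B$ splits $b$ into some $b_1, b_2$ with $g$ relating $b_1$ to $u$ and $b_2$ to $v$; I would prove this by induction on the length of $u$, matching one unfolding of $\beta$ against one use of coassociativity of $\delta_B$ at each step, with the base case $u = \varepsilon$ supplied by the counit law of $B$. I expect this alignment of the concatenation splitting in $A^*$ with the coassociative splitting in $B$ to be the main obstacle.

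Uniqueness then follows once more from finality. If $h \from B \to A^*$ is any comonoid morphism with $h; \eta_A = f$, I would show that $h$ is forced to be a coalgebra morphism from $(B, \beta)$ to $(A^*, \alpha_A)$: instantiating the comultiplication law for $h$ at the splittings $(u, v)$ with $u$ of length one, identifying $h$ on length-one sequences with $f$ via $h; \eta_A = f$, and treating the empty sequence through counit preservation, reconstructs exactly the defining equation of a coalgebra morphism. Hence $h = g$ by the universal property of the final coalgebra. One can also sidestep these calculations using the self-duality $\Rel \cong \oppcat{\Rel}$, under which comonoids become monoids and the cofree comonoid on $A$ becomes the free monoid $A^*$; but the coalgebraic route above is the one suggested by the bifree structure.
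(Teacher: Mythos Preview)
Your proposal is correct and follows exactly the route the paper sketches in its footnote: the paper does not give a proof but leaves it as an exercise, hinting that one should use the bifree algebra structure of $(A^*,\alpha_A)$ --- and that is precisely what you do, using finality to define the mediating morphism and establish uniqueness, and initiality (equivalently, induction on sequence length) to verify the comonoid-morphism laws. Your concrete unwinding of the anamorphism in $\Rel$ and the inductive alignment of concatenation in $A^*$ with coassociativity in $B$ are the right details to fill in, and the alternative via self-duality of $\Rel$ is a nice aside.
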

This structure can be used to interpret procedures which share access to a stateful resource such as a reference cell.      
Its main limitation is that we have not defined a \emph{commutative} comonoid  for any non-empty set $A$ (evidently, $\delta$ is not invariant under post-composition with the symmetry isomorphism of the tensor). Thus we can only model procedures with shared access to the same stateful object if the order in which they are permitted to access it is fixed. (This is precisely the situation in SCI, where the typing system allows sharing  across sequential composition, but not between functions and their arguments.) In order to model sharing of state without this constraint (and build a Cartesian closed category), we need to endow our final coalgebra with the structure of a \emph{cofree commutative comonoid}, proposed as the basis of a model of linear logic by Lafont \cite{LafontCofCommCom}.  The category of sets and relations  does not allow this (the cofree commutative comonoid on an object $A$ in $\Rel$ is given by the set of finite multisets of $A$, which is not a final coalgebra). Hence, we turn to the richer structures of game semantics.

 \subsection{The cofree commutative comonoid as a final coalgebra}
We now outline the remainder of the paper. Our main contribution is an investigation of the circumstances in which the  cofree commutative comonoid on $A$ arises from a final coalgebra for the functor $A\sequoid\blank$, where $\sequoid$ (the sequoid) is a non-commutative operation on games  introduced by one of the authors \cite{laird02}.\footnote{We will focus on a particular category of ``history sensitive'', Abramsky-Jagadeesan style games \cite{abramskyjagadeesangames}, but sequoidal structure is a unifying feature of sequential, history-sensitive games: see \cite{laird02} for a variant of the Hyland-Ong games and \cite{Clairambault08aremark} for Conway games.} 
 In this setting, we  can model a state transformer for a program as a morphism $S \to A \sequoid S$ --- i.e., a coalgebra for the functor $A\sequoid\blank$. The final coalgebra for this functor is the exponential game $!A$ introduced by Hyland \cite{hyland1997games}, which corresponds to a $\omega$-fold sequence $A \sequoid (A \sequoid (A \sequoid \ldots ))$; under appropriate conditions, it is the carrier for the  cofree  commutative comonoid on $A$. We aim to characterize these conditions using just the categorical structure,  in order to capture a general class of models and to derive formal principles for coinductively proving  program equivalences. In a nutshell, we require that a certain natural morphism $\oc A \tensor \oc B \to \oc (A\times B)$ is an isomorphism. This can be used to show that $\oc\blank$ gives rise to a strong monoidal functor. Perhaps more surprisingly, this is sufficient to show that $\oc A$ is the cofree commutative comonoid.

This \emph{strong monoidal hypothesis} holds whenever $\oc A$ is a bifree algebra for $A\sequoid\blank$. But we are also interested in cases where $\oc A$ is not bifree --- for example, in categories of ``win games'' and \emph{winning strategies}  \cite{hyland1997games}, which lack the partial maps which can be shown to arise in the bifree case. 
To show that the strong monoidal hypothesis is necessary in general, we introduce a  sequoidal category of games with \emph{transfinite} plays in which it does not hold: because a transfinite interleaving of two sequences of length $\omega$ may have length greater than $\omega$, the final coalgebra for $A\sequoid\blank$  (corresponding  to only $\omega$-many copies of the game $A$) cannot be the carrier for the cofree commutative comonoid.

We compare the  coalgebraic construction of the cofree exponential to the explicit characterization of the latter given by Melli\`es, Tabareau and Tasson \cite{MelliesCofCommCom} as the limit of a chain of symmetric tensor powers. This chain exists in any decomposable sequoidal category: where its limit exists and is preserved by the tensor (the conditions required in \cite{MelliesCofCommCom}) it must be the final coalgebra for $A \sequoid\blank$. However, in our categories of transfinite games, and win games and winning strategies (which may be viewed as games of length $\omega+1$), the construction fails --- this limit is not the cofree commutative comonoid.  

\section{Sequoidal categories}

\subsection{Game semantics and the sequoidal operator}
We shall present a form of game semantics in the style of \cite{hyland1997games} and \cite{abramskyjagadeesangames}.  A game $A$ is given by a set $M_A$ of \emph{$P$-moves} and \emph{$O$-moves} and by a non-empty prefix-closed set $P_A\subset M_A^*$ of \emph{positions}, which are alternating sequences of $O$-moves and $P$-moves.  We shall adopt the rule that all positions must start with an $O$-move.  We call a position a \emph{$P$-position} if it ends with a $P$-move or is empty and an \emph{$O$-position} if it ends with an $O$-move. 

A \emph{strategy} for a game $A$ is a non-empty prefix-closed subset $\sigma$ of $P_A$ that is closed under $O$-replies to $P$-positions and which satisfies \emph{determinism}: if $sa,sb\in\sigma$, where $s$ is an $O$-position, then $a=b$.  

We build connectives on games as in \cite{abramskyjagadeesangames}.  The set of moves for a compound game is given by the disjoint union of the sets of moves for the individual sub-games, and the positions for each game are defined as follows:

\begin{description}
  \item[Product] If $(A_i\suchthat i\in I)$ is a collection of games, then we write $\prod_{i\in I}A_i$ for the game in which player $O$, on his first move, may play in any of the games $A_i$.  From then on, play continues in $A_i$.  If $A_1,A_2$ are games, we write $A_1\times A_2$ for $\prod_{i=1}^2 A_i$.  
  \item[Tensor Product] If $A,B$ are games,the tensor product $A\tensor B$ is played by playing the games $A$ and $B$ in parallel, where player $O$ may elect to switch games whenever it is his turn and continue play in the game he has switched to.
  \item[Linear implication] The implication $A\implies B$ is played by playing the game $B$ in parallel with the \emph{negation} of $A$ - that is, the game formed by switching the roles of players $P$ and $O$ in $A$.  Since play in the negation of $A$ starts with a $P$-move, player $O$ is forced to make his first move in the game $B$.  Thereafter, player $P$ may switch games whenever it is her turn.
\end{description}

It is well known (see \cite{abramskyjagadeesangames}, for example) that we may compose strategies $\sigma$ for $A\implies B$ and $\tau$ for $B\implies C$ to get a morphism $\comp\tau\sigma$ for $A\implies C$ and that this structure gives rise to a monoidal closed category where objects are games, morphisms from $A$ to $B$ are strategies for $A\implies B$ and the tensor product and linear implication are given by $A\tensor B$ and $A\implies B$.  We call this category $\G$.  $\G$ has all products, given by $\prod_{i\in I}A_i$ as above.  

The one non-standard connective we will use is the \emph{sequoid} connective from \cite{laird02}:

\begin{description}
    \item[Sequoid] If $A$ and $B$ are games, then the positions of $A\sequoid B$ are precisely the positions of $A\tensor B$ that are empty or that start with a move in $A$.
\end{description}

By inspection, we can verify that we have structural isomorphisms:
\[
  \begin{array}{cc}
    \dist\from A\tensor B\toisom (A\sequoid B)\times(B\sequoid A)
      & \dist^0 \from I \sequoid C \toisom I \\
    \dec\from(A\times B)\sequoid C\toisom (A\sequoid C)\times (B\sequoid C)
      & \run \from A \sequoid I \toisom A \\
    \passoc\from (A\sequoid B)\sequoid C\toisom A\sequoid (B\tensor C) &
  \end{array}
\]

We might expect that the sequoid would give rise to a functor $\G\times\G\to \G$ in the way that the tensor product does, through playing strategies in parallel.  However, this does not quite work: playing strategies $\sigma$ for $A\implies B$ and $\tau$ for $C \implies D$ in parallel does not necessarily give rise to a valid strategy for $(A\sequoid C)\implies (B \sequoid D)$, since player $P$ might end up playing in $C$ before anyone has played in $A$.  However, if we require that the strategy $\sigma$ is \emph{strict} --- that is, that player $P$'s reply (if any) to the opening move in $B$ is always a move in $A$ --- then we do get a valid strategy $\sigma\sequoid\tau$ for $(A\sequoid C)\implies (B\sequoid D)$ and, moreover, $\sigma\sequoid\tau$ is strict.  We shall write $\G_s$ for the category of games with \emph{strict} strategies as morphisms; then $\blank\sequoid\blank$ gives us a functor $\G_s\times \G\to\G_s$.  

\subsection{Sequoidal categories}

We now formalize these observations into a category-theoretic definition.  The purpose of this definition is to formalize precisely what it is about categories of games that makes them suitable for modeling stateful programs, and to give an equational characterization of the combinatorial definitions used in the Abramsky-McCusker model of Idealized Algol \cite{laird02, SamsonGuyIAPassive}.  

\begin{definition}
  A \emph{sequoidal category} consists of the following data:
  \begin{itemize}
    \item A symmetric monoidal category $\C$ with monoidal product $\tensor$ and tensor unit $I$, associators $\assoc_{A,B,C}\from(A\tensor B)\tensor C\toisom A\tensor(B\tensor C)$, unitors $\runit_A\from A\tensor I\toisom A$ and $\lunit_A\from I\tensor A\toisom A$ and braiding $\sym_{A,B}\from A\tensor B\to B\tensor A$.
    \item A category $\C_s$.
    \item A right monoidal category action \cite{Actegory} of $\C$ on the category $\C_s$.  That is, a functor $\blank\sequoid\blank\from\C_s\times\C\to\C_s$ that gives rise to a monoidal functor from $\C$ into the category of endofunctors on $\C_s$.  We write $\passoc_{A,B,C}\from (A \sequoid B) \sequoid C \to A \sequoid (B\tensor C)$ and $\run_A\from A \sequoid I \to A$ for the coherence parts of this monoidal functor.

    \item A functor $J\from \C_s\to\C$ (in the games example, this is the inclusion functor $\G_s\to\G$)

    \item A natural transformation $\wk_{A,B}\from J(A)\tensor B\to J(A\sequoid B)$ satisfying the coherence conditions\footnote{These coherence conditions say that $(J,\wk)$ is a \emph{lax morphism of right monoidal actions of $\C$} from the sequoidal action $(\C_s, \blank\sequoid\blank)$ to the `right multiplication' action $(\C, \blank\tensor\blank)$.}:
    \begin{equation*}
      \begin{tikzcd}
        J(A) \tensor I \arrow[r, "\runit_A" yshift=0.3em] \arrow[d, "\wk_{A,I}"']
          & J(A) \\
        J(A \sequoid I) \arrow[ur, "J(\run_A)"']
          &
      \end{tikzcd}
      \,\,\,
      \begin{tikzcd}
        (J(A) \tensor B) \tensor C \arrow[r, "\wk_{A,B}\tensor\id_C" yshift=0.3em] \arrow[d, "\assoc_{A,B,C}"']
          & J(A \sequoid B) \tensor C \arrow[r, "\wk_{A\sequoid B, C}" yshift=0.3em]
            & J((A \sequoid B) \sequoid C)\\
        J(A) \tensor (B \tensor C) \arrow[r, "\wk_{A,B\tensor C}"']
          & J(A \sequoid (B \tensor C)) \arrow[ur, "J(\passoc_{A,B,C})"']
            &
      \end{tikzcd}
    \end{equation*}
  \end{itemize}
\end{definition}

Our category of games satisfies further conditions:

\begin{definition}
  Let $\C=(\C,\C_s,J,\wk)$ be a sequoidal category.  We say that $\C$ is an \emph{inclusive sequoidal category} if $\C_s$ is a full-on-objects subcategory of $\C$ containing all isomorphisms and finite products of $\C$, and the morphisms $\wk_{A,B}$ and $J$ is the inclusion functor.

  We say that $\C$ is \emph{decomposable} if $I$ is a terminal object for $\C$ and if for any $A$ and $B$, the tensor product $A \tensor B$ is a Cartesian product of $A \sequoid B$ and $B \sequoid A$, with projections $\wk_{A,B}:A \tensor B \rightarrow A \sequoid B$ and  $\comp{\wk_{A,B}}{\sym_{A,B}}:A \tensor B \rightarrow B \sequoid A$.  We say that $\C$ is \emph{distributive} if whenever the product $\prod_{i\in I} A_i$ exists, then $\left(\prod_{i\in I}A_i\right)\sequoid B$ is the product of the $A_i\sequoid B$, with projections $\pr_i\sequoid\id_B$, and if it has a terminal object $1$ satisfying $1\sequoid A\cong 1$ for all objects $A$.  

Although there are important examples where we do not have products, our examples will all be categories with all products.  Then we can state the definitions of decomposability and distributivity more succinctly by requiring that the natural transformations
  \begin{gather*}
    \dec_{A,B} = \langle \wk_{A,B}, \comp{\wk_{A,B}}{\sym_{A,B}}\rangle\from A\tensor B\to (A\sequoid B)\times (B\sequoid A) \\
    \dec^0 \from I \to 1 \\
    \dist_{A,B,C} = \langle \pr_1\sequoid \id_C,\pr_2\sequoid\id_C\rangle\from (A\times B)\sequoid C\to (A\sequoid C)\times (B\sequoid C) \\
    \dist_{(A_i\colon i\in I),B} = \langle \pr_i\sequoid \id_C \suchthat i\in I\rangle \from \left(\prod_{i\in I}A_i\right)\sequoid C \to \prod_{i\in I}(A_i\sequoid C) \\
    \dist_{A,0}\from 1\sequoid A\to 1
  \end{gather*}
  are isomorphisms.
\end{definition}

 The category of games $\G$ and categories arising in different traditions of game semantics \cite{Clairambault08aremark,laird02} are the prototype examples of distributive, decomposable sequoidal categories.  

 \begin{remark}
     Churchill, Laird and McCusker give a result in \cite{martinsthesis} that implies that any sequoidal category satisfying these, and other, extra conditions can be used to model a proof calculus for describing games and strategies.  Nevertheless, they note that examples do exist of sequoidal categories that do not arise as categories of games, such as the category of locally Boolean domains described in \cite{LairdLbd} (see the last paragraph of that paper, and also the citation in \cite{martinsthesis}).  
 \end{remark}

\subsection{The sequoidal exponential}

There are several ways to add exponentials to the basic category of games, but the definition that fits our purposes is the one based on countably many copies of the base game (see \cite{hyland1997games}, for example): the exponential $\oc A$ of $A$ is the game in which player $O$ may switch between countably many copies of $A$ -- $A_0, A_1, A_2, \dots$, as long as he starts them in order, starting with $A_0$, then opening $A_1$ and so on.  This condition on the order in which games may be opened is very important, as it allows us to define the exponential morphisms $\oc A \to \oc A \tensor \oc A$ and $\oc A \to \oc \oc A$.  In the first case, it can be proved \cite{LairdCofCommCom} that the comultiplication $\oc A \to \oc A \tensor \oc A$ exhibits $\oc A$ as the \emph{cofree commutative comonoid} on $A$, which shows that $A$ is a suitable model for the exponential \cite{LafontCofCommCom}.  

Even more interestingly from the point of view of modelling stateful languages, we may characterize $\oc A$ as the \emph{final coalgebra} for the functor $J(A\sequoid\blank)\from \G\to \G$ (henceforth we shall write this functor as $A\sequoid\blank$, eliding the inclusion functor $J$).  That is, given a coalgebra from $\sequoid\blank$ --- a game $B$ and a morphism $\sigma\from B \to A \sequoid B$ --- we get a unique morphism $\fcoal{\sigma}$ making the following diagram commute:
\[
  \begin{tikzcd}
    B \arrow[r, "\sigma"] \arrow[d, "\fcoal{\sigma}"']
      & A \sequoid B \arrow[d, "\id_A\sequoid\fcoal{\sigma}"] \\
    \oc A \arrow[r, "\alpha"']
      & A \sequoid \oc A
  \end{tikzcd}
  \]
We call $\fcoal{\sigma}$ the \emph{anamorphism} of $\sigma$.  

We shall use the following standard pieces of coalgebra theory:
\begin{description}
  \item[Lambek's Lemma] $\alpha_A$ is an isomorphism, with inverse given by the anamorphism of the map $\id\sequoid\alpha_A\from A\sequoid\oc A\to A \sequoid (A \sequoid \oc A)$ \cite{Lambek}.  In particular, $\alpha_A$ is a morphism in $\G_s$.  In the general case, we deduce that $\alpha_A$ is a morphism in $\C_s$.  
  \item[Final Sequence] If $\C$ is a category with enough limits and $F$ is an endofunctor on $\C$, we may build up an ordinal indexed sequence of objects and morphisms of $\C$ (that is, a functor $\oppcat{\Ord}\to\C$, where $\Ord$ is the category of ordinals and prefix inclusions):
    \[
      1 \leftarrow F(1) \leftarrow F^2(1) \leftarrow \cdots F^\omega(1) \leftarrow F^{\omega+1}(1) \leftarrow \cdots
      \]
  (by repeatedly applying $F$ and taking limits).  If this sequence stabilizes for any $\delta$ (i.e., if the morphism from $F^{\delta+1}(1)\to F^\delta(1)$ is an isomorphism), then $F^\delta(1)$ is the final coalgebra for $F$ \cite{finalseq}.  In the case $F=A\sequoid\blank$, we shall write $A^{\sequoid\delta}$ for $F^\delta(1)$.  
\end{description}

\subsection{Imperative programs as anamorphisms}\label{Ipaa}

We now illustrate the construction of stateful objects using anamorphisms by constructing the strategy $\cell$ from \cite{SamsonGuyIAPassive} that represents a storage cell.  If $X$ is a set of values, write $\go{X}$ for the game denoting the corresponding type: that is, $X$ is the game with maximal plays $qx$, where $x$ ranges over the elements of $X$, and $\go{X}$ has canonical strategies $\go{x}$ for $x\in X$, in which player $P$ responds to the opening move $q$ with the move $x$.  In particular, the game $\go{\{*\}}$ corresponding to a singleton set denotes the void or command type $\com$. We shall write $\Sigma = \go{\{*\}}$ for this game, and write $\OK = \go{*}$ for its unique total strategy.  

Following \cite{SamsonGuyIAPassive}, we define $\Var[X]$ to be the type $\com^X\times X$; that is, the product of $X$-many copies of the command type with one copy of the type $X$.  We can think of this with an object that has a method $\wwrite\_x$ for each element $x$, together with a method $\read$.  The corresponding game is the game
\[
    \Var[X]=\Sigma^X \times \go{X}
  \]
In order to tell apart the various games, we shall write $\Sigma_x$ for each copy of $\Sigma$, and write the moves of $\Sigma_x$ as $q_x$ and $*_x$.  Let $d\in X$ be a fixed default value.  Then it is quite easy to describe what the strategy $\cell$ on $\oc\Var[X]$ should be: it is the strategy that always responds to $q_x$ with $*_x$ (as it is forced to do) and which responds to the move $q$ in $\go X$ with that value $x\in X$ such that $q_x$ has been played most recently (or with $d$ if player $O$ has not yet played in $\Sigma^X$).  This is more or less how the strategy is defined in \cite{SamsonGuyIAPassive}.  The problem is that the state (the current most recently written value of $x$) is \emph{implicit}, and it is hard to get a handle on it.  

Instead, we try a state-transformer based approach.  We shall use $\oc \go{X}$ to represent the state of the storage cell (the $\oc$ is there since we will need to refer to the state multiple times).  We define morphisms $\read\from \oc \go X\to \go X\sequoid \oc \go X$ and $\wwrite\_x\from \oc\go X \to \Sigma\times\oc\go X$ as follows: $\read$ is the canonical morphism $\alpha_{\go X}$, while $\wwrite\_x$ is the following composite, which throws away the previous state and updates it with the value $x$:
\[
  \oc X \xrightarrow{\weak} I \xrightarrow{\runit} I \tensor I \xrightarrow{\OK \tensor \oc \go{x}} \Sigma \tensor \oc\go{X} \xrightarrow{\wk} \Sigma\sequoid \oc\go{X}
  \]
Taking the product of these morphisms and applying the distributivity of $\times$ over $\sequoid$, we get our state transformer:
\[
  \cellst \from \oc \go X \xrightarrow{\langle \wwrite_x\colon x\in X\;,\; \read\rangle} (\Sigma\sequoid\oc \go X)^X \times \go X \sequoid \oc \go X \xrightarrow{\dist\inv} (\Sigma^X \times \go X) \sequoid \oc \go X
  \]
By the definition of $\oc\Var[X]$, there is now a unique morphism $\cellinit\from \oc X \to \oc\Var[X]$ making the following diagram commute:
\[
  \begin{tikzcd}
    \oc \go X \arrow[r, "\cellst"] \arrow[d, "\cellinit"']
      & \Var[X] \sequoid \oc \go X \arrow[d, "\id\sequoid\cellinit"] \\
    \oc \Var[X] \arrow[r, "\alpha"]
      & \Var[X] \sequoid \oc \Var[X]
  \end{tikzcd}
  \]
Define a strategy $\sigma\from \oc\go X\to \oc\Var[X]$ combinatorially by saying that $\sigma$ is the strategy that behaves like $\cell$ (as defined above) on $\oc\Var[X]$ but which interrogates its argument in order to establish the default value, rather than using a fixed value.  By inspection, we can verify that replacing $\cellinit$ with $\sigma$ in the diagram above makes the square commute, and therefore that $\cellinit=\sigma$ by uniqueness.  It follows that $\cell$ is equal to the following composite:
\[
  I=\oc I \xrightarrow{\oc \go{d}} \oc X \xrightarrow{\cellinit} \oc \Var[X]
  \]
But now we are able to reason about its state explicitly by using the coalgebraic definition.  

\begin{exercise}
    By modifying the definition of $\wwrite\_x$, show that the same construction may be used to model a stack with $\mathsf{push}$ and $\mathsf{pop}$ methods.
\end{exercise}

\section{Constructing cofree commutative comonoids in sequoidal categories}

\subsection{A formula for the sequoidal exponential}

We observed that the exponential $\oc A$ of a game $A$ arises as the final coalgebra for the functor $A\sequoid\blank$.  We also observed that $\oc A$ has the structure of a cofree commutative comonoid on $A$.  These two facts are both crucial if we want to use sequoidal categories to model stateful programs.  In this section, we shall consider conditions under which we may deduce that the final coalgebra for $A\sequoid\blank$ and the cofree commutative comonoid over $A$ coincide.

One important result is the formula  given by Melli\`es, Tabareau and Tasson \cite{MelliesCofCommCom}, which does not depend on the presence of Cartesian products but which obtains the cofree commutative comonoid as a limit of \emph{symmetric tensor powers}.
\begin{definition}If $A$ is an object in a symmetric monoidal category, a $n$-fold symmetric tensor power of $A$ is an  \emph{equalizer} $(A^n,\eq)$ for the group $G$ of symmetry automorphisms on $A^{\otimes n}$. A tensor power is preserved by the tensor product if $(B\otimes A^n,\id_B \otimes \eq)$ is an equalizer for the automorphisms $\{\id_B \otimes g\ |\ g \in G\}$.
\end{definition}
In any affine category\footnote{This is a special case of the situation considered in  \cite{MelliesCofCommCom}: that $A$ is a ``free pointed object''.} with symmetrized tensor powers of $A$ we may define a diagram $\Delta(A) = $
\[
  I \xleftarrow{p_0} A \xleftarrow{p_1} A^2 \xleftarrow{p_2}\cdots\xleftarrow{p_{i-1}}A^i\xleftarrow{p_i} \cdots
  \]
where   
$p_i:A^{i+1} \rightarrow A^i$ is the unique morphism given by the universal property of the symmetric tensor power,  such that $\comp{\eq_i:A^{i+1}}{p_i} \rightarrow A^{\otimes i} = \eq_{i+1};(A^{\otimes i} \otimes t_A)$. 

Melli\`es, Tabareau and Tasson \cite{MelliesCofCommCom} have shown that where the limit $(A^\infty, \{p^\infty_i\from A^\infty \rightarrow A^i\})$ for this diagram exists  and commutes with the tensor,  --- i.e. for each object $B$, $B \tensor A^\infty$ is the limit of 
\[
  B \tensor I \xleftarrow{\id_B\tensor p_0} B\tensor A \xleftarrow{\id_B\tensor p_1} B \tensor A^2 \xleftarrow{\id_B\tensor p_2} \cdots
  \]
then a comultiplication $\mu:A^\infty \rightarrow A^\infty \tensor A^\infty$ may be defined making $(A^\infty,\mu,t_{\oc A})$ the cofree commutative comonoid. Where these conditions are satisfied, we shall call this a MTT-exponential.   

In the category of games, the morphisms $\id\sequoid\pr_1\from A\sequoid(B\times C)\to A\sequoid B$ and $\id\sequoid\pr_2\from A\sequoid(B\times C)\to A\sequoid C$ are jointly monomorphic, and this joint monomorphism is preserved by the tensor product.  If a distributive sequoidal category satisfies the same property, we say that it is \emph{strong distributive}.
\begin{proposition}Any strong distributive decomposable sequoidal category has all symmetric tensor powers, and these are preserved by the tensor.  
\end{proposition}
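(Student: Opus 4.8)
The plan is to construct the symmetric tensor powers by induction on $n$, using decomposability to reduce $A^{\tensor n}$ to a product of sequoids and strong distributivity to push the symmetric-group action through the sequoid. Write $A^n$ for the putative equalizer of the group $G = S_n$ of symmetries of $A^{\tensor n}$, with base cases $A^0 = I$ and $A^1 = A$ (equalizers over the trivial group). For the inductive step I would first record the \emph{opens-first} decomposition
\[
  A^{\tensor n} \;\cong\; \prod_{i=1}^{n}\bigl(A \sequoid A^{\tensor(n-1)}\bigr),
\]
an isomorphism obtained by iterating $\dec$, $\dist$ and $\passoc$ (for $n=3$ one gets $A\tensor B\tensor C \cong (A\sequoid(B\tensor C))\times(B\sequoid(A\tensor C))\times(C\sequoid(A\tensor B))$, and the general pattern is the same). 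A coherence computation with $\dec$, $\dist$, $\passoc$ and $\sym$ shows that under this isomorphism $g\in S_n$ permutes the $n$ factors according to its action on the ``first-opened'' label; this action is transitive with stabiliser $S_{n-1}$, and $S_{n-1}$ acts on the distinguished factor $A\sequoid A^{\tensor(n-1)}$ by the morphisms $\id_A \sequoid g$ where $g$ ranges over $S_{n-1}$ acting on $A^{\tensor(n-1)}$. Since the top-level action is transitive, a fixed sub-object is pinned down by its component in a single factor, which must be fixed by the stabiliser; hence $A^n$ is identified (if it exists) with the equalizer of $\{\,\id_A \sequoid g : g\in S_{n-1}\,\}$ on $A\sequoid A^{\tensor(n-1)}$, the embedding into $A^{\tensor n}$ being the orbit-diagonal followed by the decomposition isomorphism.

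The crux is then the claim that $A\sequoid\blank$ \emph{preserves} the $(n-1)$-fold symmetric-power equalizer $\eq_{n-1}\from A^{n-1}\to A^{\tensor(n-1)}$ supplied by the inductive hypothesis, i.e.\ that $\id_A\sequoid\eq_{n-1}$ is the equalizer of the $\id_A\sequoid g$. That it is a cone is immediate from functoriality of $A\sequoid\blank$ together with $g$ equalized by $\eq_{n-1}$. For universality I would present $A^{\tensor(n-1)}$ via decomposability as a product $\prod_k Q_k$ and invoke strong distributivity: the joint monomorphism $m = \langle\, \id_A\sequoid\pr_k\,\rangle\from A\sequoid\prod_k Q_k \to \prod_k (A\sequoid Q_k)$ is $S_{n-1}$-equivariant, the group acting on the codomain by permuting-and-iso-ing factors. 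Since $m$ is monic and equivariant, a morphism into $A\sequoid A^{\tensor(n-1)}$ equalizes the $\id_A\sequoid g$ if and only if its composite with $m$ equalizes the factor-permutations on the codomain; hence the equalizer on the domain is the pullback along $m$ of the equalizer on the codomain. The codomain equalizer exists by the (strong) inductive hypothesis, because the $S_{n-1}$-action on $\prod_k(A\sequoid Q_k)$ decomposes orbit-wise into symmetric powers of degree $<n$; and a direct check that $m \circ (\id_A\sequoid\eq_{n-1})$ equals the expected orbit-diagonal (using that $\pr_k$ composed with $\eq_{n-1}$ unwinds to the components of the lower-degree equalizer) exhibits $A\sequoid A^{n-1}$ as that pullback, establishing universality.

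Finally, preservation by the tensor --- the assertion that $(B\tensor A^n,\ \id_B\tensor\eq)$ is an equalizer --- runs by exactly the same argument, now carried through the induction alongside the existence statement. Strong distributivity states precisely that the joint monomorphism $m$ above \emph{remains} monic after applying $B\tensor\blank$, so $B\tensor\blank$ is equivariant-mono-preserving in the same way, and the orbit-diagonals are preserved because $B\tensor\blank$ preserves the finite products involved. I expect the main obstacle to be exactly this crux preservation step: unlike an honest distributive law, $A\sequoid(B\times C)$ is \emph{not} the product $(A\sequoid B)\times(A\sequoid C)$, so one cannot simply commute the sequoid past the product, and the entire weight of the argument rests on the hypothesis that the comparison map $\langle\id\sequoid\pr_1,\id\sequoid\pr_2\rangle$ is a monomorphism preserved by the tensor. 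Verifying the $S_{n-1}$-equivariance of that monomorphism, and the identification of the exhibited map with the pullback, are the delicate points; everything else is bookkeeping with the structural isomorphisms.
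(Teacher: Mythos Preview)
Your opens-first decomposition and orbit--stabiliser reduction are exactly the combinatorics underlying the paper's argument, but the paper takes a much more direct route, and your induction as stated has a gap. After the orbit reduction you need the equalizer of $\{\id_A\sequoid g : g\in S_{n-1}\}$ on $A\sequoid A^{\tensor(n-1)}$; pushing through your mono $m$ and applying orbit--stabiliser again leaves you with $S_{n-2}$ acting on $A\sequoid(A\sequoid A^{\tensor(n-2)})$ via $\id\sequoid(\id\sequoid g')$. This is neither a symmetric tensor power of $A$ nor of the form $B\tensor A^{\tensor m}$, so the hypothesis ``symmetric powers of degree $<n$ exist and are preserved by the tensor'' does not apply. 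To close the recursion you would have to strengthen the induction to cover equalizers of $S_m$ on $(A\sequoid)^k A^{\tensor m}$ for all $k$, and separately argue that the pullback along $m$ (or the factorisation of the codomain equalizer through $m$) actually exists.

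The paper avoids all of this by identifying the symmetric tensor power \emph{concretely}: it sets $A^n = A^{\sequoid n}$ and defines $\eq_n\from A^{\sequoid n}\to A^{\tensor n}$ as the diagonal tuple $\langle \id_A\sequoid\eq_{n-1},\dots,\id_A\sequoid\eq_{n-1}\rangle$ into the opens-first product $A^{\tensor n}\cong\prod_i(A\sequoid A^{\tensor(n-1)})$. The point is that $\eq_n$ is a \emph{split} monomorphism: the iterated weakening $\wk^n\from A^{\tensor n}\to A^{\sequoid n}$ satisfies $\eq_n;\sym^\pi;\wk^n=\id$ for \emph{every} $\pi\in S_n$, so in particular it is a retraction of $\eq_n$. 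Strong distributivity is used only once, to show (by a straightforward induction) that the family $\{\sym^\pi;\wk^n : \pi\in S_n\}$ is jointly monic and remains so after tensoring with any $B$. Given those two facts the universal property is a one-liner: for $f\from C\to A^{\tensor n}\tensor B$ invariant under all $\sym^\pi\tensor\id_B$, the factorisation is $f;(\wk^n\tensor\id_B)$, and the joint-mono family verifies $f;(\wk^n\tensor\id_B);(\eq_n\tensor\id_B)=f$ and gives uniqueness. No pullbacks, and no need for $A\sequoid\blank$ to preserve anything. What you gain from your approach is a cleaner conceptual picture of \emph{why} the equalizer sits where it does; what the paper's approach buys is that the existence and the tensor-preservation come out simultaneously from the explicit splitting, with strong distributivity doing exactly one job.
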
 
\begin{proof}
By sequoidal decomposability, for any $n\in\mathbb N$, $A^{\tensor (n+1)}$ is the Cartesian product $\prod_{i \le n} (\id_A \sequoid A^{\tensor n})$ with projections 
$\sym_i;\wk_{A,A^{\tensor n}} $, where  $\sym_i: A^{\tensor (n+1)} \rightarrow A^{\tensor (n+1)}$ is the symmetry isomorphism corresponding to the permutation on $n$ which swaps $1$ and $i$. 

  Define  $\wk^n: A^{\tensor n}\rightarrow A^{\sequoid n}$ by $\wk^{n+1} = \wk_{A, A^{\tensor n}}; (\id_A \sequoid \wk^n)$.  We show (by induction on $n$) that for any $n$, the morphisms $\comp{\wk_n}{\sym^\pi}$ are jointly monomorphic, where $\sym^\pi$ ranges over all of the permutation isomorphisms  on $A^{\tensor n}$, and that this joint monomorphism is preserved by the tensor product.

  We define the equalizer $\eq_n:A^{\sequoid n} \rightarrow A^{\tensor n}$ inductively by setting $\eq_n$ to be the product $\langle \id_A\sequoid\eq_{n-1}, \dots, \id_A\sequoid\eq_{n-1}\rangle$, using the identification of $A^{\sequoid n}$ as a product given above.  We may show inductively that $\comp{\wk^n}{\comp{\sym^\pi}{\eq_n}}=\id$ for all permutations $\pi\in S_n$.  
  Given any $f:C \rightarrow A^{\tensor n} \tensor B$ such that $\comp{(\sym^\pi \tensor \id_B)}{f} = f$ for any permutation $\pi$, taking $\comp{(\wk^n \tensor \id_B)}{f}:C \rightarrow A^{\sequoid n} \tensor B$ gives the unique morphism such that $\comp{(\eq_n\tensor \id_B)}{\comp{(\wk^n\tensor \id_B)}{f}} = f$.  Indeed, for all permutations $\pi\in S_n$ we have $\comp{((\comp{\comp{\wk_n}{\sym^\pi}}{\comp{\eq_n}{\wk^n}})\tensor\id_B)}{f} = \comp{(\wk^n\tensor\id_B)}{f}=\comp{((\comp{\wk^n}{\sym^\pi})\tensor\id_B)}{f}$.  Hence, $\comp{(\eq_n\tensor \id_B)}{\comp{(\wk^n\tensor \id_B)}{f}} = f$.  For uniqueness, use the fact that $\eq_n$ is a left inverse for $\wk^n$. 
\end{proof}
Thus, in any strong distributive sequoidally decomposable category, the diagram $\Delta(A)$ exists for any $A$. If a limit $A^\infty$ for this diagram exists and is preserved by the tensor 
--- i.e. for any $B$, $B \tensor A^\infty$ is the limit for $B \tensor \Delta(A)$ --- then it is the cofree commutative comonoid.  

Moreover, by distributivity, preservation by the tensor product implies preservation by the sequoid, which tells us that in this case the final sequence for $A\sequoid\blank$ must converge at $\omega$ and that the limit $A^\infty$ must be the carrier for the final coalgebra for $A\sequoid\blank$.  

\subsection{Win-games and winning strategies}

The construction from \cite{MelliesCofCommCom} covers a lot of important cases, but there are some situations in which the right conditions are not satisfied. Instead, we shall need the techniques that we shall describe in the following sections.  One example in which we cannot use the Melli\`es-Tabareau-Tasson formula is that of \emph{win-games}, or games with a winning condition \cite{abramskyjagadeesangames,martinsthesis}.  Given a game $A$, we write $\overline{P_A}$ to be the limit-closure of $P_A$ --- that is, $P_A$ together with the set of infinite sequences, all of whose finite prefixes are in $P_A$.  A \emph{win-game} is a game $A$ together with a function $\zeta_A\from \overline{P_A}\to\OP$ such that:
\begin{itemize}
  \item $\zeta_A(\emptyplay) = P$
  \item $\zeta_A(sa) = \lambda_A(a)$
\end{itemize}
Thus, $\zeta_A$ is entirely determined on $P_A$, and the only new information is the values that $\zeta_A$ takes on the infinite positions in $\overline{P_A}$.  The reason we bother to define $\zeta_A$ on finite positions at all is so that we can define it easily on the connectives:
\[
  \begin{array}{cccccc}
    \multicolumn{3}{c}{
    \zeta_{A\tensor B}(s) = \zeta_A(s\vert_A) \wedge \zeta_B(s\vert_B)
    }
      & \multicolumn{3}{c}{
        \zeta_{A \implies B}(s) = \zeta_A(s\vert_A) \Rightarrow \zeta_B(s\vert_B)
      }\\[8pt]
    \multicolumn{2}{c}{
    \zeta_{\prod_{i\in I}A_i}(s) = \bigwedge_{i\in I}\zeta_{A_i}(s\vert_{A_i})
    }
      & \multicolumn{2}{c}{
        \zeta_{A\sequoid B}(s) = \zeta_A(s\vert_A) \wedge \zeta_B(s\vert_B)
      }
        & \multicolumn{2}{c}{
        \zeta_{\oc A}(s) = \bigwedge_{i\in I}(\zeta_A(s\vert_i))
        }
  \end{array}
\]

Here, $\wedge$ and $\Rightarrow$ are the usual propositional connectives on $\{T,F\}$, where we identify $T$ with $P$ and $F$ with $O$.  The infinite positions $s$ with $\zeta_A(s)=P$ are the \emph{$P$-winning} positions, while the infinite positions $s$ with $\zeta_A(s)=O$ are the \emph{$O$-winning} positions.  

We define a \emph{winning strategy} on $(A,\zeta_A)$ to be a total strategy $\sigma$ on $A$ such that every infinite sequence arising as the limit of sequences in $\sigma$ is a $P$-winning position.  It is known (see \cite{abramskyjagadeesangames}) that the composition of winning strategies is winning and that we get a decomposable, distributive sequoidal category $\W$ with $\oc A$ as the final coalgebra for $A\sequoid\blank$ and the cofree commutative comonoid over $A$ \cite{martinsthesis}.

However, in this case, $\oc A$ is not the sequential limit of the symmetrized tensor powers over $A$.  Since $\W$ is a decomposable, strong distributive sequoidal category, the symmetrized tensor powers of $A$ are given by the sequoidal powers $A^{\sequoid n}$.  But now the limit of these objects is not quite the game $\oc A$; instead, it is the game $A^{\sequoid\omega}=\co A$ in which player $O$ may open an arbitrarily large finite number of copies of $A$, but loses if he opens infinitely many.  In the finite case, there was no way to keep track of infinite positions, so we could not make this distinction, but in the win-games case we can: we set $\zeta_{\co A}(s)=\zeta_{\oc A}(s)$, unless $s$ contains moves in infinitely many games, in which case we set $\zeta_{\co A}(s) = P$.  

This limit is not preserved by the functor $A\sequoid\blank$: in the game $A^{\sequoid(\omega+1)}=A\sequoid\co A$, player $O$ wins if he wins either in $A$ or in $\co A$, so he can win even if he plays in infinitely many games, as long as he wins in the first copy of $A$.  Similarly, in the game $A^{\sequoid(\omega+n)}$, player $O$ wins as long as he wins in one of the first $n$ copies of $A$ or opens finitely many copies.  Therefore, the limit $A^{\sequoid\omega2}$ \emph{is} the game $\oc A$: the final sequence for $A\sequoid\blank$ in $\W$ stabilizes at $\omega2$ and, consequently, the exponential in $\W$ is not an MTT-exponential.  

This example is a special case of our later result on transfinite games.  For now, we shall examine a coalgebraic approach that will prove that the final coalgebra $\oc A$ for $A\sequoid \blank$ in the category $\W$ of win-games gives us a cofree commutative comonoid.  

\subsection{The coalgebraic construction under the strong monoidal hypothesis}

We shall now need to assume that we are in a decomposable, distributive sequoidal category $(\C,\C_s,J,\wk)$ such that $\C_s$ has all products and $J$ preserves them.  However, we shall no longer need the MTT assumption that the exponential should be constructed as a limit of sequoidal powers.  The main cost is that we shall need to make a further assumption: that a certain naturally defined morphism $\oc A \tensor \oc B\to \oc (A\times B)$ is an isomorphism.  This assumption, broadly corresponding to the demand that the functor $\oc A$ be strong monoidal from the Cartesian category $(\C,\times,1)$ to the monoidal category $(\C,\tensor,I)$, will allow us to construct the comultiplication directly from the Cartesian structure and the definition of $\oc A$ as a final coalgebra.  

\begin{notation}
  We shall sometimes make the monoidal structure of the Cartesian product explicit by writing $\sigma\times\tau$ for $\langle\comp\sigma{\pr_1},\comp\tau{\pr_2}\rangle$.
\end{notation}

\begin{definition}
  Let $A,B$ be objects of an decomposable, distributive sequoidal category $(\C,\C_s,J,\wk)$ with final coalgebras $\oc A\xrightarrow{\alpha_A} A\sequoid\oc A$ for all endofunctors of the form $A\sequoid\blank$.  Let $A,B$ be objects of $C$.  Then we have a composite $\kappa_{A,B}$:
  \settowidth{\arrow}{\scriptsize$\id_{A\sequoid(\oc A\tensor\oc B)}\times (\id_B\sequoid\sym_{\oc B,\oc A})$}
  \begin{IEEEeqnarray*}{RCL}
    \kappa_{A,B} = \oc A \tensor \oc B & \constantwidthxrightarrow{\dec_{A,B}} & (\oc A \sequoid \oc B) \times (\oc B \sequoid \oc A) \\
    \cdots & \constantwidthxrightarrow{(\alpha_A\sequoid\id_{\oc B}) \times (\alpha_B\sequoid\id_{\oc A})} & ((A \sequoid \oc A) \sequoid \oc B) \times ((B \sequoid \oc B) \sequoid \oc A) \\
    \cdots & \constantwidthxrightarrow{\passoc_{A,\oc A,\oc B}\inv\times\passoc_{B,\oc B,\oc A}\inv} & (A \sequoid (\oc A \tensor \oc B)) \times (B \sequoid (\oc B \tensor \oc A)) \\
      \cdots & \constantwidthxrightarrow{\id_{A\sequoid(\oc A\tensor\oc B)}\times (\id_B\sequoid\sym_{\oc B,\oc A})} & (A \sequoid (\oc A \tensor \oc B)) \times (B \sequoid (\oc A \tensor \oc B))
  \end{IEEEeqnarray*}
  We get a morphism $\comp{\dist\inv}{\kappa_{A,B}}\from \oc A \tensor \oc B \to (A\times B) \sequoid(\oc A \tensor \oc B)$ and we write $\int_{A,B}=\fcoal{\comp{\dist\inv}{\kappa_{A,B}}}\from \oc A \tensor \oc B \to \oc (A\times B)$. 
\end{definition}

\begin{proposition}
  In the category of games, the morphism $\int_{A,B}$ is an isomorphism for all games $A,B$.
\end{proposition}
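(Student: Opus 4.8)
The plan is to prove the isomorphism by an explicit combinatorial analysis of the games involved, exploiting the fact that in $\G$ every position is finite. First I would unfold the anamorphism $\int_{A,B}$ into a concrete description of its behaviour as a strategy for $(\oc A \tensor \oc B) \implies \oc(A\times B)$. Unwinding $\kappa_{A,B}$, the coalgebra structure $\comp{\dist\inv}{\kappa_{A,B}}$ reads off the head move of whichever copy is currently leading in $\oc A \tensor \oc B$, tags it with the corresponding component of $A \times B$ through $\dist\inv$, and returns the remainder as a state again in $\oc A \tensor \oc B$. Since the anamorphism iterates exactly this step, $\int_{A,B}$ is the strategy which, for the $k$-th copy opened on the left (counting copies of $A$ and of $B$ together, in the temporal order in which their opening moves occur), opens a fresh copy $C_k$ of $A \times B$ on the right, resolving $C_k$ to the $A$- or $B$-component according to the kind of copy it was, and plays copycat within it.

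Next I would make precise the bijection on positions that this induces. Writing the moves of $\oc A \tensor \oc B$ as $A$-moves carrying a copy index $i$ and $B$-moves carrying a copy index $j$, and the moves of $\oc(A\times B)$ as moves of $A$ or of $B$ carrying a single copy index $k$, the correspondence sends the $k$-th copy opened on the left to the copy $C_k$ of $A \times B$, with $C_k$ resolved to the side recording which kind of copy it was. I would check that this is a length- and polarity-preserving map $P_{\oc A \tensor \oc B} \to P_{\oc(A\times B)}$ commuting with the prefix order, and that the ordering constraints match: the requirement that the $A$-copies (resp. $B$-copies) be opened in order on the left is precisely matched, after projection, by the requirement that the $C_k$ be opened in order on the right.

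Then I would verify that $\int_{A,B}$ is exactly this (history-sensitive) copycat, by checking that the copycat satisfies the defining anamorphism square $\comp{\alpha}{\int_{A,B}} = \comp{(\id\sequoid\int_{A,B})}{\comp{\dist\inv}{\kappa_{A,B}}}$: post-composing with $\alpha$ exposes precisely the head move that $\comp{\dist\inv}{\kappa_{A,B}}$ produces, so uniqueness of anamorphisms identifies the two. Finally, since the position bijection is manifestly invertible — opening a copy on the left for each $C_k$ opened on the right, using the component recorded in $C_k$ — the reverse copycat is a two-sided inverse, and hence $\int_{A,B}$ is an isomorphism.

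The main obstacle I anticipate is the bookkeeping of copy indices, which is where the argument really has content. The move-level correspondence is genuinely history-sensitive: the left index $i$ of an $A$-copy and the global index $k$ of its matching copy $C_k$ differ by the number of $B$-copies opened earlier, so no static renaming realizes the isomorphism and the correspondence must be shown to remain a consistent bijection as play proceeds. The crucial point that makes this work — and which I would flag explicitly, since it is exactly what fails in the transfinite setting treated later — is that every position of $\G$ is finite, so at each stage only finitely many copies have been opened and each $C_k$ receives a well-defined finite index $k$; with transfinite plays one could open $\omega$ copies of $A$ before any copy of $B$, leaving the first $B$-copy with no counterpart among the $\omega$-many copies $C_0, C_1, \dots$ available in $\oc(A \times B)$.
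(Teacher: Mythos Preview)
Your approach is essentially the paper's: identify $\int_{A,B}$ as an explicit copycat strategy, check (via the anamorphism square) that this copycat satisfies the defining property, and exhibit the obvious inverse copycat, which is well-defined precisely because every position of $\G$ is finite so only finitely many copies are ever open. The paper's argument is terser but makes the same two observations; one small correction to your description of the play --- in $(\oc A\tensor\oc B)\implies\oc(A\times B)$ it is Opponent who opens each copy $C_k$ on the right and Player who responds by opening a matching copy in $\oc A$ or $\oc B$ on the left, not the other way round, though this does not affect the argument.
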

\begin{proof}
  Observe that the morphism $\int_{A,B}$ is the copycat strategy on $\oc A\tensor\oc B\implies \oc (A\times B)$ that starts a copy of $A$ on the left whenever a copy of $A$ is started on the right and starts a copy of $B$ on the left whenever a copy of $B$ is started on the right (indeed, the morphisms in the diagram above are all copycat morphisms, so the copycat strategy we have just described must make that diagram commute).  Since there are infinitely many copies of both $A$ and $B$ available in $\oc (A\times B)$, and since a new copy of $A$ or $B$ may be started at any time, we may define an inverse copycat strategy on $\oc(A\times B)\implies \oc A\tensor\oc B$.
\end{proof}
Our first main result for this section will be the following:
\begin{theorem}
  \label{StrongMonoidalFunctor}
  Let $(\C,\C_s,J,\wk)$ be a distributive and decomposable sequoidal category with a final coalgebra $\oc A\xrightarrow{\alpha_A}A\sequoid\oc A$ for each endofunctor of the form $A\sequoid\blank$.  Suppose further that the morphism $\int_{A,B}$ as defined above is an isomorphism for all objects $A,B$.  Then $A\mapsto\oc A$ gives rise to a strong symmetric monoidal functor from the monoidal category $(\C, \times, 1)$ to the monoidal category $(\C, \tensor, I)$.  
\end{theorem}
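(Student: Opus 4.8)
The plan is to exhibit the data of a strong symmetric monoidal functor --- the action of $\oc\blank$ on morphisms, the structure map $\coh_{A,B}$, and a unit map $m_0\from I\to\oc 1$ --- and then to verify functoriality, naturality and the monoidal coherence laws, in every case by appealing to the universal property of the final coalgebra $\alpha_A\from\oc A\to A\sequoid\oc A$. The single tool used throughout is the finality criterion: a morphism $h\from X\to\oc Y$ equals the anamorphism $\fcoal\xi$ of $\xi\from X\to Y\sequoid X$ exactly when $\comp{\alpha_Y}{h}=\comp{(\id_Y\sequoid h)}{\xi}$. Hence to prove that two parallel morphisms into some $\oc Z$ agree, I will produce one coalgebra of which both are anamorphisms. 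The workhorse for propagating $\coh$ through such calculations is its own defining equation, obtained from $\coh_{A,B}=\fcoal{\comp{\dist\inv}{\kappa_{A,B}}}$:
\[
  \comp{\alpha_{A\times B}}{\coh_{A,B}}
  = \comp{(\id_{A\times B}\sequoid\coh_{A,B})}{(\comp{\dist\inv}{\kappa_{A,B}})}.
\]

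First I would pin down the functor and the unit. On morphisms I set $\oc f=\fcoal{\comp{(f\sequoid\id_{\oc A})}{\alpha_A}}$; this uses the first-variable functoriality of $\sequoid$, which is available for morphisms of $\C_s$ --- and $\C_s$ contains the products and, crucially, all the isomorphisms (associators, unitors, symmetries of $\times$) to which $\oc\blank$ is applied in the coherence diagrams. Then $\oc\id=\id$ because $\id_{\oc A}$ is visibly the anamorphism of $\alpha_A$, and $\oc(\comp g f)=\comp{\oc g}{\oc f}$ because both sides are anamorphisms of a common coalgebra, by a short chase using the defining equation of each $\oc(\blank)$. For the unit, distributivity gives $1\sequoid A\cong 1$, so $1\sequoid\blank$ is the constant functor at the terminal object $1$, whose final coalgebra is $1$ itself; since decomposability makes $I$ terminal, I obtain canonical isomorphisms $I\cong 1\cong\oc 1$ and take $m_0$ to be their composite.

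Next I would establish that $\coh_{A,B}$ is natural in both arguments. Given $f\from A\to A'$ and $g\from B\to B'$, both $\comp{\oc(f\times g)}{\coh_{A,B}}$ and $\comp{\coh_{A',B'}}{(\oc f\tensor\oc g)}$ are morphisms $\oc A\tensor\oc B\to\oc(A'\times B')$, and I would show that each is the anamorphism of the single coalgebra $\comp{((f\times g)\sequoid\id)}{(\comp{\dist\inv}{\kappa_{A,B}})}$, again via the finality criterion together with the defining equation of $\coh$ and the naturality of $\dist$, $\dec$, $\passoc$ and $\sym$. With naturality secured, the hypothesis that each $\coh_{A,B}$ is an isomorphism already yields a natural isomorphism $\oc A\tensor\oc B\cong\oc(A\times B)$.

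Finally I would verify the three coherence laws. For the symmetry square and the two unit triangles (which relate $\coh_{1,A}$ and $\coh_{A,1}$ to $\lunit$, $\runit$ and the Cartesian unitors, using $\oc 1\cong I$) the pattern is identical: reduce each identity to an equality of anamorphisms and check the single coalgebra equation, unwinding the definition of $\kappa_{A,B}$ and invoking the sequoidal coherences $\passoc$, $\run$ and $\dec^0$. The main obstacle is the associativity (hexagon) law, comparing the two routes from $(\oc A\tensor\oc B)\tensor\oc C$ to $\oc((A\times B)\times C)$ and $\oc(A\times(B\times C))$: here $\coh$ occurs nested and composed with itself, so the common coalgebra must be assembled by hand from $\dist$, $\dec$, $\passoc$ and the associators of both $\tensor$ and $\times$, and one must check that both composites recognise it. I expect this to be the bulk of the work; once the correct coalgebra is identified, the verification is a lengthy but routine diagram chase driven entirely by the defining equation of $\coh$ and the coherence axioms of the sequoidal category.
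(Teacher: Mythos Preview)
Your outline has one structural gap: your proposed action on morphisms, $\oc f=\fcoal{\comp{(f\sequoid\id_{\oc A})}{\alpha_A}}$, requires the first-slot functoriality of $\sequoid$, and that is only available when $f$ lies in $\C_s$. You acknowledge this, and argue that the coherence diagrams only feed isomorphisms into $\oc$. But the theorem asserts a strong monoidal functor out of $(\C,\times,1)$, not $(\C_s,\times,1)$: you must define $\oc f$ for \emph{every} $f$ in $\C$, and your naturality step ``Given $f\from A\to A'$ and $g\from B\to B'$, \ldots\ $\comp{\coh_{A',B'}}{(\oc f\tensor\oc g)}$'' already invokes $\oc f$ and $\oc g$ for arbitrary $f,g$. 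With your definition, neither of these is available, and restricting naturality to $\C_s$ gives only a functor $(\C_s,\times,1)\to(\C,\tensor,I)$, which is strictly weaker than the statement.

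The paper's fix is precisely to use the hypothesis that $\coh_{A,A}$ is invertible \emph{before} defining the functor: one first sets $\mu_A=\comp{\coh_{A,A}^{-1}}{\fcoal{\comp{\dist^{-1}}{\comp{\Delta}{\alpha_A}}}}$ and $\der_A=\comp{\run_A}{\comp{(\id_A\sequoid *)}{\alpha_A}}$, and then defines
\[
  \oc f \;=\; \fcoal{\,\comp{\wk_{B,\oc A}}{\comp{(f\tensor\id_{\oc A})}{\comp{(\der_A\tensor\id_{\oc A})}{\mu_A}}}\,}
\]
for any $f\from A\to B$ in $\C$. This routes the use of $f$ through $\tensor$ rather than $\sequoid$, so no strictness is needed. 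A key lemma (Lemma~\ref{aFormulaForAlpha}) then shows $\alpha_A=\comp{\wk}{\comp{(\der_A\tensor\id)}{\mu_A}}$, from which functoriality follows; a second lemma (Lemma~\ref{ocStrict}) recovers your formula $\oc\sigma=\fcoal{\comp{(\sigma\sequoid\id)}{\alpha_A}}$ as a \emph{consequence} for strict $\sigma$, which is then used exactly as you suggest on the structural isomorphisms in the coherence diagrams. Apart from this point your plan matches the paper's closely, including the identification of the associativity hexagon as the heavy diagram chase.
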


We start off by defining a morphism $\mu\from\oc A\to\oc A\tensor\oc A$.  This will turn out to be the comultiplication for the cofree commutative comonoid over $A$.  First, we note that we have the following composite:
\[
  \oc A \xrightarrow{\alpha_A} A \sequoid \oc A \xrightarrow{\Delta} (A\sequoid\oc A)\times (A\sequoid\oc A)\xrightarrow{\dist\inv} (A\times A)\sequoid\oc A
  \]
where $\Delta$ is the diagonal map on the product.  We set $\sigma_A=\fcoal{\comp{\dist\inv}{\comp{\Delta}{\alpha_A}}}\from\oc A \to \oc (A\times A)$ and we set $\mu_A=\comp{\int_{A,A}\inv}{\sigma_A}\from \oc A \to \oc A \tensor \oc A$.  

We also define a morphism $\der_A\from \oc A \to A$.  Note that since $I$ is isomorphic to $1$, we have a unique morphism $*_A\from A\to I$ for each $A$.  We define $\der_A$ to be the composite
\[
  \begin{tikzcd}
    \oc A \arrow[r, "\alpha_A"]
      & A \sequoid \oc A \arrow[r, "\id_A \sequoid *_{\oc A}" yshift=0.3em]
        & A \sequoid I \arrow[r, "\run_A"]
          & A
  \end{tikzcd}
  \]

We define the action of $\oc$ on morphisms as follows: suppose that $f\from A \to B$ is a morphism in $\C$.  Then we have a composite
\[
  \begin{tikzcd}
    \oc A \arrow[r, "\mu_A"]
      & \oc A \tensor \oc A \arrow[r, "\der_A\tensor\id_{\oc A}"]
        & A \tensor \oc A \arrow[r, "f\tensor\id_{\oc A}"]
          & B \tensor \oc A \arrow[r, "\wk_{B,\oc A}"]
            & B \sequoid \oc A
  \end{tikzcd}
  \]
and so we may define $\oc f$ to be the anamorphism $\fcoal{\comp{\wk_{B,\oc A}}{\comp{f\tensor\id_{\oc A}}{\comp{\der_A\tensor\id_{\oc A}}{\mu_A}}}}\from \oc A \to \oc B$.  

\begin{proposition}\label{StrongMonoidalFunctorActual}
  $f\mapsto\oc f$ respects composition, so $\oc$ is a functor.  Moreover, $\oc$ is a strong symmetric monoidal functor from the Cartesian category $(\C, \times, 1)$ to the symmetric monoidal category $(\C, \tensor, I)$, witnessed by $\int$ and $\epsilon$, where $\epsilon$ is the anamorphism of the composite $I \xrightarrow{runit} I \tensor I \xrightarrow{*\tensor\id} 1\tensor I\xrightarrow{\wk}1\sequoid I$ (this composite is an isomorphism, so $\epsilon$ is as well).
\end{proposition}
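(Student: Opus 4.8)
The plan is to reduce every assertion to the couniversal property of the final coalgebra $(\oc C,\alpha_C)$: a morphism $h\from X\to\oc C$ equals the anamorphism $\fcoal{\gamma}$ of a coalgebra $\gamma\from X\to C\sequoid X$ precisely when $h$ is a coalgebra morphism, $\comp{\alpha_C}{h}=\comp{(\id_C\sequoid h)}{\gamma}$, and this morphism is unique. Hence to prove two maps $h_1,h_2\from X\to\oc C$ are equal it suffices to exhibit one coalgebra structure $\gamma$ on $X$ for which both are coalgebra morphisms. Every equation I need — functoriality, naturality of $\int$ and $\epsilon$, and the monoidal coherences — has codomain of the form $\oc(\blank)$, so each is an instance of this principle. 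Only two \emph{ground-level} identities fall outside it and must be checked directly from the definitions of $\mu$, $\der$ and $\int$ together with the sequoidal coherences ($\passoc$, $\run$, $\dist$, $\dec$ and the coherence square for $\wk$): the counit law $\comp{\wk_{A,\oc A}}{\comp{(\der_A\tensor\id_{\oc A})}{\mu_A}}=\alpha_A$ (whose codomain $A\sequoid\oc A$ is not an exponential) and the right-counit law $\comp{\runit_{\oc A}}{\comp{(\id_{\oc A}\tensor *_{\oc A})}{\mu_A}}=\id_{\oc A}$.

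For functoriality, preservation of identities is immediate from the first counit law, since $\oc\id_A=\fcoal{\comp{\wk_{A,\oc A}}{\comp{(\der_A\tensor\id)}{\mu_A}}}$ while $\id_{\oc A}=\fcoal{\alpha_A}$ by Lambek's Lemma, so the two coincide once their defining coalgebras do. Preservation of composition I would then derive by the master principle, after first establishing naturality of $\der$ and of $\mu$. Naturality of $\der$ is available before functoriality: unfolding $\comp{\der_B}{\oc f}$, replacing $\comp{\alpha_B}{\oc f}$ by $\comp{(\id_B\sequoid\oc f)}{\beta_f}$, and using naturality of $\wk$, the triangle $\comp{\run_B}{\wk_{B,I}}=\runit_B$ and uniqueness of $*$ collapses it to $\comp f{\der_A}$ exactly when the right-counit law holds. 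Given naturality of $\der$ and $\mu$, one computes $\comp{\alpha_C}{(\comp{\oc g}{\oc f})}$, rewrites via the coalgebra-morphism property of $\oc g$, pushes $\oc f$ through $\beta_g$ using naturality of $\mu$, $\der$ and $\wk$, and recognises the outcome as $\comp{(\id_C\sequoid(\comp{\oc g}{\oc f}))}{\beta_{\comp g f}}$; this exhibits $\comp{\oc g}{\oc f}$ as the anamorphism defining $\oc(\comp g f)$, hence the two are equal.

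For the monoidal structure, the unit $\epsilon\from I\to\oc 1$ is the anamorphism of an isomorphism into the final coalgebra for $1\sequoid\blank$; since $1\sequoid A\cong 1\cong I$ that functor is isomorphic to the constant functor $I$, so $I$ and $\oc 1$ both carry final coalgebras and $\epsilon$ is the comparison isomorphism. Naturality of $\int$ and the coherence diagrams (associativity, the two unit laws, and the symmetry square) are then equalities of maps into a final coalgebra $\oc(\blank)$, and I would discharge each by the master principle: for naturality of $\int$, push $\oc f\tensor\oc g$ through the composite $\kappa_{A,B}$ using naturality of $\dec$, $\passoc$ and $\sym$ and the coalgebra-morphism property of $\oc f$ and $\oc g$; for the coherences, compute the coalgebra carried by each side and match them, commuting the structural isomorphisms past the sequoid via $\dist$, $\dec$ and $\passoc$. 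Naturality of $\int$ then delivers naturality of $\mu=\comp{\int_{A,A}\inv}{\sigma_A}$, reducing it to $\comp{\sigma_B}{\oc f}=\comp{\oc(f\times f)}{\sigma_A}$ (a further coinduction), which is precisely the ingredient consumed in the composition law above.

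The hard part will not be any single diagram but the dependency structure among the naturality statements: the composition law needs naturality of $\mu$; $\mu$ is built from $\int$, so its naturality needs naturality of $\int$; yet the coalgebra carried by $\int$ passes through $\oc(f\times g)$, whose own coalgebra $\beta_{f\times g}$ contains $\mu_{A\times B}$. The plan is to break this apparent circularity by sequencing carefully — first the two ground-level counit laws, then naturality of $\der$, then naturality of $\int$ (arranging the computation so that the occurrences of $\mu_{A\times B}$ are eliminated using the counit laws and $\der$-naturality rather than $\mu$-naturality), and only afterwards naturality of $\mu$ and the composition law; if a residual $\mu$ cannot be removed this way I would instead prove naturality of $\int$ and $\mu$ \emph{simultaneously} by a single coinduction on the product coalgebra. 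Establishing the two counit laws themselves, by unfolding $\mu_A$, $\sigma_A$ and $\der_A$ and grinding through the sequoidal coherences, is where the bulk of the low-level calculation lives; once they and the ordering are in place, every remaining step is a uniform application of the couniversal property.
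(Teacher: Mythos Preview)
Your proposal is correct in outline and rests on the same coinductive master principle as the paper, but the paper's execution is more direct at exactly the three places where you anticipate trouble.

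First, your two ``ground-level'' counit laws do not lie as far outside the coinductive principle as you suggest. The paper's Lemma corresponding to your first law, $\comp{\wk_{A,\oc A}}{\comp{(\der_A\tensor\id)}{\mu_A}}=\alpha_A$, is proved by pasting the anamorphism squares for $\sigma_A$ and $\int_{A,A}$, projecting onto the two factors of the resulting map into $(A\sequoid(\oc A\tensor\oc A))\times(A\sequoid(\oc A\tensor\oc A))$, and using the first projection to reduce the claim to the \emph{left}-counit law $\xi_A=\comp{\lunit}{\comp{(*\tensor\id)}{\mu_A}}=\id_{\oc A}$. That identity has codomain $\oc A$ and is then discharged by coinduction, using the second projection to exhibit $\xi_A$ as a coalgebra morphism for $\alpha_A$. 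So the ``low-level grind'' is confined to setting up those two projection diagrams; the finishing step is another instance of your master principle.

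Second, the paper proves composition-preservation directly from this lemma: replace the $\alpha_B$ and $\alpha_C$ appearing in the anamorphism squares for $\oc f$ and $\oc g$ by the composite $\comp{\wk}{\comp{(\der\tensor\id)}{\mu}}$, use naturality of $\dec$ (and that it is an isomorphism) to glue them, and read off that $\comp{\oc g}{\oc f}$ is the anamorphism defining $\oc(\comp g f)$. No appeal to naturality of $\mu$ is needed, which removes the circular dependency you flag; your fallback of a simultaneous coinduction on $\int$ and $\mu$ is therefore unnecessary.

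Third, for the coherence diagrams the paper isolates an additional simplification you do not mention: when $\sigma\in\C_s$ (in particular for the structural isomorphisms $\assoc_\times$, $\sym_\times$, $\lunit_\times$, $\runit_\times$), the morphism $\oc\sigma$ satisfies the simpler square $\comp{\alpha_B}{\oc\sigma}=\comp{(\id_B\sequoid\oc\sigma)}{\comp{(\sigma\sequoid\id)}{\alpha_A}}$. With this in hand, each coherence (and naturality of $\int$) becomes a single anamorphism comparison against an explicit coalgebra built from $\kappa$, without having to unfold $\mu$ inside $\beta_{f\times g}$.

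In short: same principle, but the paper sequences it as (i) counit law via a coinductive subargument, (ii) functoriality directly from (i), (iii) strict-morphism lemma, (iv) coherences; whereas your route through naturality of $\der$, then $\int$, then $\mu$, then composition is longer and is where your own circularity worry originates.
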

\begin{proof}
  See Appendix.
\end{proof}

This completes the proof of Theorem \ref{StrongMonoidalFunctor}.

Since $\oc$ is a strong monoidal functor, it induces a functor $\CCom(\oc)$ from the category $\CCom(\C,\times,1)$ of comonoids over $(\C, \times, 1)$ to the category $\CCom(\C,\tensor,I)$ of comonoids over $(\C, \tensor, I)$ making the following diagram commute:
\[
  \begin{tikzcd}
    \CCom(\C,\times,1) \arrow[r, "\F"] \arrow[d, "\CCom(\oc)"']
      & (\C, \times, 1) \arrow[d, "\oc"] \\
    \CCom(\C, \tensor, I) \arrow[r, "\F"]
      & (\C, \tensor, I)
  \end{tikzcd}
  \]
where $\F$ is the forgetful functor.

Let $A$ be an object of $\C$.  Since $(\C, \times, 1)$ is Cartesian, the diagonal map $\Delta\from A\to A\times A$ is the cofree commutative comonoid over $A$ in $(\C,\times,1)$.  

\begin{proposition}\label{itsMu}
  $\CCom(\oc)\left(A\xrightarrow{\Delta}A\times A\right)$ has comultiplication given by $\mu_A\from\oc A \to \oc A\tensor\oc A$ and counit given by the unique morphism $\eta_A\from \oc A \to I$.
\end{proposition}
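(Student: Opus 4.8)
The plan is to invoke the standard description of how a strong symmetric monoidal functor transports comonoids: if such a functor has invertible coherence maps $m$ and $m_0$, then the image under $\CCom(\oc)$ of a comonoid $(C,\delta,\varepsilon)$ carries comultiplication $\comp{m\inv}{\oc\delta}$ and counit $\comp{m_0\inv}{\oc\varepsilon}$. Here $C=A$, the comultiplication is the diagonal $\Delta_A\from A\to A\times A$ and the counit is the terminal map $*_A\from A\to 1$, while the invertible coherence maps are $\int$ and $\epsilon$ supplied by Proposition~\ref{StrongMonoidalFunctorActual}. So the two things to establish are that $\comp{\int_{A,A}\inv}{\oc\Delta_A}$ coincides with $\mu_A$ and that $\comp{\epsilon\inv}{\oc *_A}$ coincides with $\eta_A$.

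The counit is immediate. Since $\C$ is decomposable, $I$ is terminal, so there is exactly one morphism $\oc A\to I$; as $\comp{\epsilon\inv}{\oc *_A}$ is such a morphism, it must coincide with $\eta_A$.

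For the comultiplication, I would use that $\mu_A=\comp{\int_{A,A}\inv}{\sigma_A}$ and that $\int_{A,A}$ is invertible, so it suffices to prove $\oc\Delta_A=\sigma_A$. Both are morphisms into the final coalgebra $\oc(A\times A)$, hence by uniqueness of anamorphisms it is enough to check that the two coalgebra structures on $\oc A$ whose anamorphisms they are agree. By definition $\sigma_A=\fcoal{\comp{\dist\inv}{\comp{\Delta}{\alpha_A}}}$, whereas unfolding the definition of $\oc$ on morphisms gives $\oc\Delta_A=\fcoal{g}$ with $g=\comp{\wk_{A\times A,\oc A}}{\comp{\Delta_A\tensor\id_{\oc A}}{\comp{\der_A\tensor\id_{\oc A}}{\mu_A}}}$. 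The claim thus reduces to the single equation $g=\comp{\dist\inv}{\comp{\Delta}{\alpha_A}}$.

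The step I expect to be the main obstacle is the identity $\comp{\wk_{A,\oc A}}{\comp{\der_A\tensor\id_{\oc A}}{\mu_A}}=\alpha_A$, but pleasantly it is already available: it is exactly the statement $\oc\id_A=\id_{\oc A}$, part of the functoriality of $\oc$ proved in Proposition~\ref{StrongMonoidalFunctorActual}. Indeed $\oc\id_A=\fcoal{\comp{\wk_{A,\oc A}}{\comp{\der_A\tensor\id_{\oc A}}{\mu_A}}}$ is the unique $(A\sequoid\blank)$-coalgebra morphism from $(\oc A,\comp{\wk_{A,\oc A}}{\comp{\der_A\tensor\id_{\oc A}}{\mu_A}})$ to $(\oc A,\alpha_A)$; since that morphism is $\id_{\oc A}$, the defining coalgebra square collapses to $\comp{\wk_{A,\oc A}}{\comp{\der_A\tensor\id_{\oc A}}{\mu_A}}=\alpha_A$. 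Granting this, I would finish by naturality and distributivity. Naturality of $\wk$ in its first argument applied to $\Delta_A$ (which lies in $\C_s$ because $\C_s$ has all products) gives $\comp{\wk_{A\times A,\oc A}}{(\Delta_A\tensor\id_{\oc A})}=\comp{(\Delta_A\sequoid\id_{\oc A})}{\wk_{A,\oc A}}$, so that $g=\comp{(\Delta_A\sequoid\id_{\oc A})}{\alpha_A}$; and since $\comp{\dist}{(\Delta_A\sequoid\id_{\oc A})}=\langle(\comp{\pr_1}{\Delta_A})\sequoid\id_{\oc A},(\comp{\pr_2}{\Delta_A})\sequoid\id_{\oc A}\rangle=\langle\id,\id\rangle=\Delta$, we obtain $\Delta_A\sequoid\id_{\oc A}=\comp{\dist\inv}{\Delta}$ and hence $g=\comp{\dist\inv}{\comp{\Delta}{\alpha_A}}$, as required. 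This yields $\oc\Delta_A=\sigma_A$ and therefore $\comp{\int_{A,A}\inv}{\oc\Delta_A}=\mu_A$, completing the identification.
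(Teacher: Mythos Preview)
Your proof is correct and follows essentially the same route as the paper: both arguments identify the transported comultiplication as $\comp{\int_{A,A}\inv}{\oc\Delta_A}$, reduce to showing $\oc\Delta_A=\sigma_A$, and dispatch the counit by terminality of $I$. The only difference is packaging: the paper cites Lemma~\ref{ocStrict} (the simplified anamorphism description of $\oc\sigma$ for strict $\sigma$) to obtain $\oc\Delta_A=\fcoal{\comp{(\Delta_A\sequoid\id)}{\alpha_A}}$ directly, whereas you inline that argument --- and your key identity $\comp{\wk_{A,\oc A}}{\comp{(\der_A\tensor\id_{\oc A})}{\mu_A}}=\alpha_A$ is precisely Lemma~\ref{aFormulaForAlpha}, which you recover somewhat indirectly from $\oc\id_A=\id_{\oc A}$ (itself proved in the paper \emph{via} Lemma~\ref{aFormulaForAlpha}).
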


In particular, this proves that the comultiplication $\mu_A$ is associative and that the counit $\eta_A$ is a valid counit for $\mu_A$.

We can now state our second main result from this section.

\begin{theorem}
  \label{Coalgebra__CoCoCo}
  Let $(\C,\C_s,J,\wk)$ be a sequoidal category satisfying all the conditions from Theorem \ref{StrongMonoidalFunctor}.  Let $A$ be an object of $\C$ (equivalently, of $\C_s$).  Then $\oc A$, together with the comultiplication $\mu_A$ and counit $\eta_A$, is the cofree commutative comonoid over $A$.
\end{theorem}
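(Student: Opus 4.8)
The plan is to verify directly that $\oc A$, equipped with $\mu_A$ and $\eta_A$, satisfies the universal property of the cofree commutative comonoid with respect to the dereliction $\der_A\from\oc A\to A$. Concretely, I would show that for every commutative comonoid $(C,\delta_C,\epsilon_C)$ in $(\C,\tensor,I)$ and every morphism $f\from C\to A$ in $\C$, there is a \emph{unique} comonoid morphism $f^\dagger\from C\to\oc A$ with $\der_A\circ f^\dagger=f$; equivalently, that $g\mapsto\der_A\circ g$ is a bijection from comonoid morphisms $C\to\oc A$ onto $\C(C,A)$. The fact that $(\oc A,\mu_A,\eta_A)$ is a \emph{commutative} comonoid comes for free: by Proposition~\ref{itsMu} it equals $\CCom(\oc)\left(A\xrightarrow{\Delta}A\times A\right)$, and since $\oc$ is strong \emph{symmetric} monoidal (Theorem~\ref{StrongMonoidalFunctor}), $\CCom(\oc)$ carries the cocommutative diagonal comonoid to a commutative one. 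So only the universal property is at issue.

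For existence I would take the evident coalgebra that reads $f$ off the state and recurses, namely $\hat c=\wk_{A,C}\circ(f\tensor\id_C)\circ\delta_C\from C\to A\sequoid C$, and set $f^\dagger=\fcoal{\hat c}$. That $\der_A\circ f^\dagger=f$ follows by expanding $\der_A=\run_A\circ(\id_A\sequoid {*}_{\oc A})\circ\alpha_A$, using the anamorphism square for $f^\dagger$, naturality of $\wk$ in its second argument, the unit coherence $\run_A\circ\wk_{A,I}=\runit_A$ of the sequoidal structure, and finally the counit law of $C$, which collapses $(\id_C\tensor\epsilon_C)\circ\delta_C$ to $\runit_C\inv$. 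The counit condition $\eta_A\circ f^\dagger=\epsilon_C$ is immediate, since both are the unique morphism $C\to I$ ($I\cong 1$ being terminal).

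The substantial step is showing that $f^\dagger$ respects comultiplication, i.e.\ $\mu_A\circ f^\dagger=(f^\dagger\tensor f^\dagger)\circ\delta_C$. Since $\mu_A=\int_{A,A}\inv\circ\sigma_A$ and $\int_{A,A}$ is an isomorphism, this is equivalent to the equality of two morphisms $C\to\oc(A\times A)$, namely $\sigma_A\circ f^\dagger$ and $\int_{A,A}\circ(f^\dagger\tensor f^\dagger)\circ\delta_C$. I would prove this by finality of $\oc(A\times A)$, exhibiting both as the anamorphism of the single coalgebra $\bar c=\dist\inv\circ\Delta\circ\hat c\from C\to(A\times A)\sequoid C$. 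For the first map this is a short diagram chase from the definition of $\sigma_A$, using naturality of the diagonal $\Delta$ and of $\dist$ together with the fact that $f^\dagger$ is a coalgebra morphism. For the second map one must unwind the definition of $\kappa_{A,A}$ (the $\dec$, the two $\passoc\inv$, and the braiding) after $f^\dagger\tensor f^\dagger$, and here the coassociativity and, crucially, the cocommutativity of $\delta_C$ are exactly what is needed to identify the resulting coalgebra with $\bar c$: both tensor factors draw on the state of $C$ through $\delta_C$, and matching this interleaved reading with the single ``diagonal'' reading of $\sigma_A$ forces the commutative comonoid laws. I expect this identification to be the main obstacle and the most calculation-heavy part of the argument.

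Finally, uniqueness follows again from finality. Given any comonoid morphism $g\from C\to\oc A$ with $\der_A\circ g=f$, I would check that $g$ is a coalgebra morphism from $(C,\hat c)$ to $(\oc A,\alpha_A)$. Rewriting $\alpha_A\circ g$ via naturality of $\wk$, the hypothesis $\der_A\circ g=f$, and the comonoid-morphism equation $\mu_A\circ g=(g\tensor g)\circ\delta_C$, the required identity reduces to $\wk_{A,\oc A}\circ(\der_A\tensor\id_{\oc A})\circ\mu_A=\alpha_A$, which is precisely the statement $\oc(\id_A)=\id_{\oc A}$ unwound through the definition of $\oc$ on morphisms, and hence holds by the functoriality established in Proposition~\ref{StrongMonoidalFunctorActual}. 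By finality $g=\fcoal{\hat c}=f^\dagger$, which completes the universal property and the proof.
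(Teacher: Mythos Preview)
Your proposal is correct and follows essentially the same route as the paper: you define $f^\dagger$ as the anamorphism of $\wk_{A,C}\circ(f\tensor\id_C)\circ\delta_C$, establish the comultiplication square by showing both $\sigma_A\circ f^\dagger$ and $\int_{A,A}\circ(f^\dagger\tensor f^\dagger)\circ\delta_C$ are the anamorphism of $\dist\inv\circ\Delta\circ\hat c$ (using coassociativity and cocommutativity of $\delta_C$ in precisely the way the paper does), verify the dereliction triangle via the counit law and the $\wk$/$\run$ coherence, and obtain uniqueness from finality together with the identity $\alpha_A=\wk_{A,\oc A}\circ(\der_A\tensor\id)\circ\mu_A$. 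The only cosmetic difference is that the paper isolates this last identity as a separate lemma (Lemma~\ref{aFormulaForAlpha}) and invokes it directly, whereas you recover it from $\oc(\id_A)=\id_{\oc A}$; since the paper's proof of functoriality already rests on that lemma, the two presentations are equivalent.
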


\subsection{The Sequoidal Exponential as a Bifree Algebra}
Observe that in our category of games, $(\oc A,\alpha)$ is in fact a \emph{bifree algebra} for $A\sequoid\blank$ --- the isomorphism $\alpha^{-1}: A \sequoid \oc A \rightarrow \oc A$ is an initial algebra for  $A \sequoid \blank$.  We may show that in such cases, the condition that $\oc $ is strong monoidal --- and thus the cofree exponential ---  always holds\footnote{Without requiring our sequoidally decomposable category to have finite products  we may equip each object $\oc A$ with the structure of a comonoid by defining:   
$\mu:\oc A \rightarrow \oc A  \otimes \oc A$
 to be the catamorphism of the $A \sequoid \blank$ algebra:
\[
  A \sequoid (\oc A \otimes \oc A) 
\to A \sequoid (\oc A \otimes \oc A) \times  A \sequoid (\oc A \otimes \oc A) \cong (A \sequoid \oc A) \sequoid \oc A \times  (A \sequoid \oc A) \sequoid \oc A \cong (\oc A \sequoid \oc A) \times (\oc A \sequoid \oc A) \cong (\oc A \otimes \oc A)
\]
This satisfies the further requirements of a \emph{linear category} in the sense of \cite{Bierman}, although it does not appear to be possible to show that it is the cofree commutative comonoid.} :
we may define an inverse to $\int:\oc A \otimes \oc B \rightarrow \oc  (A \times B)$  as the \emph{catamorphism} of the  $A \sequoid\blank$-algebra:
\[
  (\comp\dist{\kappa_{A,B}})\inv\from (A\times B) \sequoid (\oc A \tensor \oc B) \to \oc A \tensor \oc B
  \]

It is not necessary for the final $A \sequoid\_$-coalgebra to be bifree for the exponential to be strong monoidal and thus the cofree commutative comonoid. An example is provided by  the category $\W$ of win-games and winning strategies, which is \emph{sequoidal closed} (the restriction of the functor  $A \multimap\blank$ to strict strategies is right adjoint to $\blank \sequoid A$, and inclusion sends this adjunction to the usual monoidal closure). To show that the  final $A \sequoid\blank$-coalgebra in this category is not bifree, it suffices to observe that from such an algebra, we may derive a  \emph{fixed point} operator $\fix_A:\C(A,A) \rightarrow \C(I,A)$ for each $A$, such that $\comp{f}{\fix_A(f)} = \fix_A(f)$.      
\begin{proposition}Suppose $\C$ is sequoidal closed and decomposable, and $(\oc A,\alpha)$ is a bifree $A \sequoid\_$-algebra. Then we may define a fixed point operator on $A$.  
\end{proposition}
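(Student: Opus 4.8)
The plan is to exploit the two halves of bifreeness separately: \emph{finality} (anamorphisms) to manufacture an infinite stream of copies of $f$, and \emph{initiality} (catamorphisms) to ``run'' that stream, feeding each application's result into the next. The sequoidal closure is exactly what makes the feedback expressible: since neither $\tensor$ nor $\sequoid$ admits a diagonal, and since $f$ need not be strict, there is no way to build the recursion on $A$ directly, but we can build it over the internal hom. So I would write $D := A\implies A$, let $\mathrm{ev}\from D\sequoid A\to A$ be the counit of the adjunction $\blank\sequoid A\dashv A\implies\blank$, and let $\widehat f\from I\to D$ be the name of $f$ under the induced monoidal closure, recording for later that $\mathrm{ev}$ is related to the ordinary evaluation by $\mathrm{ev}^{\tensor}=\comp{\mathrm{ev}}{\wk_{D,A}}$.

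First I would produce the constant stream. A point of $A$ would be needed to give $I$ a coalgebra structure for $A\sequoid\blank$ directly, so instead I work over $D$: the map $\comp{\run_D\inv}{\widehat f}\from I\to D\sequoid I$ is a coalgebra for $D\sequoid\blank$, and since $(\oc D,\alpha_D)$ is also a final coalgebra its anamorphism $\mathrm{const}_f := \fcoal{\comp{\run_D\inv}{\widehat f}}\from I\to\oc D$ is the stream $(f,f,\dots)$. Next I would use initiality: because $(\oc D,\alpha_D)$ is a bifree algebra, $\mathrm{ev}$ is a $(D\sequoid\blank)$-algebra and so admits a unique catamorphism $\mathrm{run}_f\from\oc D\to A$ out of the initial algebra $(\oc D,\alpha_D\inv)$, characterised by $\mathrm{run}_f=\comp{\mathrm{ev}}{\comp{(\id_D\sequoid\mathrm{run}_f)}{\alpha_D}}$. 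I then set $\fix_A(f):=\comp{\mathrm{run}_f}{\mathrm{const}_f}\from I\to A$. This catamorphism is the one essential use of the initial-algebra half of bifreeness, and it is precisely the ingredient that $\W$ lacks --- which is why the same recipe cannot yield fixed points there.

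It remains to verify $\comp{f}{\fix_A(f)}=\fix_A(f)$. Chaining the characterisation of $\mathrm{run}_f$, the coalgebra-morphism equation $\comp{\alpha_D}{\mathrm{const}_f}=\comp{(\id_D\sequoid\mathrm{const}_f)}{(\comp{\run_D\inv}{\widehat f})}$ for $\mathrm{const}_f$, and functoriality of $D\sequoid\blank$, the definition of $\fix_A(f)$ collapses to $\fix_A(f)=\comp{\mathrm{ev}}{\comp{(\id_D\sequoid\fix_A(f))}{(\comp{\run_D\inv}{\widehat f})}}$. Thus the fixed-point equation reduces to the assertion that applying a named function $\widehat f$ to the already-constructed argument $\fix_A(f)$ returns $\comp{f}{\fix_A(f)}$.

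Establishing this last identity is the main obstacle, and it is a pure coherence calculation rather than anything coinductive. The plan is to trade the sequoidal evaluation for the monoidal one: the coherence triangle $\runit_D=\comp{J(\run_D)}{\wk_{D,I}}$ from the definition of a sequoidal category gives $\run_D\inv=\comp{\wk_{D,I}}{\runit_D\inv}$, and combining this with naturality of $\wk$ in its second argument and $\mathrm{ev}^{\tensor}=\comp{\mathrm{ev}}{\wk_{D,A}}$ rewrites $\comp{\mathrm{ev}}{\comp{(\id_D\sequoid\fix_A(f))}{\run_D\inv}}$ as $\comp{\mathrm{ev}^{\tensor}}{\comp{(\id_D\tensor\fix_A(f))}{\runit_D\inv}}$. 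Precomposing with $\widehat f$, pushing the tensor through the unitors (naturality of $\lunit$ and the unit coherence $\comp{\lunit_I}{u}=\id_I$ for the iso $u\from I\toisom I\tensor I$), the defining property $\comp{\mathrm{ev}^{\tensor}}{(\widehat f\tensor\id_A)}=\comp{f}{\lunit_A}$ of the name of $f$ turns the expression into $\comp{f}{\fix_A(f)}$, as required. The only points needing care are bookkeeping the variances and unit coherences correctly, and checking at the outset that the hypotheses supply a bifree algebra for the functor $D\sequoid\blank$ (and not merely for $A\sequoid\blank$), so that both the anamorphism and the catamorphism above are legitimate.
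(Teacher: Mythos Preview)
Your proposal is correct and follows essentially the same line as the paper's proof. The paper defines $\Phi_A\from\oc(A\multimap A)\to A$ as the catamorphism of the counit $\epsilon_{A,A}\from(A\multimap A)\sequoid A\to A$ (your $\mathrm{run}_f$), and then sets $\fix_A(f)=\comp{\Phi_A}{\oc\Lambda(f)}$, where $\Lambda(f)\from I\to A\multimap A$ is the name of $f$; your $\mathrm{const}_f$ plays the role of $\oc\Lambda(f)$ (precomposed with the isomorphism $I\cong\oc I$), built directly as an anamorphism rather than via the functorial action of $\oc$. The paper does not spell out the verification of $\comp{f}{\fix_A(f)}=\fix_A(f)$, so your unwinding via the coherence triangle for $\run$ and the defining property of the name is additional detail rather than a different argument. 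Your closing caveat is well taken: both the paper and your argument actually use bifreeness of $\oc(A\multimap A)$ for the functor $(A\multimap A)\sequoid\blank$, so the hypothesis should be read as holding for all objects (or at least for $A\multimap A$), not just the single object $A$.
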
    
\begin{proof}For any $A$, let  $\Phi_A:\oc (A \multimap A) \rightarrow A$ be the catamorphism of the counit to the adjunction $A \sequoid\blank \dashv A \multimap$,  $\epsilon_{A,A}: (A \multimap A) \sequoid A \rightarrow A$, which is a  $(A \multimap A) \sequoid \blank$-algebra. For any morphism $f:A \rightarrow A$ we may define $\fix_A(f) = \oc\Lambda(f);\Phi_A$, where $\Lambda(f):I \rightarrow (A \multimap A)$ is the ``name'' of $f$.    
\end{proof}
As one would expect, it is not possible to define a fixed point operator on the category of games and winning strategies ---  for example, if $\bot$ is the game with a single move then the hom-set $\C(I,\bot)$ is empty and hence there can be no morphism $\fix_\bot(\id_\bot)$. So the final  $A \sequoid \blank$-coalgebra is not bifree in this case. 

\section{Transfinite Games}

Of the conditions that we used to construct the cofree commutative comonoid in sequoidal categories, the requirement that $\int_{A,B}$ be an isomorphism stands out as the least satisfactory.  All the other conditions are `finitary', and relate directly to the connectives we have introduced, whereas the morphism $\int_{A,B}$ can only be constructed using the final coalgebra property for the exponential connective $\oc$.  For this reason, we might wonder whether we can do without the condition that $\int_{A,B}$ be an isomorphism.  In this section, we shall give a negative answer to that question: we shall construct a distributive and decomposable sequoidal closed category with final coalgebras $\oc A$ for all functors of the form $A\sequoid\blank$, and shall show that $\oc A$ does not have a natural comonoid structure.  In doing this, we hope to shed some light upon alternative algebraic or coalgebraic constructions for the cofree commutative comonoid that work in a purely `finitary' manner.

\begin{warning}
    Although we shall refer to the final coalgebra for $A\sequoid\blank$ as $\oc A$ to avoid introducing new notation, this object will not be the carrier for a linear exponential comonad (i.e., a model of the exponential from Linear Logic) in the category.  
\end{warning}

Our sequoidal category will be closely modelled upon the category of games we have just considered: the objects will be games, with the modification that sequences of moves may now have transfinite length.  This is a natural construction, occurring in the study of determinacy by Mycielski \cite{mycielskiChoice}, Blass \cite{blassChoice} and Weiss \cite{longgamesbook}.  Transfinite games were used by Berardi and de'Liguoro to give a characterization of total functionals \cite{BerardiTransfiniteGames}, and they appear to the authors to be present in the semantic context in the work of Roscoe \cite{RoscoeCspInfinite}, Levy \cite{LevyGsInfinite} and Laird \cite{LairdOrdinalGames}.  

The general idea is as follows: we will show that the definition of the final coalgebra for the sequoid functor in a category of transfinite games is largely unchanged from the definition in the category of games with finite-length plays: $\oc A$ is the game formed from a countably infinite number of copies of $A$, indexed by $\omega$, with the proviso that player $O$ must open them in order.  We observe that the copycat strategy $\int_{A,B}\from\oc A\tensor\oc B\to \oc(A\times B)$ is not an isomorphism, and that we cannot construct the comultiplication $\oc A \to \oc A \tensor \oc A$ in a sensible way.  Moreover, we cannot construct the comonad $\oc A \to \oc \oc A$, so $\oc$ does not give us a model of linear logic in even the most general sense.  In all three cases, the reason why the construction fails is that we might run out of copies of the game $A$ (or $B$) on the left hand side before we have run out of copies on the right hand side.  In the finite-plays setting, it is impossible to run out of copies of a subgame, because there are infinitely many copies, so it is impossible to play in all of them in a finite-length play.  In the transfinite setting, however, we cannot guarantee this: consider, for example, a position in $\oc A_0 \implies \oc A_1 \tensor \oc A_2$ (with indices given so we can refer to the different copies of $A$) in which player $O$ has opened all the copies of $A$ in $\oc A_1$.  Since player $P$ is playing by copycat, she must have opened all of the copies of $A$ in $\oc A_0$.  If, at time $\omega+1$, player $O$ now plays in $\oc A_2$, player $P$ will have no reply to him.

The `correct' definition of $\oc A$ in the transfinite game category is one in which there is an unlimited number of copies of $A$ to open (rather than $\omega$-many), but this is not the final coalgebra for the functor $A\sequoid\blank$.  

\subsection{Transfinite Games}

We give a brief summary of the construction of the category of transfinite games.  

We shall fix an additively indecomposable ordinal $\alpha = \omega^\beta$ throughout, which will be a bound on the ordinal length of positions in our game.  So, for example, the original category of games is the case $\alpha=\omega$.  If $X$ is a set, we write $X^{*<\alpha}$ for the set of transfinite sequences of elements of $X$ of length less than $\alpha$.

\begin{definition}
  A (completely negative) \emph{game} or a \emph{game over $\alpha$} or an \emph{$\alpha$-game} is given by a forest (i.e., a prefix-closed set) $P_A$ of alternating transfinite sequences of $O$-moves and $P$-moves of length less than $\alpha$ and a function $\zeta_A\from P_A\to\OP$ that designates each position as an $O$-position or a $P$-position.  We require that $sa$ is a $P$-position if $a$ is a $P$-move and an $O$-position if $a$ is an $O$-move, so $\zeta_A$ only gives us information about plays of limiting length.  
  
  $P_A$ is subject to a continuity condition: if $s$ is a sequence of moves whose length is a limit ordinal and $t\in P_A$ for all proper prefixes $t\pprefix s$, then $s\in P_A$.  

  We say that a game $A$ is \emph{completely negative} if every position of limiting length is a $P$-position.  
\end{definition}

\begin{definition}
  A \emph{strategy} for an $\alpha$-game $A$ is a non-empty prefix-closed subset of $P_A$ that satisfies closure under $O$-replies and the determinism condition, just as for finite strategies.  
\end{definition}

We can form the product, tensor product, sequoid, exponential and linear implication in the same way that we do for finite games.  The $\zeta$-functions are extended to connectives according to the propositional formulae given for win-games above.  If $A$ and $B$ are completely negative, then so are $A\times B$, $A\tensor B$, $A\sequoid B$ and $\oc A$, but $A\implies B$ might not be completely negative.  

Given games $A,B,C$, strategies $\sigma$ for $A\implies B$ and $\tau$ for $B\implies C$, we may compose $\sigma$ and $\tau$ in the same way that we compose strategies for finite games (but we have to use the fact that $\alpha$ is additively indecomposable so that we can ensure that the interleaving of sequences of length less than $\alpha$ still has length less than $\alpha$).  

We can show that this composition is associative and moreover that we obtain a distributive and decomposable sequoidal category whose objects are completely negative games.  We call this category $\G(\alpha)$ and call the corresponding strict subcategory $\G_s(\alpha)$.  The hardest part of this is showing that the category is monoidal closed, because the linear implication of completely negative games is not necessarily completely negative.  It turns out that there is always a `minimal' completely negative game extending $A\implies B$, which gives us the monoidal closed structure, but we will not discuss this here, since monoidal closedness is not particularly important to any of our constructions.

\subsection{The final sequence for the sequoidal exponential}

We now want to show that $\G(\alpha)$ has a final coalgebra for each functor $A\sequoid\blank$, given by the transfinite game $\oc A$, which is defined as follows:
\begin{itemize}
  \item $M_{\oc A} = M_A \times \omega$
  \item $\lambda_{\oc A} = \lambda_A \circ \pr_1$
\end{itemize}
We define $\oc P_A$ to be the set of all sequences $s\in M_{\oc A}^{*<\alpha}$ such that $s\vert_n\in P_A$ for all $n$ and such that every move in $A_{n+1}$ occurs later than some move in $A_n$.  Then we define $\zeta_{\oc A}\from\oc P_A\to\OP$ by
\[
  \zeta_{\oc A}(s) = \bigwedge_{n\in\omega}\zeta_A(s\vert_n)
  \]
In other words, $\zeta_{\oc A}(s)=P$ if and only if $\zeta_A(s\vert_n)=P$ for all $n$.  We define $P_{\oc A}$ to be the set of all sequences in $\oc P_A$ that are alternating with respect to $\zeta_{\oc A}$.  

There is a natural copycat strategy $\alpha_A\from \oc A \to A \sequoid \oc A$, just as in the finite plays case.  We want to show that this is the final coalgebra for $A\sequoid\blank$.  The proof for the finite case found in \cite{martinsthesis} will not work in this case, since it implicitly uses the fact that $\oc A$ is an MTT-exponential.  In the transfinite categories, this is no longer the case.  

While it is possible to prove that $\alpha_A\from \oc A\to A\sequoid \oc A$ is the final coalgebra for $A\sequoid\blank$ directly, we shall instead give a proof by extending the MTT sequence to the full final sequence.  We shall give a complete classification of the games $A^{\sequoid\gamma}$ and use it to show that the final sequence for $A\sequoid\blank$ must stabilize at $\oc A$.  

\begin{definition}
  Let $s\in\omega^{*<\alpha}$ be any transfinite sequence of natural numbers.  We define the \emph{derivative} $\Delta s$ of $s$ to be the sequence given by removing all instances of $0$ from $s$ and subtracting $1$ from all other terms.  In other words, if $s\from\gamma\to\omega$, for $\gamma<\alpha$, then we have:
  \[
    \Delta s = s\inv(\omega\setminus\{0\})\xrightarrow{s}\omega\setminus\{0\}\xrightarrow{-1}\omega
    \]
  (where $s\inv(\omega\setminus\{0\})$ carries the induced order).  We now define predicates $\blank\le\gamma$ on sequences $s\in \omega^{*<\alpha}$ as follows:
  \begin{itemize}
    \item $\emptyplay\le 0$
    \item If $\Delta s\le\gamma$, then $s\le\gamma+1$
    \item If $\mu$ is a limit ordinal and $s\in\omega^{*<\alpha}$ is such that for all successor-length prefixes $t\prefix s$ we have $t\le\gamma$ for some $\gamma<\mu$, then $s\le\mu$.  In other words, $\{s\in \omega^{*<\alpha}\suchthat s\le\mu\}$ is the limit-closure of the union of the sets $\{s\in\omega^{*<\alpha}\suchthat s\le\gamma\}$ for $\gamma<\mu$.
  \end{itemize}
\end{definition}

It is easy to prove some basic results about these predicates:
\begin{proposition}\label{basicResultsAboutRank}
  i) If $s\le\gamma$ and $t$ is any subsequence of $s$ (not necessarily an initial prefix), then $t\le\gamma$.
  
  ii) If $s\le\gamma$, then $\Delta s\le\gamma$

  iii) If $s\le\gamma$ and $\gamma\le\delta$, then $s\le\delta$

  iv) If $s\in\omega^{*<\alpha}$ has length $\mu$, where $\mu$ is a limit ordinal, then $s\le\mu$.  If $s$ has length $\mu+n$ for some $n\in\omega$, then $s\le\mu+\omega$.  In particular, $s\le\alpha$ for all $s\in \omega^{*<\alpha}$.  
\end{proposition}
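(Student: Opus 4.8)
The plan is to prove the four parts in the order stated, since each later part uses the earlier ones. Parts (i) and (ii) are each a direct transfinite induction on $\gamma$; (iii) is an induction on $\delta$ that invokes (i) and (ii); and (iv) is an induction on the \emph{length} of $s$ that invokes (iii) together with a combinatorial analysis of how $\Delta$ acts on a sequence whose length is a limit ordinal plus a finite tail. Throughout I write $\mathrm{rank}(s)$ for the least $\gamma$ with $s\le\gamma$. The engine for (i) and (ii) is one observation about how $\Delta$ interacts with passing to parts of a sequence: if $t$ is a subsequence of $s$ then $\Delta t$ is a subsequence of $\Delta s$ (the terms of $\Delta t$ are exactly the decrements of the nonzero terms of $t$, sitting among those of $\Delta s$ in the same order), and every successor-length prefix of $\Delta s$ has the form $\Delta t$ for some successor-length prefix $t$ of $s$ (cut $s$ off immediately after the position contributing the last term). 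For (i) I induct on $\gamma$: the base case is immediate since $s\le 0$ forces $s=\emptyplay$; at $\gamma=\delta+1$, from $\Delta s\le\delta$ and the fact that $\Delta t$ is a subsequence of $\Delta s$ the inductive hypothesis gives $\Delta t\le\delta$, hence $t\le\delta+1$; at a limit $\mu$, each successor-length prefix $u$ of $t$ is a subsequence of some successor-length prefix $s'\prefix s$ with $s'\le\gamma'<\mu$, so the hypothesis gives $u\le\gamma'<\mu$ and therefore $t\le\mu$. Part (ii) uses the same induction on $\gamma$: the successor case reduces $\Delta s\le\delta+1$ to $\Delta(\Delta s)\le\delta$ by applying the hypothesis to $\Delta s$, and the limit case uses the prefix correspondence above with the hypothesis at the ranks $\gamma_t<\mu$.

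For (iii) I induct on $\delta$. The successor step is the only place $\Delta$ is needed: if $\gamma\le\delta'$ the hypothesis gives $s\le\delta'$, whence $\Delta s\le\delta'$ by (ii) and so $s\le\delta'+1$. At a limit $\mu$ with $\gamma<\mu$, every prefix of $s$ is a subsequence, so by (i) each successor-length prefix has rank $\le\gamma<\mu$, giving $s\le\mu$.

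Part (iv) is the substantial one, and I would prove both clauses simultaneously by induction on $\ell=|s|$, strengthening the successor clause to the \emph{strict} form: if $\ell=\mu+n$ with $n\ge 1$ then $s\le\mu+m$ for some finite $m$, i.e.\ $\mathrm{rank}(s)<\mu+\omega$; the stated bound $s\le\mu+\omega$ then follows from (iii). When $\ell$ is a limit $\mu$, I use the limit clause: each successor-length prefix $t$ is proper, so $|t|<\mu$, and since $\mu$ is a limit this forces $|t|+\omega\le\mu$; the inductive hypothesis then bounds $\mathrm{rank}(t)$ strictly below $\mu$, giving $s\le\mu$. When $\ell=\mu+n$ with $n\ge 1$, the key lemma is that iterating $\Delta$ clears the finite tail in finitely many steps: any term originally at a position $p<\mu$ is preceded by at most $p<\mu$ nonzero terms, so it remains at a position $<\mu$ after one application of $\Delta$; hence the only terms that can occupy positions $\ge\mu$ are descendants of the original tail, and each of these finitely many terms is decremented by one at each step and deleted once it reaches $0$, so $|\Delta^M s|\le\mu$ for some finite $M$. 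Applying the inductive hypothesis to the strictly shorter $\Delta^M s$ gives $\mathrm{rank}(\Delta^M s)\le\mu$, and iterating the successor clause $M$ times yields $s\le\mu+M<\mu+\omega$. Finally, $s\le\alpha$ follows: writing $|s|=\mu+n<\alpha$ we have $\mu<\alpha$, and additive indecomposability of $\alpha=\omega^\beta$ gives $\mu+\omega\le\alpha$, so $\mathrm{rank}(s)<\mu+\omega\le\alpha$ and (iii) gives $s\le\alpha$.

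I expect the main obstacle to be the successor clause of (iv): one must argue precisely that limit-part terms never migrate into the tail region under $\Delta$, so that the finite tail is annihilated after finitely many derivatives. This is exactly what keeps the finite tail from pushing the rank up to or past $\mu+\omega$, and hence what makes the strengthened inductive statement go through.
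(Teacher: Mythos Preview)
The paper itself leaves this proposition as an exercise, so there is no proof in the text to compare against; your proposal is correct and supplies exactly the kind of argument one would expect. The transfinite inductions for (i)--(iii) go through as you describe, and your key observation for (iv)---that a term at position $p<\mu$ can only move to a position $\le p<\mu$ under $\Delta$, so only descendants of the original finite tail can sit at positions $\ge\mu$, and each such descendant loses one unit of value per application of $\Delta$---is precisely what makes the finite tail disappear after boundedly many derivatives. One small remark: your induction on length in (iv) should also explicitly cover the base case of finite-length sequences (formally the case $\mu=0$), but this is immediate since a finite sequence with maximum entry $V$ satisfies $\Delta^{V+1}s=\emptyplay$, giving $s\le V+1<\omega$.
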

\begin{proof}
  Left as an exercise.  
\end{proof}

We can then classify the terms of the final sequence for $A\sequoid\blank$ as follows:
\begin{theorem}\label{finalseqClassification}
  Let $A$ be any game.  Then $A^{\sequoid\gamma}\cong(M_{\oc A}, \lambda_{\oc A}, \zeta_{\oc A}, P_{\oc A,\gamma})$, where
  \[
    P_{\oc A,\gamma} = \{s\in P_{\oc A}\suchthat \pr_2\circ s\le\gamma\}
    \]
  The morphism $j_\gamma^\delta$ is the copycat strategy.
\end{theorem}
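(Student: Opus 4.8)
The plan is to proceed by transfinite induction on $\gamma$, proving simultaneously that $A^{\sequoid\gamma}\cong(M_{\oc A},\lambda_{\oc A},\zeta_{\oc A},P_{\oc A,\gamma})$ and that the connecting maps $j^\delta_\gamma$ of the final sequence are copycat strategies. Throughout I identify the moves of $A\sequoid A^{\sequoid\gamma}$ with $M_{\oc A}=M_A\times\omega$ via the reindexing $M_A\sqcup(M_A\times\omega)\cong M_A\times\omega$ that sends the freshly prepended copy to index $0$ and shifts every old index $n$ up to $n+1$; under this identification the positions of all the $A^{\sequoid\gamma}$ live inside $M_{\oc A}^{*<\alpha}$, and Proposition \ref{basicResultsAboutRank}(iii) guarantees that the sets $P_{\oc A,\gamma}$ are nested, so that each $j^\delta_\gamma$ can be a copycat strategy restricting the larger position set to the smaller.

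For the base case $\gamma=0$, the only clause of the rank definition applicable to $0$ forces $\pr_2\circ s\le 0$ to hold solely for $s=\emptyplay$, so $P_{\oc A,0}=\{\emptyplay\}$, which is exactly the terminal game $1=I$, matching $A^{\sequoid 0}=F^0(1)=1$. The crux of the successor step is the observation that, because rule $2$ is the only clause of the definition whose conclusion is a successor rank, one has $s\le\gamma+1$ if and only if $\Delta s\le\gamma$. Now a position of $A\sequoid A^{\sequoid\gamma}$ is, by the definition of the sequoid, a position of $A\tensor A^{\sequoid\gamma}$ that opens with a move in the prepended copy $A$; under the reindexing its restriction to index $0$ is a play of $A$, and its restriction to the indices $\ge 1$, shifted down by one, has index sequence $\Delta(\pr_2\circ s)$ and must be a position of $A^{\sequoid\gamma}$ by the induction hypothesis. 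Hence the admissible positions are exactly those $s\in P_{\oc A}$ with $\Delta(\pr_2\circ s)\le\gamma$, i.e.\ with $\pr_2\circ s\le\gamma+1$. One then checks that the remaining data match: the labelling is $\lambda_A\circ\pr_1$ on both sides; the ordering condition ``a move in $A_1$ occurs later than some move in $A_0$'' comes for free from the fact that a sequoid position opens in copy $0$ (so the first move lies in copy $0$ and every copy-$1$ move follows it), the ordering among the higher copies being supplied by the hypothesis; and $\zeta_{\oc A}$ agrees since both games carry the same winning function. The isomorphism is the reindexing copycat, and $j^{\gamma+1}_{\delta+1}=\id_A\sequoid j^\gamma_\delta$ is copycat whenever $j^\gamma_\delta$ is.

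The limit case $\gamma=\mu$ is where the real work lies. Here $A^{\sequoid\mu}$ is the limit in $\G(\alpha)$ of the inverse chain $(A^{\sequoid\gamma})_{\gamma<\mu}$ along the copycat projections, and I must show that this limit has position set $P_{\oc A,\mu}$. The plan is to compute the limit concretely: since all games in the chain share the moves $M_{\oc A}$ and labelling $\lambda_{\oc A}$ and the connecting maps are copycat, the limit is again a game on these moves, and its positions are forced to be the limit-closure of $\bigcup_{\gamma<\mu}P_{\oc A,\gamma}$ --- every successor-length approximation must already appear at some stage $\gamma<\mu$, and the continuity condition on positions in $\G(\alpha)$ then adjoins exactly those limit-length sequences all of whose proper prefixes occur below $\mu$. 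This is precisely the limit clause of the rank definition, which states that $\{s:s\le\mu\}$ is the limit-closure of $\bigcup_{\gamma<\mu}\{s:s\le\gamma\}$; hence the positions of the limit are exactly $\{s\in P_{\oc A}:\pr_2\circ s\le\mu\}=P_{\oc A,\mu}$, the projections are the copycat maps $j^\mu_\gamma$, and the winning function is the common restriction of $\zeta_{\oc A}$. The main obstacle, requiring the most care, is this identification of the categorical limit in $\G(\alpha)$ with the position-level limit-closure: one has to verify that a strategy into the limit is the same data as a compatible cone of strategies into the $A^{\sequoid\gamma}$, and that the continuity condition built into the definition of a transfinite game is exactly what makes the order-theoretic limit-closure coincide with the game-theoretic one, with Proposition \ref{basicResultsAboutRank}(iv) bounding the ranks and controlling the behaviour at limit lengths.
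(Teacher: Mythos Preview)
The paper does not actually include a proof of Theorem~\ref{finalseqClassification}; it is stated and then immediately used to derive the corollary that the final sequence stabilizes at $\alpha$, so there is nothing to compare against.

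Your transfinite induction is the natural argument and the overall structure is sound. A few remarks on points that deserve a little more care than you give them. First, your biconditional ``$s\le\gamma+1$ iff $\Delta s\le\gamma$'' is correct, but the justification ``rule~2 is the only clause whose conclusion is a successor rank'' only gives the forward direction under the reading of the definition as the \emph{least} family of predicates closed under the three clauses; you should also note (via Proposition~\ref{basicResultsAboutRank}(ii) and (iii)) that this is consistent with monotonicity, so that instances of $s\le\gamma+1$ obtained by first deriving $s\le\gamma'$ for some $\gamma'\le\gamma$ still satisfy $\Delta s\le\gamma$. Second, in the successor step you need both directions of the correspondence between $P_{\oc A,\gamma+1}$ and the positions of $A\sequoid A^{\sequoid\gamma}$: in particular, an arbitrary $s\in P_{\oc A}$ with $\pr_2\circ s\le\gamma+1$ has its first move in copy~$0$ (by the $\oc A$ ordering constraint), hence is a legitimate sequoid position; you state this only in the other direction. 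Third, and most importantly, the limit step hinges on knowing that limits of $\omega$-chains (indeed, $\mu$-chains) along copycat inclusions in $\G(\alpha)$ exist and are computed by limit-closing the union of position sets. You correctly flag this as the main obstacle, but you should actually discharge it: exhibit the putative limit game, the copycat projections, and verify the universal property directly (a cone $(\sigma_\gamma\from L\to A^{\sequoid\gamma})_{\gamma<\mu}$ compatible with copycat is literally a nested family of strategies, whose union is a strategy into the limit-closure, and conversely). Without this, the limit case is asserted rather than proved.
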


\begin{corollary}
  The final sequence for $A\sequoid\blank$ stabilizes at $\alpha$ and we have $A^{\sequoid\alpha}=\oc A$.
\end{corollary}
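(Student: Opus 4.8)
The plan is to deduce both assertions directly from the classification in Theorem~\ref{finalseqClassification} together with the rank bound of Proposition~\ref{basicResultsAboutRank}(iv). By that theorem every term of the final sequence is carried by the same moves, labelling and winning function, so that $A^{\sequoid\gamma}\cong(M_{\oc A},\lambda_{\oc A},\zeta_{\oc A},P_{\oc A,\gamma})$ with $P_{\oc A,\gamma}=\{s\in P_{\oc A}\suchthat \pr_2\circ s\le\gamma\}$, and moreover each connecting map $j_\gamma^\delta$ is the copycat strategy. Proving stabilization at $\alpha$ therefore reduces to comparing the two position sets $P_{\oc A,\alpha}$ and $P_{\oc A,\alpha+1}$.

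First I would evaluate $P_{\oc A,\alpha}$. For any $s\in P_{\oc A}$ the projection $\pr_2\circ s$ is a sequence in $\omega^{*<\alpha}$, so Proposition~\ref{basicResultsAboutRank}(iv) gives $\pr_2\circ s\le\alpha$; hence the defining constraint is vacuous and $P_{\oc A,\alpha}=P_{\oc A}$. Since $\alpha\le\alpha+1$, part (iii) of the same proposition yields the inclusions $P_{\oc A,\alpha}\subseteq P_{\oc A,\alpha+1}\subseteq P_{\oc A}$, and combining these with $P_{\oc A,\alpha}=P_{\oc A}$ forces $P_{\oc A,\alpha+1}=P_{\oc A,\alpha}$.

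Consequently $A^{\sequoid(\alpha+1)}$ and $A^{\sequoid\alpha}$ are, under the identifications of Theorem~\ref{finalseqClassification}, the very same game, so the copycat map $j_\alpha^{\alpha+1}$ between them is the identity and in particular an isomorphism. The Final Sequence principle recalled earlier then applies: once $A^{\sequoid(\alpha+1)}\to A^{\sequoid\alpha}$ is an isomorphism the sequence has stabilized, and $A^{\sequoid\alpha}$ is the final coalgebra for $A\sequoid\blank$. Finally, because $P_{\oc A,\alpha}=P_{\oc A}$, the game $(M_{\oc A},\lambda_{\oc A},\zeta_{\oc A},P_{\oc A,\alpha})$ is by definition exactly $\oc A$, which gives $A^{\sequoid\alpha}=\oc A$.

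I do not anticipate a real obstacle at this stage: all of the combinatorial content has already been absorbed into Theorem~\ref{finalseqClassification} (matching the terms of the final sequence with the rank-bounded position sets) and into the bound $s\le\alpha$ of Proposition~\ref{basicResultsAboutRank}(iv), which is where the additive indecomposability of $\alpha=\omega^\beta$ is used. The only point deserving mild care is to confirm that ``stabilization at $\alpha$'' in the sense that $j_\alpha^{\alpha+1}$ is an isomorphism really does coincide with equality of the position sets; this holds precisely because the games in the classification share their moves, labelling and $\zeta$-function, so two terms with equal position sets are literally equal and the connecting copycat map is the identity.
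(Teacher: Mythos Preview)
Your proof is correct and follows the same approach as the paper: invoke Proposition~\ref{basicResultsAboutRank}(iv) to obtain $P_{\oc A,\alpha}=P_{\oc A}$, use (iii) to extend this to level $\alpha+1$, and then read off stabilization and the identification $A^{\sequoid\alpha}=\oc A$ from Theorem~\ref{finalseqClassification}. Your write-up is more explicit than the paper's about why equality of position sets forces the copycat connecting map to be an isomorphism, but the argument is identical in substance.
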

\begin{proof}
  By Proposition \ref{basicResultsAboutRank}(iv), $\pr_2\circ s\le\alpha$ for all $s\in P_{\oc A}$ and so $\pr_2\circ s\le(\alpha+1)$, by Proposition \ref{basicResultsAboutRank}(iii).  It follows, by Theorem \ref{finalseqClassification}, that $A^{\sequoid\alpha}=\oc A$ and that the morphism $A^{\sequoid\alpha}\to A^{\sequoid(\alpha+1)}$ is the morphism $\alpha_A$.
\end{proof}

In particular, $\oc A$ is the carrier of the final coalgebra for $A\sequoid\blank$.  But, as we saw before, it is not the carrier for the cofree commutative comonoid over $A$; indeed, more is true:

\begin{proposition}
  $\oc A$ does not carry the structure of a linear exponential comonad.
\end{proposition}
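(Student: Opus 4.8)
The plan is to show that $\oc A$, the final coalgebra for $A\sequoid\blank$ in $\G(\alpha)$, fails to carry a linear exponential comonad structure because the comultiplication $\oc A \to \oc A \tensor \oc A$ cannot exist. The strategy is to argue that \emph{any} candidate comultiplication would have to be compatible with the copycat morphism $\int_{A,A}\from \oc A \tensor \oc A \to \oc(A\times A)$ (which exists in this category, being the same copycat strategy as in the finite case), and then to derive a contradiction from the fact that $\oc A$ contains only $\omega$-many copies of $A$. Concretely, I would take $A = \bot$ (or any game with at least one move, so that copies are genuinely consumable) and consider a transfinite position in which player $O$ successively opens all $\omega$ copies of $A$ on one tensor factor before time $\omega$, and then, at time $\omega$ or later, demands a reply in the second factor.

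First I would recall the general shape of a linear exponential comonad: it requires a comultiplication $\delta_A\from \oc A \to \oc\oc A$ and a natural comonoid structure $(\oc A, \mu_A, \eta_A)$ with $\mu_A\from \oc A \to \oc A\tensor\oc A$, all satisfying the usual coherence and the counit/comonoid laws. It suffices to refute the existence of the comonoid comultiplication $\mu_A$ compatible with dereliction. Following the discussion preceding the statement, I would observe that the only sensible candidate for $\mu_A$ is built, exactly as $\sigma_A$ and $\int$ were in Theorem~\ref{StrongMonoidalFunctor}, from the diagonal $A \to A\times A$ and the final-coalgebra structure; any natural comultiplication into $\oc A\tensor\oc A$ that interacts correctly with $\alpha_A$ and $\der_A$ is forced to be a copycat-like strategy that duplicates each opened copy of $A$ across the two tensor factors. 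The essential point is a counting argument: $\oc A$ supplies only $\omega$ copies of $A$, whereas a faithful duplication across $\oc A\tensor\oc A$ requires, in the worst transfinite position, strictly more than $\omega$ copies to be available simultaneously on the source side.

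The key steps, in order, are: (1) exhibit the candidate $\mu_A$ as the composite $\int_{A,A}\inv \circ \sigma_A$ and note that in $\G(\alpha)$ the morphism $\int_{A,A}$ is \emph{not} an isomorphism, by the explicit copycat/counting obstruction described in the introduction to this section (a position in $\oc A_0 \implies \oc A_1 \tensor \oc A_2$ where $O$ opens all copies in one factor before $O$ plays in the other); (2) argue that \emph{no} alternative comultiplication can exist either, by showing that the comonoid and naturality laws pin down the behaviour of any $\mu_A$ on the finitely-deep positions, so that its extension to transfinite positions is forced and then runs out of copies; (3) conclude that, since even the weakest requirement --- a comonoid structure compatible with $\der_A$ --- fails, $\oc A$ cannot be a linear exponential comonad. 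I would likely specialise to a concrete small $A$ to make the counting explicit and to reuse the empty-hom-set trick ($\C(I,\bot)=\emptyset$) that appeared earlier for winning strategies.

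The main obstacle will be step (2): it is easy to show that the \emph{natural} candidate comultiplication fails, but a genuinely complete proof must rule out \emph{every} possible comonoid structure on $\oc A$, not merely the canonical one. The hard part is therefore to argue that the comonoid laws together with the sequoidal/copycat constraints force any comultiplication to agree with the canonical copycat on the relevant positions, so that the transfinite counting obstruction is unavoidable. I expect this to hinge on a careful analysis of how a strategy $\oc A \to \oc A \tensor \oc A$ must respond on alternating positions of length approaching $\omega$, using determinism and closure under $O$-replies to show that $O$ can force the strategy into a state where all $\omega$ copies on the source are exhausted while a reply in the far tensor factor is still demanded --- precisely the situation that has no valid $P$-response.
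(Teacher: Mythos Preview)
Your proposal identifies the right intuition (a counting/exhaustion obstruction on copies of $A$), but the route you take is substantially harder than necessary, and the step you yourself flag as the obstacle --- step~(2), ruling out \emph{every} possible comultiplication --- is a genuine gap. Arguing that the comonoid and naturality laws force an arbitrary $\mu_A$ to agree with the canonical copycat on all relevant positions is delicate, and you do not actually indicate how this would go; determinism and $O$-closure alone do not obviously pin down a unique behaviour once one allows exotic, non-copycat strategies.

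The paper's proof sidesteps this entirely by working at the level of \emph{objects} rather than morphisms. A linear exponential comonad on a symmetric monoidal category with finite products always yields the Seely isomorphism $\oc(A\times B)\cong \oc A\tensor\oc B$ (see e.g.\ Schalk's survey). So it suffices to exhibit particular games $A,B$ for which $\oc(A\times B)$ and $\oc A\tensor\oc B$ are not isomorphic as games. Take $A$ and $B$ to be bounded games (all plays of length $\le n$ for some $n$) each with at least one play of length~$2$. Then $A\times B$ is again bounded, so every play in $\oc(A\times B)$ has length $<\omega$. On the other hand, $\oc A\tensor\oc B$ admits plays of length $\omega2$: play through all $\omega$ copies of $A$ first, then through all $\omega$ copies of $B$. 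Since isomorphic games have plays of the same lengths, no isomorphism can exist, and hence $\oc$ cannot be a linear exponential comonad.

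This argument is both shorter and strictly stronger than what you outline: it rules out \emph{all} comonad structures on $\oc$ at once, with no need to analyse candidate comultiplications strategy-by-strategy or to invoke determinism of $P$. Your exhaustion picture (``$O$ opens all copies in one factor, then demands a reply in the other'') is exactly the phenomenon behind the length-$\omega2$ play; the paper simply packages it as a non-isomorphism of objects rather than as a failure of a specific morphism.
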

\begin{proof}
  Indeed, if it did, then \cite{SchalkWhatIs} we would have an isomorphism
  \[
    \oc (A\times B) \cong \oc A \tensor \oc B
    \]
  But this isomorphism does not hold for suitably long transfinite games.  For example, if $A$ and $B$ are bounded games (i.e., games such that the lengths of plays are bounded by some finite number $n$) containing at least one play of length $2$ then $\oc A\tensor \oc B$ contains plays of length $\omega2$ (play through all the copies of $A$, then through all the copies of $B$), while the lengths of plays in $\oc (A\times B)$ are bounded by $\omega$, since $A\times B$ is a bounded game.
\end{proof}



\bibliography{bibliography}\vfill


\pagebreak

\appendix

\section{Proofs}

\subsection{Proof of Proposition \ref{StrongMonoidalFunctorActual}}

\begingroup
\def\thetheorem{\ref{StrongMonoidalFunctorActual}}
\begin{proposition}
  $\sigma\mapsto\oc\sigma$ respects composition, so $\oc$ is a functor.  Moreover, $\oc$ is a strong symmetric monoidal functor from the Cartesian category $(\C, \times, 1)$ to the symmetric monoidal category $(\C, \tensor, I)$, witnessed by $\int$ and $\epsilon$, where $\epsilon$ is the anamorphism of the composite $I \xrightarrow{runit} I \tensor I \xrightarrow{*\tensor\id} 1\tensor I\xrightarrow{\wk}1\sequoid I$.
\end{proposition}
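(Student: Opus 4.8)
The entire argument rests on the universal property of the final coalgebras $(\oc A,\alpha_A)$: two morphisms with codomain $\oc Y$ coincide as soon as both are anamorphisms of a single $Y\sequoid\blank$-coalgebra. The one trick that makes this usable is the naturality of $\wk$ in its second (lax) argument, $(\id_C\tensor h);\wk_{C,\oc C}=\wk_{C,\oc A};(\id_C\sequoid h)$: it lets me recognise any fixed-point equation of the shape $h;\alpha_C=\mu_A;(\der_A\tensor\id_{\oc A});\big((\comp g f)\tensor h\big);\wk_{C,\oc C}$ as a genuine coalgebra-morphism equation $h;\alpha_C=\theta;(\id_C\sequoid h)$ for a fixed coalgebra $\theta$, so that uniqueness of anamorphisms applies even though $h$ originally appears inside a tensor rather than inside $\id_C\sequoid h$.

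The crux is a pair of \emph{counit laws} for the comultiplication, namely
\[
  \mu_A ; (\der_A\tensor\id_{\oc A}) ; \wk_{A,\oc A} = \alpha_A, \qquad \mu_A ; (\id_{\oc A}\tensor *_{\oc A}) ; \runit_{\oc A} = \id_{\oc A}.
\]
Unlike everything downstream, these cannot be obtained from finality (both sides leave $\oc A$), so each must be checked by a direct diagram chase: I would expand $\mu_A=\comp{\int_{A,A}\inv}{\sigma_A}$ and $\der_A=\comp{\run_A}{\comp{\id_A\sequoid *_{\oc A}}{\alpha_A}}$, substitute the defining coalgebras of $\sigma_A$ and $\int_{A,A}$, then collapse using the unfolding of $\kappa_{A,A}$ into $\dec$, $\passoc$ and $\sym$, the distributivity isomorphism $\dist$, terminality of $I\cong 1$, and the sequoidal coherence triangle $\comp{\run_A}{\wk_{A,I}}=\runit_A$. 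I expect this to be the main obstacle, exactly because $\mu$ is built from the inverse of the nontrivial anamorphism $\int$, so its counit behaviour is entangled with the coalgebra defining $\int$. The first identity immediately yields $\oc\id_A=\id_{\oc A}$, since its defining coalgebra is then $\alpha_A$.

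With the counit laws in hand, functoriality follows in a fixed order that also sidesteps an apparent circularity. First, naturality of dereliction, $\comp{\der_B}{\oc f}=\comp f{\der_A}$, drops out of the second counit law together with naturality of $\runit$ and $\wk$ and the $\run$-triangle. Next, I would prove naturality of $\int$, $\comp{\int_{A',B'}}{(\oc f\tensor\oc g)}=\comp{\oc(f\times g)}{\int_{A,B}}$, by finality; crucially its proof uses only the defining coalgebra equation of $\oc(f\times g)$, the counit laws, and the naturalities of $\dec,\dist,\passoc,\sym$, and \emph{not} the naturality of $\mu$, which is what breaks the circle. Naturality of $\mu$, $\comp{\mu_B}{\oc f}=\comp{(\oc f\tensor\oc f)}{\mu_A}$, then follows from $\mu=\comp{\int\inv}{\sigma}$ together with a routine finality proof of naturality of $\sigma$. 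Finally, $\oc(\comp g f)=\comp{\oc g}{\oc f}$: pushing $\oc f$ through $\theta_g$ (the coalgebra defining $\oc g$) using naturality of $\mu$ and of $\der$ recasts the defining equation of $\comp{\oc g}{\oc f}$, via the trick of the first paragraph, into the coalgebra-morphism equation for $\theta_{\comp g f}$, which is also satisfied by $\oc(\comp g f)$; uniqueness concludes.

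It remains to verify the strong symmetric monoidal coherences for $(\int,\epsilon)$, each by the same recipe: reduce to a common coalgebra by finality and chase. The symmetry law $\comp{\oc(\sym_{A,B})}{\int_{A,B}}=\comp{\int_{B,A}}{\sym_{\oc A,\oc B}}$ is nearly immediate since a copy of $\sym$ is already built into $\kappa$. The two unit laws relating $\epsilon$, $\int_{1,A}$, $\int_{A,1}$ and the unitors use the distributivity isomorphism $1\sequoid A\cong 1$ and terminality. The associativity hexagon for $\int$ is the most computation-heavy step — though conceptually routine — requiring the $\passoc$ coherence and repeated use of $\dec$ and $\dist$. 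Because $\int_{A,B}$ is an isomorphism by hypothesis and the composite defining $\epsilon$ is a composite of isomorphisms, the resulting lax monoidal structure is automatically strong, which completes the proof.
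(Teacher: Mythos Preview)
Your proposal is essentially correct and identifies the same crux as the paper: the identity $\mu_A;(\der_A\tensor\id_{\oc A});\wk_{A,\oc A}=\alpha_A$ (the paper's Lemma~\ref{aFormulaForAlpha}) is the one nontrivial step, after which functoriality and the monoidal coherences follow by finality arguments. There is, however, a genuine misconception in how you plan to discharge that crux. You assert that the two counit laws ``cannot be obtained from finality (both sides leave $\oc A$)'' and must therefore be done by a direct chase in which you ``expand $\mu_A=\comp{\int_{A,A}^{-1}}{\sigma_A}$''. For the first law the codomain is indeed $A\sequoid\oc A$, but the paper's key manoeuvre is precisely to reduce it to a statement that \emph{is} amenable to finality. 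One pastes the anamorphism squares for $\sigma_A$ and $\int_{A,A}$ (so that $\int_{A,A}^{-1}$ never appears and only the composite $\mu_A$ remains), projects through $\dist$, and obtains two squares: the first reduces the desired identity to $\xi_A\coloneqq\mu_A;(*\tensor\id);\lunit=\id_{\oc A}$, and the second exhibits $\xi_A$ as an $(A\sequoid\blank)$-coalgebra morphism into $(\oc A,\alpha_A)$, whence $\xi_A=\fcoal{\alpha_A}=\id$. Your proposed ``direct chase'' is problematic exactly because $\int_{A,A}^{-1}$ has no constructive description---it is only known to invert an anamorphism---so there is nothing to unfold; the pasting trick is what eliminates it.

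Downstream the two routes diverge in order rather than substance. You layer the argument as naturality of $\der$, then of $\int$, then of $\mu$, then functoriality; the paper goes straight from Lemma~\ref{aFormulaForAlpha} to functoriality in a single diagram (rewriting $\alpha_B$ and $\alpha_C$ via the lemma and using that $\dec$ is a natural isomorphism), and only then turns to the monoidal coherences. The paper's order is shorter and sidesteps your circularity concern altogether, since naturality of $\mu$ is never needed as an intermediate. For the associativity hexagon the paper also relies on an auxiliary observation you omit: when $f\in\C_s$, one has the simplified square $\alpha_B\circ\oc f=(\id_B\sequoid\oc f)\circ(f\sequoid\id_{\oc A})\circ\alpha_A$, which makes $\oc\assoc_\times$ tractable without routing through $\mu$ and $\der$.
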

\addtocounter{theorem}{-1}
\endgroup

In order to show that $\sigma\mapsto\oc\sigma$ respects composition, we need the following lemma:

\begin{lemma}\label{aFormulaForAlpha}
  Let $A$ be an object of $\C$.  Then $\alpha_A\from\oc A\to A\sequoid\oc A$ is equal to the following composite:
  \[
    \begin{tikzcd}
      \oc A \arrow[r, "\mu_A"]
        & \oc A \tensor \oc A \arrow[r, "\der_A\tensor\id_{\oc A}"]
          & A \tensor \oc A \arrow[r, "\wk_{A,\oc A}"]
            & A \sequoid \oc A
    \end{tikzcd}
    \]
\end{lemma}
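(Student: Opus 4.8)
The plan is to unwind the definitions $\mu_A = \int_{A,A}\inv\circ\sigma_A$ and $\der_A = \run_A\circ(\id_A\sequoid *_{\oc A})\circ\alpha_A$, and to abbreviate the composite to be computed as $e = \wk_{A,\oc A}\circ(\der_A\tensor\id_{\oc A})\from\oc A\tensor\oc A\to A\sequoid\oc A$, so that the goal becomes $e\circ\mu_A = \alpha_A$. Intuitively this merely says that comultiplying, reading off one copy with $\der_A$ and keeping the rest reproduces the one-step unfolding $\alpha_A$; the work is to extract this from the anamorphism definitions of $\sigma_A$ and $\int_{A,A}$ together with the coherence of the sequoidal structure.

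First I would record two identities. Writing $\kappa_1 = \pr_1\circ\kappa_{A,A}\from\oc A\tensor\oc A\to A\sequoid(\oc A\tensor\oc A)$ for the first component of $\kappa_{A,A}$, the anamorphism equation $\alpha_{A\times A}\circ\int_{A,A} = (\id\sequoid\int_{A,A})\circ\dist\inv\circ\kappa_{A,A}$, together with $\dist = \langle\pr_1\sequoid\id,\pr_2\sequoid\id\rangle$, yields $\kappa_1\circ\int_{A,A}\inv = (\pr_1\sequoid\int_{A,A}\inv)\circ\alpha_{A\times A}$. Separately, expanding $\der_A$ and using naturality of $\wk$ together with the $\passoc$ and $\run$ coherence squares, I would verify the purely structural identity $e = (\id_A\sequoid\ell)\circ\kappa_1$, where $\ell = \lunit_{\oc A}\circ(*_{\oc A}\tensor\id_{\oc A})\from\oc A\tensor\oc A\to\oc A$ discards the first tensor factor. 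Combining these, by functoriality of $\sequoid$, gives $e\circ\int_{A,A}\inv = (\pr_1\sequoid(\ell\circ\int_{A,A}\inv))\circ\alpha_{A\times A}$.

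Next I would post-compose with $\sigma_A$ and feed in its anamorphism equation $\alpha_{A\times A}\circ\sigma_A = (\id\sequoid\sigma_A)\circ\dist\inv\circ\Delta\circ\alpha_A$. Using $\int_{A,A}\inv\circ\sigma_A = \mu_A$, functoriality of $\sequoid$, the identity $(\pr_1\sequoid\id)\circ\dist\inv = \pr_1$ (the first projection of the product), and $\pr_1\circ\Delta = \id$, the whole expression collapses to $e\circ\mu_A = (\id_A\sequoid(\ell\circ\mu_A))\circ\alpha_A$. It therefore remains only to establish the left counit law $\ell\circ\mu_A = \id_{\oc A}$ --- that comultiplying and then discarding the first copy is the identity --- whereupon the right-hand side becomes $(\id_A\sequoid\id_{\oc A})\circ\alpha_A = \alpha_A$, as required.

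I expect the two structural facts, rather than the coalgebraic bookkeeping, to be the real obstacles. The identity $e = (\id_A\sequoid\ell)\circ\kappa_1$ is a coherence calculation that must thread $\der_A$ through the $\passoc\inv$, $\alpha_A\sequoid\id$ and $\wk$ appearing in $\kappa_1$, and is where the sequoidal coherence axioms are genuinely used. For the counit law $\ell\circ\mu_A = \id_{\oc A}$ I would run the symmetric version of the computation above, using instead the second component $\pr_2\circ\kappa_{A,A}$ (which carries the braiding $\sym$) to exhibit $\ell\circ\mu_A$ as an endomorphism of the final coalgebra $(\oc A,\alpha_A)$, so that finality forces it to be the identity. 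Care is needed here because the functoriality of $\oc$ and the naturality of $\int$ are not yet available at this point in the development, so this last argument must be carried out directly from the anamorphism equations and the decomposition isomorphisms rather than by appeal to the comonoid laws proved later.
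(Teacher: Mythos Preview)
Your proposal is correct and follows essentially the same route as the paper. The paper likewise pastes the anamorphism squares for $\sigma_A$ and $\int_{A,A}$, takes the two projections (your $\kappa_1$ and the symmetric component), uses the first together with the sequoidal coherence axioms to reduce the claim to showing that the composite $\xi_A=\lunit\circ(*\tensor\id)\circ\mu_A$ (your $\ell\circ\mu_A$) is the identity, and then uses the second projection to exhibit $\xi_A$ as a coalgebra endomorphism of $(\oc A,\alpha_A)$, whence $\xi_A=\id$ by finality; your explicit isolation of the structural identity $e=(\id_A\sequoid\ell)\circ\kappa_1$ is exactly what the paper's large coherence diagram (cells \textbf{b}--\textbf{f}) encodes.
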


\begin{proof}
  We may paste together the anamorphism diagrams for $\int_A$ and $\sigma_A$ to form the following diagram (where we shall omit subscripts where there is no ambiguity):
  \begin{equation*}
    \begin{tikzcd}[cramped]
      \oc A \arrow[r, "\alpha"] \arrow[d, "\sigma_A"']
        & A \sequoid \oc A \arrow[r, "\Delta"]
          &[-24pt] (A \sequoid \oc A) \times (A \sequoid \oc A) \arrow[r, "\dist\inv"]
            & (A \times A) \sequoid \oc A \arrow[d, "\id_{A\times A}\sequoid\sigma_A"] \\
      \oc(A\times A) \arrow[rrr, "\alpha"]
        &
          &
            & (A \times A) \sequoid \oc(A\times A) \\
      \oc A \tensor \oc A \arrow[rr, "\kappa_{A,A}"] \arrow[u, "\int_A"]
        &
          & (A \sequoid (\oc A \tensor \oc A)) \times (A \sequoid (\oc A \tensor \oc A)) \arrow[r, "\dist\inv"]
            & (A\times A)\sequoid(\oc A \tensor \oc A) \arrow[u, "\id_{A\times A}\sequoid\int_A"']
    \end{tikzcd}
  \end{equation*}
  where we observe that the composites down the left and right hand sides (after inverting the lower arrows) are $\mu_A$ and $\id_{A\times A}\sequoid\mu_A$.

  Now note that we have the following commutative square:
  \[
    \begin{tikzcd}
      (A \times A) \sequoid \oc A \arrow[r, "\dist"] \arrow[d, "\id_{A\times A}\sequoid\mu_A"']
        & (A \sequoid \oc A) \times (A \sequoid \oc A) \arrow[d, "(\id\sequoid\mu)\times(\id\sequoid\mu)"] \\
      (A \times A) \sequoid (\oc A \tensor \oc A) \arrow[r, "\dist"']
        & (A \sequoid (\oc A \tensor \oc A)) \times (A \sequoid (\oc A \tensor \oc A))
    \end{tikzcd}
    \]

  (using the definition of $\dist$).  Putting this together with the diagram above, we get the following commutative diagram:
  \[
    \begin{tikzcd}
      \oc A \arrow[r, "\alpha"] \arrow[d, "\mu_A"']
        & A \sequoid \oc A \arrow[r, "\Delta"]
          & (A \sequoid \oc A) \times (A \sequoid \oc A) \arrow[d, "\id\sequoid\mu_A\times\id\sequoid\mu_A"] \\
      \oc A \tensor \oc A \arrow[rr, "\kappa_{A,A}"]
        & \cdots \arrow[r]
          & (A \sequoid (\oc A \tensor \oc A)) \times (A \sequoid (\oc A \tensor \oc A))
    \end{tikzcd}
    \]
  We now expand the definition of $\kappa_{A,A}$  and take the projections on to the first and second components, yielding the following two commutative diagrams:
  \[
    \begin{tikzcd}\label{firstProjectionDiagram}
      \oc A \arrow[rrr, "\alpha"] \arrow[d, "\mu_A"']
        &
          &
            & A \sequoid \oc A \arrow[d, "\id\sequoid\mu_A"] \\
      \oc A \tensor \oc A \arrow[r, "\alpha\tensor\id"]
        & (A \sequoid \oc A) \tensor \oc A \arrow[r, "\wk"]
          & (A \sequoid \oc A) \sequoid \oc A \arrow[r, "\passoc\inv"]
            & A \sequoid (\oc A \tensor \oc A)
    \end{tikzcd}\tag{\textbf{1}}
    \]

  \begin{equation*}\label{secondProjectionDiagram}
    \begin{tikzcd}[cramped]
     \oc A \arrow[rrrr, "\alpha"] \arrow[d, "\mu_A"']
        &
          &
            &
              & A \sequoid \oc A \arrow[d, "\id\sequoid\mu_A"] \\
      \oc A \tensor \oc A \arrow[r, "\sym"']
        & \oc A \tensor \oc A \arrow[r, "\alpha\tensor\id" yshift=0.4em]
          & (A \sequoid \oc A) \tensor \oc A \arrow[r, "\comp{\passoc\inv}{\wk}" yshift=0.4em]
            & A \sequoid (\oc A \tensor \oc A) \arrow[r, "\id\sequoid\sym"']
              & A \sequoid (\oc A \tensor \oc A)
    \end{tikzcd}\tag{\textbf{2}}
  \end{equation*}

  From diagram \eqref{firstProjectionDiagram}, we construct the following commutative diagram:
  \[
    \begin{tikzcd}
      \oc A \arrow[rrr, "\alpha"] \arrow[d, rightsquigarrow, "\mu_A"'] \arrow[phantom, drrr, "\textbf{a}" xshift=0.9em]
        &
          &
            & A \sequoid \oc A \arrow[d, "\id\sequoid\mu_A"] \\
      \oc A \tensor \oc A \arrow[r, "\alpha\tensor\id"] \arrow[ddr, rightsquigarrow, "\der_A\tensor\id"'] \arrow[phantom, dr, "\textbf{b}" xshift=0.6em]
        & (A \sequoid \oc A) \tensor \oc A \arrow[r, "\wk"] \arrow[d, "(\id\sequoid*)\tensor\id"] \arrow[phantom, dr, "\textbf{c}" xshift=0.3em]
          & (A \sequoid \oc A) \sequoid \oc A \arrow[r, "\passoc\inv"] \arrow[d, "(\id\sequoid*)\sequoid\id"] \arrow[phantom, dr, "\textbf{e}" xshift=0.3em]
            & A \sequoid (\oc A \sequoid \oc A) \arrow[d, "\id\sequoid(*\tensor\id)"] \\
        & (A \sequoid I) \tensor \oc A \arrow[r, "\wk"] \arrow[d, "\run\tensor\id"] \arrow[phantom, dr, "\textbf{d}" xshift=0.3em]
          & (A \sequoid I) \sequoid \oc A \arrow[r, "\passoc\inv"], \arrow[d, "\run\tensor\id"] \arrow[phantom, r, "\textbf{f}" xshift=-0.6em, yshift=-1.0em]
            & A \sequoid (I \tensor \oc A) \arrow[dl, "\id\sequoid\lunit"] \\
        & A \tensor \oc A \arrow[r, rightsquigarrow, "\wk"]
          & A \sequoid \oc A
            &
    \end{tikzcd}
    \]

  \textbf{a} is diagram \eqref{firstProjectionDiagram}.

  \textbf{b} commutes by the definition of $\der_A$.

  \textbf{c} and \textbf{d} commute because $\wk$ is a natural transformation.

  \textbf{e} commutes because $\passoc$ is a natural transformation.

  \textbf{f} commutes by one of the coherence conditions in the definition of a sequoidal category.

  We now observe that the composite of the three squiggly arrows is the composite we are trying to show is equal to $\alpha$; we have $\alpha$ along the top, so it will suffice to show that the composite
  \[
    \begin{tikzcd}
      \xi_A\;=\;\oc A \arrow[r, "\mu_A"]
        & \oc A \tensor \oc A \arrow[r, "*\tensor\id"]
          & I \tensor \oc A \arrow[r, "\lunit"]
            & \oc A
    \end{tikzcd}
    \]
  is equal to the identity.  We do this using diagram \eqref{secondProjectionDiagram}.  First we construct the diagram shown in Figure \ref{hugeDiagram1}.

  \begin{SidewaysFigure}
    \[
      \begin{tikzcd}[ampersand replacement=\&]
        \oc A \arrow[rrrrr, "\alpha"] \arrow[d, "\mu_A"'] \arrow[phantom, drrrrr, "\textbf{a}"]
          \&
            \&
              \&
                \& 
                  \& A \sequoid \oc A \arrow[d, "\id\sequoid\mu_A"] \\
        \oc A \tensor \oc A  \arrow[r, "\sym"] \arrow[d, "*\tensor\id"'] \arrow[phantom, dr, "\textbf{b}"]
          \& \oc A \tensor \oc A \arrow[r, "\alpha\tensor\id"] \arrow[d, "\id\tensor*"]
            \& (A \sequoid \oc A) \tensor \oc A \arrow[r, "\wk"] \arrow[d, "\id\sequoid*"] \arrow[phantom, dr, "\textbf{d}"]
              \& (A \sequoid \oc A) \sequoid \oc A \arrow[r, "\passoc\inv"] \arrow[d, "\id\sequoid*"] \arrow[phantom, dr, "\textbf{e}"]
                \& A \sequoid (\oc A \tensor \oc A) \arrow[r, "\id\sequoid\sym"] \arrow[d, "\id\sequoid(\id\sequoid*)"] \arrow[phantom, dr, "\textbf{c}"]
                  \& A \sequoid (\oc A \tensor \oc A) \arrow[d, "\id\sequoid (*\sequoid\id)"] \\
        I \tensor \oc A \arrow[r, "\sym"] \arrow[d, "\lunit"'] \arrow[phantom, dr, "\textbf{g}"]
          \& \oc A \tensor I \arrow[r, "\alpha\tensor\id"] \arrow[d, "\runit"] \arrow[phantom, dr, "\textbf{f}"]
            \& (A \sequoid \oc A) \tensor I \arrow[r, "\wk"] \arrow[d, "\runit"] \arrow[phantom, dr, "\textbf{i}"]
              \& (A \sequoid \oc A) \sequoid I \arrow[r, "\passoc\inv"] \arrow[d, "\run"] \arrow[phantom, dr, "\textbf{j}"]
                \& A \sequoid (\oc A \tensor I) \arrow[r, "\id\sequoid\sym"] \arrow[d, "\id\sequoid\runit"] \arrow[phantom, dr, "\textbf{h}"]
                  \& A \sequoid (I \tensor \oc A) \arrow[d, "\id\sequoid\lunit"] \\
        \oc A \arrow[r, "\id"']
          \& \oc A \arrow[r, "\alpha"]
            \& A \sequoid \oc A \arrow[r, "\id"']
              \& A \sequoid \oc A \arrow[r, "\id"']
                \& A \sequoid \oc A \arrow[r, "\id"']
                  \& A \sequoid \oc A
      \end{tikzcd}
      \]
    \caption{
      \textbf{a} is diagram \eqref{secondProjectionDiagram}. \\[\baselineskip]
      \textbf{b} and \textbf{c} commute because $\sym$ is a natural transformation, \textbf{d} commutes because $\wk$ is a natural transformation and \textbf{e} commutes because $\passoc$ is a natural transformation.  \textbf{f} commutes because $\runit$ is a natural transformation. \\[\baselineskip]
      \textbf{g} and \textbf{h} commute by one of the coherence conditions for a symmetric monoidal category.  \textbf{i} commutes by one of the coherence conditions for $\wk$ in the definition of a sequoidal category and \textbf{j} commutes by one of the coherence conditions for $\passoc$ in the definition of a sequoidal category.
    }\label{hugeDiagram1}
  \end{SidewaysFigure}

  Now observe that the composite $\xi_A$ is running along the left hand side of Figure \ref{hugeDiagram1}, while $\id\sequoid\xi$ is running along the right.  Since we have $\alpha$ along the bottom, it follows by the uniqueness of $\fcoal{\cdot}$ that $\xi=\fcoal{\alpha}=\id_{\oc A}$.  
\end{proof}

Now we are ready to show that $f\mapsto\oc f$ respects composition.  Let $A,B,C$ be objects, let $f$ be a morphism from $A$ to $B$ and let $g$ be a morphism from $B$ to $C$.   Using Lemma \ref{aFormulaForAlpha} and the definition of $\oc f$, $\oc g$, we may construct a commutative diagram:
\[
  \begin{tikzcd}
    \oc A \arrow[r, "\mu"] \arrow[d, "\oc f"']
      & \oc A \tensor \oc A \arrow[r, "\der\tensor\id"]
        & A \tensor \oc A \arrow[r, "f\tensor\id"]
          & B \tensor \oc A \arrow[r, "\dec"] \arrow[d, dotted, "\id\tensor\oc\sigma"]
            & B \sequoid \oc A \;\times\; \oc A \sequoid B \arrow[d, "\id\sequoid\oc f\;\times\;\oc f\sequoid\id"] \\
    \oc B \arrow[r, "\mu"] \arrow[dd, "\oc g"']
      & \oc B \tensor \oc B \arrow[rr, "\der\tensor\id"]
        &
          & B \tensor \oc B \arrow[r, "\dec"] \arrow[d, "g\tensor\id"]
            & B \sequoid \oc B \;\times\; \oc B \sequoid B \\
      &
        &
          & C \tensor \oc B \arrow[r, "\dec"] \arrow[d, dotted, "\id\tensor\oc g"']
            & C \sequoid \oc B \;\times\; \oc B \sequoid C \arrow[d, "\id\sequoid\oc g\;\times\oc g\sequoid\id"] \\
    \oc C \arrow[r, "\mu"]
      & \oc C \tensor \oc C \arrow[rr, "\der\tensor\id"]
        &
          & C \tensor \oc C \arrow[r, "\dec"]
            & C \sequoid \oc C \;\times \oc C \sequoid C
  \end{tikzcd}
  \]
Here, the outermost (solid) shapes are the product of shapes that commute by the definition of $\oc f$, $\oc g$ (after we have replaced $\alpha_B$, $\alpha_C$ with the composite from Lemma \ref{aFormulaForAlpha}).  The smaller squares on the right hand side commute because $\dec$ is a natural transformation.  Since $\dec$ is an isomorphism, the two rectangles on the left commute as well.

Throwing away the right hand squares and adding some new arrows at the right, we arrive at the following commutative diagram:
\[
  \begin{tikzcd}
    \oc A \arrow[r, "\comp{(f \tensor\id)}{\comp{(\der\tensor\id)}\mu}"] \arrow[d, "\oc f"']
      &[72pt] B \tensor \oc A \arrow[r, "g \tensor\id"] \arrow[d, "\id\tensor\oc f"']
        & C \tensor \oc A \arrow[d, "\id\tensor\oc f"] \\
    \oc B \arrow[d, "\oc g"']
      & B \tensor\oc B \arrow[r, "g\tensor\id"] \arrow[d, "g\tensor\oc\tau"']
        & C \tensor \oc B \arrow[dl, "\id\tensor\oc g"] \\
    \oc C \arrow[r, "\comp{(\der\tensor\id)}{\mu}"]
      & C \tensor \oc C
        &
  \end{tikzcd}
  \]
We have just shown that the square on the left commutes.  The shapes on the right commute by inspection.  We now throw away the internal arrows and apply $\wk$ on the right hand side:
\[
  \begin{tikzcd}
    \oc A \arrow[r, "\comp{((\comp g f)\tensor\id)}{\comp{(\der\tensor\id)}\mu}"] \arrow[d, "\oc f"']
      &[72pt] C \tensor \oc A \arrow[d, "\id\tensor\oc f"'] \arrow[r, "\wk"]
        & C \sequoid \oc A \arrow[d, "\id\sequoid\oc f"] \\
    \oc B \arrow[d, "\oc g"']
      & C \tensor \oc B \arrow[d, "\id\tensor\oc g"'] \arrow[r, "\wk"]
        & \oc C \sequoid \oc B \arrow[d, "\id\sequoid\oc g"] \\
    \oc C \arrow[r, "\comp{(\der\tensor\id)}{\mu}"]
      & C \tensor \oc C \arrow[r, "\wk"]
        & C \sequoid\oc C
  \end{tikzcd}
  \]
By Lemma \ref{aFormulaForAlpha}, the composite along the bottom is equal to $\alpha_C$.  Therefore, by uniqueness of $\fcoal{\cdot}$, we have
\[
  \comp{\oc g}{\oc f} = \fcoal{\comp{\wk}{\comp{((\comp g f)\tensor\id)}{\comp{(\der\tensor\id)}\mu}}} = \oc(\comp g f)
  \]
Therefore, $\oc$ is indeed a functor.

We now want to show that $\oc$ has the structure of a strong symmetric monoidal functor from $(\C,\times,1)$ to $(\C,\tensor,I)$.  The relevant morphisms are:
\[
  \int_{A,B}\from \oc A \tensor \oc B \to \oc (A \times B)
  \quad
  \epsilon\from I \to \oc 1
  \]
By hypothesis, these are both isomorphisms.  We just need to show that the appropriate coherence diagrams commute.  That is, for any games $A,B,C$, we need to show that the following diagrams commute:
\[
  \begin{tikzcd}
    (\oc A \tensor \oc B) \tensor \oc C \arrow[r, "\assoc_{A,B,C}"] \arrow[d, "\int_{A,B}\tensor\id_{\oc C}"']
      &[24pt] \oc A \tensor (\oc B \tensor \oc C) \arrow[d, "\id_{\oc A}\tensor\int_{B,C}"] \\
    \oc (A \times B) \tensor \oc C \arrow[d, "\int_{A\times B,C}"']
      & \oc A \tensor \oc(B \times C) \arrow[d, "\int_{A,B\times C}"] \\
    \oc ((A \times B) \times C) \arrow[r, "\oc\assoc_{\times,A,B,C}"']
      & \oc (A \times (B \times C))
  \end{tikzcd}
\]
\[
  \begin{tikzcd}
    I \tensor \oc A \arrow[r, "\varepsilon\tensor\id"]
      & \oc 1 \tensor \oc A \arrow[d, "\int_{1,A}"] \\
    \oc A \arrow[u, "\lunit_{\oc A}"] \arrow[r, "\oc\lunit_A"]
      & \oc (1 \times A)
  \end{tikzcd}\quad
  \begin{tikzcd}
    \oc A \tensor I \arrow[r, "\id\tensor\varepsilon"]
      & \oc A \tensor \oc 1 \arrow[d, "\int_{A,1}"] \\
    \oc A \arrow[r, "\oc\runit_A"] \arrow[u, "\runit_{\oc A}"]
      & \oc (A \times 1)
  \end{tikzcd}\quad
  \begin{tikzcd}
    \oc A \tensor \oc B \arrow[r, "\sym_{A,B}"] \arrow[d, "\int_{A,B}"']
      & \oc B \tensor \oc A \arrow[d, "\int_{B,A}"] \\
    \oc(A\times B) \arrow[r, "\oc\sym_{A,B}"]
      & \oc(B\times A)
  \end{tikzcd}
  \]

We first prove a small lemma, which gives us a simpler way to compute $\oc \sigma$ in the case that $\sigma$ is a morphism in $\C_s$.  

\begin{lemma}\label{ocStrict}
  Let $A,B$ be objects of $\C_s$ and let $\sigma$ be a morphism from $A$ to $B$ in $\C_s$.  Then the following diagram commutes:
  \[
    \begin{tikzcd}
      \oc A \arrow[r, "\alpha_A"] \arrow[d, "\oc \sigma"']
        & A \sequoid \oc A \arrow[r, "\sigma\sequoid\id"]
          & B \sequoid \oc A \arrow[d, "\id\sequoid\oc\sigma"] \\
      \oc B \arrow[rr, "\alpha_B"]
        &
          & B \sequoid \oc B
    \end{tikzcd}
    \]
\end{lemma}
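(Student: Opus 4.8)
The plan is to observe that, once we unfold the definition of $\oc\sigma$ and use the strictness of $\sigma$, the defining coalgebra square of the anamorphism $\oc\sigma$ becomes literally the square we are asked to prove commutes. Recall that by the definition of $\oc$ on morphisms, $\oc\sigma = \fcoal{h}$, where $h\from \oc A \to B\sequoid\oc A$ is the composite $\comp{\wk_{B,\oc A}}{\comp{(\sigma\tensor\id_{\oc A})}{\comp{(\der_A\tensor\id_{\oc A})}{\mu_A}}}$.

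The first step is to push $\sigma$ across $\wk$. Since $\sigma$ is a morphism of $\C_s$, the naturality of $\wk$ in its first (strict) argument applies, giving $\comp{\wk_{B,\oc A}}{(\sigma\tensor\id_{\oc A})} = \comp{(\sigma\sequoid\id_{\oc A})}{\wk_{A,\oc A}}$. This is the one place where strictness is genuinely used: the functor $\blank\sequoid\blank$ only acts on $\C_s$ in its left argument, so the naturality square producing $\sigma\sequoid\id_{\oc A}$ is available precisely because $\sigma$ lies in $\C_s$.

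The second step is to recognise the remaining composite: by Lemma \ref{aFormulaForAlpha} we have $\comp{\wk_{A,\oc A}}{\comp{(\der_A\tensor\id_{\oc A})}{\mu_A}} = \alpha_A$. Combining the two steps rewrites the coalgebra as $h = \comp{(\sigma\sequoid\id_{\oc A})}{\alpha_A}$.

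Finally, I substitute this expression for $h$ into the defining square of the anamorphism $\oc\sigma = \fcoal{h}$, namely $\comp{\alpha_B}{\oc\sigma} = \comp{(\id_B\sequoid\oc\sigma)}{h}$. Replacing $h$ yields $\comp{\alpha_B}{\oc\sigma} = \comp{(\id_B\sequoid\oc\sigma)}{\comp{(\sigma\sequoid\id_{\oc A})}{\alpha_A}}$, which is exactly the commutativity of the diagram in the statement. There is no serious obstacle here: the only point requiring care is the direction of the $\wk$-naturality square and the observation that it is legitimate only because $\sigma$ is strict.
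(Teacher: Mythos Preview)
Your proof is correct and follows essentially the same route as the paper. Both arguments reduce the claim to showing that the defining coalgebra $h = \comp{\wk_{B,\oc A}}{\comp{(\sigma\tensor\id_{\oc A})}{\comp{(\der_A\tensor\id_{\oc A})}{\mu_A}}}$ equals $\comp{(\sigma\sequoid\id_{\oc A})}{\alpha_A}$, then invoke the anamorphism square; the paper breaks this into the same two pieces (Lemma~\ref{aFormulaForAlpha} for one half, naturality of $\wk$ for the other), merely arranging them as two subsquares of a single diagram rather than as two sequential rewrites.
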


\begin{proof}
  By the definition of $\oc \sigma$, we have the following commutative diagram:
  \[
    \begin{tikzcd}
      \oc A \arrow[r, "\mu_A"] \arrow[d, "\oc \sigma"']
        & \oc A \tensor \oc A \arrow[r, "\der\tensor\id"]
          & A \tensor \oc A \arrow[r, "\sigma\tensor\id"]
            & B \tensor \oc A \arrow[r, "\wk"]
              & B \sequoid \oc A \arrow[d, "\id_B\sequoid\oc\sigma"] \\
      \oc B \arrow[rrrr, "\alpha_B"]
        &
          &
            &
              & B \sequoid \oc B
    \end{tikzcd}
    \]
  Therefore, it will suffice to show that the following diagram (solid lines) commutes:
  \[
    \begin{tikzcd}
      \oc A \arrow[r, "\alpha_A"] \arrow[d, "\mu_A"']
        & A \sequoid \oc A \arrow[r, "\sigma\sequoid\id"]
          & B \sequoid \oc A \\
      \oc A \tensor \oc A \arrow[r, "\der\tensor\id"]
        & A \tensor \oc A \arrow[u, dotted, "\wk"] \arrow[r, "\sigma\tensor\id"]
          & B \tensor \oc A \arrow[u, "\wk"]
    \end{tikzcd}
    \]
  The left hand square commutes by Lemma \ref{aFormulaForAlpha}.  The right hand square commutes because $\wk$ is a natural transformation.
\end{proof}

To show that the first coherence diagram commutes, we define a composite $\eta_{A,B,C}$:
\settowidth{\arrow}{\scriptsize$\id_{A\sequoid(\oc A\tensor\oc B)}\times (\id_B\sequoid\sym_{\oc B,\oc A})$}
\begin{gather*}
  (\oc A \tensor \oc B) \tensor \oc C \constantwidthxrightarrow{\langle\id,\; \sym\rangle} ((\oc A \tensor \oc B) \tensor \oc C) \times (\oc C \tensor (\oc A \tensor \oc B)) \\
  \hspace{6pt}\cdots \constantwidthxrightarrow{((\comp{\dist\inv}{\kappa_{A,B}})\tensor\id)\times(\alpha_C\tensor\id)} (((A \times B) \sequoid (\oc A \tensor \oc B)) \tensor \oc C) \times ((C \sequoid \oc C) \tensor (\oc A \tensor \oc B)) \\
  \hspace{6pt}\cdots \constantwidthxrightarrow{\wk\times\wk} (((A \times B) \sequoid (\oc A \tensor \oc B)) \sequoid \oc C) \times ((C \sequoid \oc C) \sequoid (\oc A \tensor \oc B)) \\
  \hspace{6pt}\cdots \constantwidthxrightarrow{\passoc\inv\times\passoc\inv} ((A \times B) \sequoid ((\oc A \tensor \oc B) \tensor \oc C)) \times (C \sequoid (\oc C \tensor (\oc A \tensor \oc B))) \\
  \hspace{6pt}\cdots \constantwidthxrightarrow{\id\times(\id\sequoid\sym)} ((A \times B) \sequoid ((\oc A \tensor \oc B) \tensor \oc C)) \times (C \sequoid ((\oc A \tensor \oc B) \tensor \oc C))
\end{gather*}

Observe the similarity between the definition of $\eta_{A,B,C}$ and that of $\kappa_{A\times B,C}$.  Indeed, it may be easily verified that the following diagram commutes, using the definition of $\int_{A,B}$ as the anamorphism for $\comp{\dist\inv}{\kappa_{A,B}}$ and the fact that $\wk$, $\passoc$ and $\sym$ are natural transformations:
\begin{equation}\label{kreuz}
  \begin{tikzcd}
    (\oc A \tensor \oc B) \tensor \oc C \arrow[r, "\eta_{A,B,C}"] \arrow[d, "\int_{A,B}\tensor\id"']
      & ((A \times B) \sequoid ((\oc A \tensor \oc B) \tensor \oc C)) \times (C \sequoid ((\oc A \tensor \oc B) \tensor \oc C)) \arrow[d, "(\id\sequoid(\int\tensor\id))\times (\id\sequoid(\int\tensor\id))"] \\
    \oc (A \times B) \tensor \oc C \arrow[r, "\kappa_{A\times B,C}"]
      & ((A \times B) \sequoid (\oc (A \times B) \tensor \oc C)) \times (C \sequoid (\oc (A \times B) \tensor \oc C))
  \end{tikzcd}\tag{\kreuz}
\end{equation}
Then we get the commutative diagram in Figure \ref{CoherenceOneWay}, which tells us that one of the two paths round the coherence diagram is the anamorphism of $\comp{\assoc_\times\sequoid\id}{\comp{\dist\inv}{\eta_{A,B,C}}}$.  We now show that the other path round the coherence diagram is the anamorphism of the same thing, which will prove that they are equal.  

\begin{SidewaysFigure}
  \[
    \begin{tikzcd}[ampersand replacement=\&]
      (\oc A \tensor \oc B) \tensor \oc C \arrow[r, "\comp{\dist\inv}{\eta_{A,B,C}}"] \arrow[d, "\int_{A,B}\tensor\id"'] \arrow[phantom, dr, "\textbf{a}"]
        \&[24pt] ((A \times B) \times C) \sequoid ((\oc A \tensor \oc B) \tensor \oc C) \arrow[r, "\assoc_\times\sequoid\id"] \arrow[d, "\id\sequoid(\int\tensor\id)"] \arrow[phantom, dr, "\textbf{b}"]
          \&[19pt] (A \times (B\times C)) \sequoid ((\oc A \tensor \oc B) \tensor \oc C) \arrow[d, "\id\sequoid(\int\tensor\id)"] \\
      \oc (A \times B) \tensor \oc C \arrow[r, "\comp{\dist\inv}{\kappa_{A\times B,C}}"] \arrow[d, "\int_{A\times B,C}"'] \arrow[phantom, dr, "\textbf{d}"]
        \& ((A \times B) \times C) \sequoid (\oc(A\times B)\tensor \oc C) \arrow[r, "\assoc_\times\sequoid\id"] \arrow[d, "\id\sequoid\int"] \arrow[phantom, dr, "\textbf{c}"]
          \& (A \times (B\times C)) \sequoid (\oc(A\times B)\tensor\oc C) \arrow[d, "\id\sequoid\int"] \\
      \oc((A\times B)\times C) \arrow[r, "\alpha_{(A\times B)\times C}"] \arrow[d, "\oc \assoc_\times"'] \arrow[phantom, drr, "\textbf{e}"]
        \& ((A\times B) \times C) \sequoid \oc ((A\times B)\times C) \arrow[r, "\assoc_\times\sequoid\id"]
          \& (A \times (B\times C)) \sequoid \oc (A \times(B \times C)) \arrow[d, "\id\sequoid\oc \assoc_\times"] \\
      \oc(A \times (B\times C)) \arrow[rr, "\alpha_{A\times(B\times C)}"]
        \&
          \& (A \times (B \times C)) \sequoid \oc (A \times (B \times C))
    \end{tikzcd}
    \]
  \caption{\textbf{a} commutes by Diagram \eqref{kreuz}, plus the fact that $\dist$ is a natural transformation.  \\[\baselineskip]
  \textbf{b} and \textbf{c} commute because $\assoc_\times$ is a natural transformation.  \\[\baselineskip]
  \textbf{d} commutes by the definition of $\int_{A\times B,C}$.  \\[\baselineskip]
  \textbf{e} commutes by Lemma \ref{ocStrict}.}\label{CoherenceOneWay}
\end{SidewaysFigure}

For this, we define a composite $\tilde{\eta}_{A,B,C}$:
\settowidth{\arrow}{\scriptsize$\id_{A\sequoid(\oc A\tensor\oc B)}\times (\id_B\sequoid\sym_{\oc B,\oc A})$}
\begin{gather*}
  \oc A \tensor (\oc B \tensor \oc C) \constantwidthxrightarrow{\langle\sym,\id\rangle} (\oc A \tensor (\oc B \tensor \oc C)) \times ((\oc B \tensor \oc C) \tensor \oc A) \\
  \hspace{6pt}\cdots \constantwidthxrightarrow{(\alpha_A\tensor\id)\times((\comp{\dist\inv}{\kappa_{B,C}})\tensor\id)} ((A \sequoid \oc A) \tensor (\oc B \tensor \oc C)) \times (((B \times C) \sequoid (\oc B \tensor \oc C)) \tensor \oc A) \\
  \hspace{6pt}\cdots \constantwidthxrightarrow{\wk\times\wk} ((A \sequoid \oc A) \sequoid (\oc B \tensor \oc C)) \times (((B \times C) \sequoid (\oc B \tensor \oc C)) \sequoid \oc A) \\
  \hspace{6pt}\cdots \constantwidthxrightarrow{\passoc\inv\times\passoc\inv} (A \sequoid (\oc A \tensor (\oc B \tensor \oc C))) \times ((B \times C) \sequoid ((\oc B \tensor \oc C) \tensor \oc A)) \\
  \hspace{6pt}\cdots \constantwidthxrightarrow{\id\times(\id\sequoid\sym)} (A \sequoid (\oc A \tensor (\oc B \tensor \oc C))) \times ((B \times C) \sequoid (A \tensor (\oc B \tensor \oc C))) \\
\end{gather*}

We get a commutative diagram:
\[
  \begin{tikzcd}
    (\oc A \tensor \oc B) \tensor \oc C \arrow[r, "\comp{\assoc_\times\sequoid\id}{\comp{\dist\inv}{\eta_{A,B,C}}}"] \arrow[d, "\assoc_{A,B,C}"']
      &[78pt] (A \times (B \times C)) \sequoid ((\oc A \tensor \oc B) \tensor \oc C) \arrow[d, "\id\sequoid\assoc"] \\
    \oc A \tensor (\oc B \tensor \oc C) \arrow[r, "\comp{\assoc_\times\sequoid\id}{\comp{\dist\inv}{\tilde{\eta}_{A,B,C}}}"] \arrow[d, "\id\tensor\int"']
      & (A \times (B \times C)) \sequoid (\oc A \tensor (\oc B \tensor \oc C)) \arrow[d, "\id\sequoid(\id\tensor\int)"] \\
    \oc A \tensor \oc(B \times C) \arrow[r, "\comp{\assoc_\times\sequoid\id}{\comp{\dist\inv}{\kappa_{A,B\times C}}}"] \arrow[d, "\int_{A,B\times C}"']
      & (A \times (B \times C)) \sequoid (\oc A \tensor \oc (B\times C)) \arrow[d, "\id\sequoid\int"] \\
    \oc (A \times (B \times C)) \arrow[r, "\alpha_{A\times(B\times C)}"]
      & (A \times (B \times C)) \sequoid \oc (A \times (B \times C))
  \end{tikzcd}
  \]

Commutativity of the top square is a long and fairly unenlightening exercise in the coherence conditions for symmetric monoidal categories and for sequoidal categories.  Commutativity of the middle and bottom squares are by similar arguments to the ones in Figure \ref{CoherenceOneWay}.  Therefore, the two branches of the coherence diagram are anamorphisms for the same thing, and so they are equal.  

The proofs for the other three coherence diagrams are similar.  

\subsection{Proof of Proposition \ref{itsMu}}

\begingroup
\def\thetheorem{\ref{itsMu}}
\begin{proposition}
  $\CCom(\oc)\left(A\xrightarrow{\Delta}A\times A\right)$ has comultiplication given by $\mu_A\from\oc A \to \oc A\tensor\oc A$ and counit given by the unique morphism $\eta_A\from \oc A \to I$.
\end{proposition}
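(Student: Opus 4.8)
The plan is to compute the comonoid structure that $\CCom(\oc)$ assigns to the Cartesian comonoid $A\xrightarrow{\Delta_A}A\times A$ and to recognise it as $(\mu_A,\eta_A)$. Write $\Delta_A\from A\to A\times A$ for the diagonal. A strong symmetric monoidal functor sends a comonoid $(C,\delta,\varepsilon)$ to the comonoid on $\oc C$ with comultiplication $\comp{\int_{C,C}\inv}{\oc\delta}$ and counit $\comp{\epsilon\inv}{\oc\varepsilon}$; instantiating at $C=A$, $\delta=\Delta_A$ and $\varepsilon$ the terminal projection $A\to 1$, and using that $\oc$ is witnessed by $\int$ and $\epsilon$ (Proposition \ref{StrongMonoidalFunctorActual}), the image comonoid has comultiplication $\comp{\int_{A,A}\inv}{\oc\Delta_A}$ and counit a morphism $\oc A\to I$. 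The counit claim is then immediate: since $\C$ is decomposable, $I$ is terminal, so there is exactly one morphism $\oc A\to I$, which is $\eta_A$ by definition.

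For the comultiplication it suffices to show $\comp{\int_{A,A}\inv}{\oc\Delta_A}=\mu_A$. As $\mu_A=\comp{\int_{A,A}\inv}{\sigma_A}$ by definition, this reduces to the identity $\oc\Delta_A=\sigma_A$ of morphisms $\oc A\to\oc(A\times A)$. Both are anamorphisms $\fcoal{\blank}$ for the functor $(A\times A)\sequoid\blank$ --- that is, coalgebra morphisms into the final coalgebra $\alpha_{A\times A}$ --- so by the uniqueness of anamorphisms it is enough to show that their two source coalgebra structures on $\oc A$ agree. Unwinding the definitions, $\sigma_A$ is the anamorphism of $\comp{\dist\inv}{\comp{\Delta}{\alpha_A}}$, where $\Delta$ here is the diagonal on $A\sequoid\oc A$, while $\oc\Delta_A$ is the anamorphism of $\comp{\wk_{A\times A,\oc A}}{\comp{\Delta_A\tensor\id}{\comp{\der_A\tensor\id}{\mu_A}}}$.

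To compare the two coalgebra maps, I would first rewrite the occurrence of $\alpha_A$ in the coalgebra for $\sigma_A$ by Lemma \ref{aFormulaForAlpha}, which gives $\alpha_A=\comp{\wk_{A,\oc A}}{\comp{\der_A\tensor\id}{\mu_A}}$. After this substitution both coalgebra maps begin with the common prefix $\comp{\der_A\tensor\id}{\mu_A}\from\oc A\to A\tensor\oc A$, so the whole claim reduces to the single equation
\[
  \comp{\dist\inv}{\comp{\Delta}{\wk_{A,\oc A}}} = \comp{\wk_{A\times A,\oc A}}{\Delta_A\tensor\id_{\oc A}}
\]
of morphisms $A\tensor\oc A\to(A\times A)\sequoid\oc A$. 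I would verify this by postcomposing with the isomorphism $\dist=\langle\pr_1\sequoid\id,\pr_2\sequoid\id\rangle$ and comparing components: naturality of $\wk$ gives $\comp{\pr_i\sequoid\id}{\wk_{A\times A,\oc A}}=\comp{\wk_{A,\oc A}}{\pr_i\tensor\id}$, and since $\comp{\pr_i}{\Delta_A}=\id_A$ in the Cartesian category the right-hand side postcomposed with $\dist$ collapses to $\langle\wk_{A,\oc A},\wk_{A,\oc A}\rangle=\comp{\Delta}{\wk_{A,\oc A}}$, which is precisely the left-hand side postcomposed with $\dist$.

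The argument is mostly bookkeeping once Lemma \ref{aFormulaForAlpha} is available; the step I would take most care over is keeping the two diagonals apart --- $\Delta_A$ on the object $A$ versus $\Delta$ on $A\sequoid\oc A$ --- and confirming that the naturality square for $\wk$ together with the Cartesian identity $\comp{\pr_i}{\Delta_A}=\id_A$ genuinely identifies the two source coalgebras, whereupon uniqueness of anamorphisms closes the proof.
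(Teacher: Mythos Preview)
Your argument is correct. The overall shape matches the paper's: identify the $\CCom(\oc)$-comultiplication as $\comp{\int_{A,A}\inv}{\oc\Delta_A}$, reduce to $\oc\Delta_A=\sigma_A$, and dispatch the counit by terminality of $I$. The difference is in how you establish $\oc\Delta_A=\sigma_A$. The paper simply invokes Lemma~\ref{ocStrict}, which says that for a \emph{strict} morphism $\sigma$ one has $\oc\sigma=\fcoal{\comp{(\sigma\sequoid\id)}{\alpha_A}}$; applied to the diagonal $\Delta_A$ (a morphism of $\C_s$, since $\C_s$ contains the Cartesian structure) this gives $\oc\Delta_A=\fcoal{\comp{(\Delta_A\sequoid\id)}{\alpha_A}}$, and then $\comp{\dist}{(\Delta_A\sequoid\id)}=\Delta_{A\sequoid\oc A}$ identifies this with the defining coalgebra of $\sigma_A$. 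You instead unfold the \emph{general} definition of $\oc f$ and use Lemma~\ref{aFormulaForAlpha} to rewrite $\alpha_A$, arriving at the equation $\comp{\dist\inv}{\comp{\Delta}{\wk_{A,\oc A}}}=\comp{\wk_{A\times A,\oc A}}{(\Delta_A\tensor\id)}$, which you verify componentwise via naturality of $\wk$ (valid because the projections $\pr_i$ are strict). This is essentially an inlining of the proof of Lemma~\ref{ocStrict} specialised to $\sigma=\Delta_A$; the paper's route is slightly shorter because it has already packaged that calculation into a reusable lemma, while yours avoids needing to check that $\Delta_A$ lies in $\C_s$ at the cost of a small extra diagram chase.
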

\addtocounter{theorem}{-1}
\endgroup

By the definition of $\CCom$, the comultiplication in $\CCom(\oc)\left(A\xrightarrow{\Delta}A\times A\right)$ is given by the composite:
\[
  \oc A \xrightarrow{\oc\Delta} \oc (A \times A) \xrightarrow{\int_{A,A}\inv} \oc A \times \oc A
  \]
By Lemma \ref{ocStrict}, the morphism $\oc\Delta$ is equal to the morphism $\sigma_A$ defined above.  So this composite is equal to $\comp{\int_{A,A}}{\sigma_A}=\mu_A$.  

The counit is a morphism $\oc A\to I$, so by uniqueness it must be equal to $\eta_A$.  

\subsection{Proof of Theorem \ref{Coalgebra__CoCoCo}}

\begingroup
\def\thetheorem{\ref{Coalgebra__CoCoCo}}
\begin{theorem}
  Let $(\C,\C_s,J,\wk)$ be a sequoidal category satisfying all the conditions from Theorem \ref{StrongMonoidalFunctor}.  Let $A$ be an object of $\C$ (equivalently, of $\C_s$).  Then $\oc A$, together with the comultiplication $\mu_A$ and counit $\eta_A$, is the cofree commutative comonoid over $A$.
\end{theorem}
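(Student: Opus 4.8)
The plan is to establish the cofree universal property directly: for every commutative comonoid $(B,\delta_B,\varepsilon_B)$ in $(\C,\tensor,I)$ and every morphism $f\from B\to A$, I must exhibit a \emph{unique} comonoid morphism $\hat f\from (B,\delta_B,\varepsilon_B)\to(\oc A,\mu_A,\eta_A)$ such that $\comp{\der_A}{\hat f}=f$. The whole strategy is to transport the problem across the strong monoidal functor $\oc$ established in Theorem \ref{StrongMonoidalFunctor} and Proposition \ref{StrongMonoidalFunctorActual}. Since $\oc$ is strong symmetric monoidal from $(\C,\times,1)$ to $(\C,\tensor,I)$, it induces the functor $\CCom(\oc)$ on comonoid categories, and by Proposition \ref{itsMu} the image under $\CCom(\oc)$ of the canonical Cartesian comonoid $(A,\Delta_A,*_A)$ is exactly $(\oc A,\mu_A,\eta_A)$. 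The key fact I would exploit is that in the Cartesian category $(\C,\times,1)$ \emph{every} object carries a unique commutative comonoid structure given by the diagonal and terminal maps, and that comonoid morphisms in $(\C,\times,1)$ are simply \emph{all} morphisms of $\C$; thus $(A,\Delta_A,*_A)$ is trivially the cofree commutative comonoid on $A$ in $(\C,\times,1)$, with counit/dereliction the identity $\id_A$.

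First I would reduce the claim to a statement about the comonoid category of the Cartesian structure. The universal property I want says precisely that the forgetful functor $\F\from\CCom(\C,\tensor,I)\to(\C,\tensor,I)$ has a right adjoint sending $A$ to $(\oc A,\mu_A,\eta_A)$. Using the commuting square relating $\F$, $\CCom(\oc)$ and $\oc$ from the excerpt, together with the fact that the analogous right adjoint on the Cartesian side exists and sends $A$ to $(A,\Delta_A,*_A)$ (the trivial Cartesian cofree comonoid), I would show the required adjunction factors as the composite of the Cartesian cofree-comonoid adjunction with the comonoid functor $\CCom(\oc)$. Concretely, given $f\from B\to A$, I regard it first as a morphism in $(\C,\times,1)$, obtain the unique comonoid morphism $f\from(B,\Delta_B,*_B)\to(A,\Delta_A,*_A)$ (automatic, since all morphisms are comonoid morphisms in a Cartesian category), apply $\CCom(\oc)$ to get a comonoid morphism $\oc f\from(\oc B,\mu_B,\eta_B)\to(\oc A,\mu_A,\eta_A)$ in $(\C,\tensor,I)$, and finally precompose with the comultiplication-derived map $B\to\oc B$ that comes from the comonoid structure of $B$ itself.

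The technical heart is constructing, for an \emph{arbitrary} commutative comonoid $(B,\delta_B,\varepsilon_B)$ in $(\C,\tensor,I)$ that need \emph{not} be of the form $\oc(\blank)$, the transpose $\hat f\from B\to\oc A$ and verifying it is a comonoid morphism. I would build $\hat f$ as the anamorphism of a suitable $A\sequoid\blank$-coalgebra on $B$: namely the composite $B\xrightarrow{\delta_B}B\tensor B\xrightarrow{f\tensor\id}A\tensor B\xrightarrow{\wk_{A,B}}A\sequoid B$, exploiting that iterated comultiplication of $B$ lets the resource be consumed one copy at a time. Establishing that $\fcoal{\comp{\wk_{A,B}}{\comp{(f\tensor\id)}{\delta_B}}}$ is a comonoid morphism, i.e.\ that it commutes with comultiplications and counits, is where \emph{commutativity} and \emph{coassociativity} of $\delta_B$ must be used in an essential way, and is the step I expect to be the main obstacle. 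I would discharge it by the uniqueness clause of the final-coalgebra property: both $\comp{\mu_A}{\hat f}$ and $\comp{(\hat f\tensor\hat f)}{\delta_B}$ into $\oc A\tensor\oc A$ can be shown to be the anamorphism of one and the same $A\sequoid\blank$-coalgebra, using Lemma \ref{aFormulaForAlpha}, the strong monoidal coherence of $\int$ from Proposition \ref{StrongMonoidalFunctorActual}, and commutativity of $\delta_B$ to identify the two interleavings. Uniqueness of $\hat f$ subject to $\comp{\der_A}{\hat f}=f$ then follows from the uniqueness of anamorphisms together with the formula $\comp{\der_A}{\hat f}=\comp{\run_A}{\comp{(\id_A\sequoid *_{\oc A})}{\comp{\alpha_A}{\hat f}}}=f$, which unwinds directly from the coalgebra defining $\hat f$.
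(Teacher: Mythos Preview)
Your third paragraph is exactly the paper's proof: define $\hat f$ (the paper's $f^\dag$) as the anamorphism of $\comp{\wk_{A,B}}{\comp{(f\tensor\id_B)}{\delta_B}}$; prove the comultiplication square by showing that $\comp{\int_{A,A}}{\comp{\mu_A}{\hat f}}$ and $\comp{\int_{A,A}}{\comp{(\hat f\tensor\hat f)}{\delta_B}}$ are anamorphisms of one and the same $(A\times A)\sequoid\blank$-coalgebra on $B$ (coassociativity and commutativity of $\delta_B$ are used precisely here); prove the dereliction triangle directly from counitality of $\delta_B$; and get uniqueness from the final-coalgebra property.

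Two remarks. First, your opening two paragraphs are a detour that does not close: the ``comultiplication-derived map $B\to\oc B$'' you propose to precompose with is not given to you --- it is precisely the universal arrow you are trying to construct, and nothing about $\CCom(\oc)$ produces it for an arbitrary $\tensor$-comonoid $B$. The paper does not attempt this factorisation; it goes straight to the anamorphism.

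Second, your uniqueness sketch is too thin as stated. Uniqueness is \emph{not} among all morphisms satisfying $\comp{\der_A}{g}=f$; it is among \emph{comonoid} morphisms with that property, and the comultiplication square is essential. The paper's argument uses both hypotheses together with Lemma \ref{aFormulaForAlpha}, which rewrites $\alpha_A$ as $\comp{\wk}{\comp{(\der_A\tensor\id)}{\mu_A}}$: from $\comp{\mu_A}{g}=\comp{(g\tensor g)}{\delta_B}$ and $\comp{\der_A}{g}=f$ one obtains $\comp{\alpha_A}{g}=\comp{(\id_A\sequoid g)}{\comp{\wk}{\comp{(f\tensor\id_B)}{\delta_B}}}$, whence $g$ is the anamorphism and so equals $\hat f$. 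You should make this step explicit.
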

\addtocounter{theorem}{-1}
\endgroup

We know from Proposition \ref{itsMu} that the $(\oc A,\mu_A,\eta_A)$ is indeed a commutative comonoid.  Now let $\delta\from B\to B\tensor B$ be a commutative comonoid in $\C$ and let $f\from B\to A$ be a morphism.  We need to show that there is a unique morphism $f^\dag\from B \to \oc A$ such that the following diagram commutes:
\[
  \begin{tikzcd}
      & B \arrow[r, "\delta"] \arrow[d, "f^\dag"] \arrow[dl, "f"']
        & B \tensor B \arrow[d, "f^\dag\tensor f^\dag"] \\
    A
      & \oc A \arrow[l, "\der_A"] \arrow[r, "\mu_A"]
        & \oc A \tensor \oc A
  \end{tikzcd}
  \]
We define the morphism $f^\dag$ to be the anamorphism of the composite:
\[
  B \xrightarrow{\delta} B \tensor B \xrightarrow{f\tensor\id B} A \tensor B \xrightarrow{\wk_{A,B}} A \sequoid B
  \]
We first claim that it makes the given diagram commute, starting with the square on the left.  We show that $\comp{\int_{A,A}}{\comp{\mu_A}{f^\dag}}=\comp{\int_{A,A}}{\comp{(f^\dag\tensor f^\dag)}{\delta}}\from B \to \oc (A\times A)$, by showing that both morphisms are anamorphisms for the composite
\[
  B \xrightarrow{\delta} B \tensor B \xrightarrow{f\tensor\id_B} A \tensor B \xrightarrow{\wk_{A,B}} A \sequoid B \xrightarrow{\Delta} (A \sequoid B) \times (A \sequoid B) \xrightarrow{\dist_{A,B}\inv} (A \times A) \sequoid B
  \]
Since $\int_{A,A}$ is an isomorphism, this will prove that the square on the right commutes.  

For the first case, the diagram in Figure \ref{transferReportDiagram1} proves that $\comp{\int_{A,A}}{\comp{\mu_A}{f^\dag}}=\comp{\sigma_A}{f^\dag}$ is the anamorphisms for that composite.  For the other case, taking the product of the diagrams in Figure \ref{transferReportDiagram2} gives rise to the diagram in Figure \ref{transferReportDiagram3}, which proves that $\comp{\int_{A,A}}{\comp{(f^\dag\tensor f^\dag)}{\delta}}$ is the anamorphism for the same composite, which completes the proof that the right hand square commutes.  
\begin{SidewaysFigure}
  \[
    \begin{tikzcd}[cramped, ampersand replacement=\&]
      B \arrow[r, "\delta"] \arrow[d, "f^\dag"'] \arrow[phantom, drrr, "\textbf{a}"]
        \& B \tensor B \arrow[r, "f\tensor\id_B"]
          \& A \tensor B \arrow[r, "\wk_{A,B}"]
            \& A \sequoid B \arrow[r, "\Delta"] \arrow[d, "\id\sequoid f^\dag"] \arrow[phantom, dr, "\textbf{b}" xshift=0.6em]
              \& (A \sequoid B) \times (A \sequoid B) \arrow[r, "\dist_{A,B}\inv"] \arrow[d, "(\id\sequoid f^\dag)\times(\id\sequoid f^\dag)"] \arrow[phantom, dr, "\textbf{c}" xshift=2.0em]
                \& (A \times A) \sequoid B \arrow[d, "\id\sequoid f^\dag"] \\
      \oc A \arrow[rrr, "\alpha_A"] \arrow[d, "\sigma_A"'] \arrow[phantom, drrrrr, "\textbf{d}"]
        \&
          \&
            \& A \sequoid \oc A \arrow[r, "\Delta"]
              \& (A \sequoid \oc A) \times (A \sequoid \oc A) \arrow[r, "\dist_{A,\oc A}\inv"]
                \& (A \times A) \sequoid \oc A \arrow[d, "\id\sequoid\sigma_A"] \\
      \oc (A\times A) \arrow[rrrrr, "\alpha_{A\times A}"]
        \&
          \&
            \&
              \&
                \& (A \times A) \sequoid \oc (A\times A)
    \end{tikzcd}
    \]
  \caption{\textbf{a} commutes by the definition of $f^\dag$.\\[\baselineskip]
    \textbf{b} commutes because $\Delta$ is a natural transformation.  \textbf{c} commutes because $\dist$ is a natural transformation. \\[\baselineskip]
    \textbf{d} commutes be the definition of $\sigma_A$.
  }\label{transferReportDiagram1}
\end{SidewaysFigure}

\begin{SidewaysFigure}
  \[
    \begin{tikzcd}[ampersand replacement=\&]
      B \arrow[r, "\delta"] \arrow[dd, "\delta"'] \arrow[ddr, phantom, "\textbf{a}"]
        \& B\tensor B \arrow[r, "f\tensor \id_B"] \arrow[d, "\id_B\tensor\delta"]
          \& A \tensor B \arrow[rrr, "\wk"] \arrow[d, "\id_A\tensor\delta"] \arrow[drrr, phantom, "\textbf{b}"]
            \&
              \&
                \& A\sequoid B \arrow[d, "\id_A\sequoid\delta"] \\
        \& B \tensor(B\tensor B) \arrow[r, "f\tensor\id_{B\tensor B}"] \arrow[d, "{\assoc_{B,B,B}\inv}"] \arrow[phantom, dr, "\textbf{d}" xshift=0.3em]
          \& A \tensor (B \tensor B) \arrow[rrr, "\wk"] \arrow[drr, phantom, "\textbf{g}"] \arrow[d, "\assoc_{A,B,B}\inv"]
            \&
              \&
                \& A \sequoid (B\tensor B) \arrow[dd, "\id_A\sequoid(f^\dagger \tensor f^\dagger)"] \arrow[dl, "{\passoc_{A,B,B}}"] \arrow[ddl, phantom, "\textbf{e}"] \\
      B \tensor B \arrow[r, "\delta\tensor \id_B"] \arrow[d, "f^\dagger \tensor f^\dagger"'] \arrow[drrr, phantom, "\textbf{f}" yshift=-0.3em]
        \& (B \tensor B) \tensor B \arrow[r, "(f\tensor\id_B)\tensor\id_B"' yshift=-0.3em]
          \& (A \tensor B) \tensor B \arrow[r, "\wk\tensor\id_B"]
            \& (A \sequoid B) \tensor B \arrow[r, "\wk"] \arrow[d, "(\id\sequoid f^\dagger)\tensor f^\dagger"'] \arrow[dr, phantom, "\textbf{c}"]
              \& (A\sequoid B)\sequoid B \arrow[d, "(\id\sequoid f^\dagger)\sequoid f^\dagger"]
                \& \\
      \oc A \tensor \oc A \arrow[rrr, "\alpha\tensor\id_{\oc A}"']
        \&
          \&
            \& (A \sequoid \oc A)\tensor \oc A \arrow[r, "\wk"]
              \& (A \sequoid \oc A) \sequoid \oc A \arrow[r, "\passoc\inv"]
                \& A \sequoid (\oc A \tensor \oc A)
    \end{tikzcd}
  \]
\[
  \begin{tikzcd}[ampersand replacement=\&]
    \& B \arrow[r, "\delta"] \arrow[d, "\delta"] \arrow[dl, "\delta"'] \arrow[phantom, ddl, "\textbf{h}" near start] \arrow[phantom, ddrrr, "\textbf{j}"]
        \& B \tensor B \arrow[r, "f\tensor\id_B"]
          \& A \tensor B \arrow[r, "\wk"]
            \& A \sequoid B \arrow[d, "\id\sequoid\delta"] \\
    B \tensor B \arrow[r, "\sym"] \arrow[d, "f^\dag\tensor f^\dag"'] \arrow[phantom, dr, "\textbf{i}"]
      \& B \tensor B \arrow[d, "f^\dag\tensor f^\dag"]
        \&
          \&
            \& A \sequoid (B \tensor B) \arrow[d, "\id\sequoid(f^\dag\tensor f^\dag)"] \\
    \oc A \tensor \oc A \arrow[r, "\sym"]
      \& \oc A \tensor \oc A \arrow[r, "\alpha_A \tensor \id_A"']
        \& (A \sequoid \oc A) \tensor \oc A \arrow[r, "\wk"]
          \& (A \sequoid \oc A) \sequoid \oc A \arrow[r, "\passoc\inv"]
            \& A \sequoid (\oc A \tensor \oc A)
  \end{tikzcd}
  \]
  \caption{\textbf{a} commutes because the comultiplication $\delta$ is associative.\\[\baselineskip]
    \textbf{b} and \textbf{c} commute because $\wk$ is a natural transformation.  \textbf{d} and \textbf{e} commute because $\assoc$ and $\passoc$ are natural transformations.  \\[\baselineskip]
    \textbf{f} is the tensor product of two diagrams: one is the definition of $f^\dag$ and the other one obviously commutes.\\[\baselineskip]
    \textbf{g} is one of the coherence diagrams for $\wk$.\\[\baselineskip]
    \textbf{h} commutes because the comultiplication $\delta$ is commutative. \\[\baselineskip]
    \textbf{i} commutes because $\sym$ is a natural transformation.\\[\baselineskip]
    \textbf{j} commutes by the first diagram.
  }\label{transferReportDiagram2}
\end{SidewaysFigure}

\begin{SidewaysFigure}
  \[
    \begin{tikzcd}[ampersand replacement=\&]
      B \arrow[r, "\delta"] \arrow[d, "\delta"'] \arrow[phantom, ddrrrr, "\textbf{a}"]
        \& B \tensor B \arrow[r, "f\tensor\id_B"]
          \& A \tensor B \arrow[r, "\wk"]
            \& A \sequoid B \arrow[r, "\Delta"]
              \& (A \sequoid B) \times (A \sequoid B) \arrow[r, "\dist_{A,B}\inv"] \arrow[d, "(\id_A\sequoid\delta)\times(\id_A\sequoid\delta)"'] \arrow[phantom, ddr, "\textbf{b}" xshift=1.2em]
                \& (A \times A) \sequoid B \arrow[d, "\id_{A\times A}\sequoid\delta"] \\
      B \tensor B \arrow[d, "f^\dag\tensor f^\dag"']
        \&
          \&
            \&
              \& (A \sequoid (B \tensor B)) \times (A \sequoid (B \tensor B)) \arrow[d, "(\id_A \sequoid (f^\dag\tensor f^\dag))\times(\id_A \sequoid (f^\dag\tensor f^\dag))"']
                \& (A \times A) \sequoid (B \tensor B) \arrow[d, "\id_{A\times A} \sequoid (f^\dag\tensor f^\dag)"] \\
      \oc A \tensor \oc A \arrow[rrrr, "\sigma_A"] \arrow[d, "\int_{A,A}"'] \arrow[phantom, drrrrr, "\textbf{c}"]
        \&
          \&
            \&
              \& (A \sequoid (\oc A \tensor \oc A)) \times (A \sequoid (\oc A \tensor \oc A)) \arrow[r, "\dist_{A,\oc A \tensor \oc A}\inv" yshift=0.3em]
                \& (A \times A) \sequoid (\oc A \tensor \oc A) \arrow[d, "\id_{A\times A}\sequoid\int_{A,A}"] \\
      \oc (A\times A) \arrow[rrrrr, "\alpha_{A\times A}"]
        \&
          \&
            \&
              \&
                \& (A \times A) \sequoid \oc(A\times A)
    \end{tikzcd}
    \]
  \caption{\textbf{a} is the product of the diagrams in Figure \ref{transferReportDiagram2}.\\[\baselineskip]
    \textbf{b} commutes because $\dist$ is a natural transformation.\\[\baselineskip]
    \textbf{c} is the definition of $\int_{A,A}$
  }\label{transferReportDiagram3}
\end{SidewaysFigure}

Now the following diagrams show that the triangle on the left commutes:
\[
  \begin{tikzcd}
    B \arrow[r, "\delta"] \arrow[d, "f^\dag"']
      & B \tensor B \arrow[r, "f\tensor\id_B"]
        & A \tensor B \arrow[r, "\wk"]
          & A \sequoid B \arrow[dr, "\id_A\sequoid*"] \arrow[d, "\id_A\sequoid f^\dag"']
            & 
              & \\
    \oc A \arrow[rrr, "\alpha_A"']
      &
        &
          & A \sequoid \oc A \arrow[r, "\id_A\sequoid*"']
            & A \sequoid I \arrow[r, "\run_A"]
              & A
  \end{tikzcd}
  \]
\[
  \begin{tikzcd}
    B \tensor B \arrow[r, "f\tensor\id"] \arrow[d, "\id_B\tensor*"]
      & A \tensor B \arrow[r, "\wk"] \arrow[d, "\id\tensor*"]
        & A \sequoid B \arrow[d, "\id\sequoid*"] \\
    B \tensor I \arrow[r, "f\tensor\id_I"]
      & A \tensor I \arrow[r, "\wk"]
        & A \sequoid I \arrow[dl, "\run_A"] \\
    B \arrow[uu, bend left=100, "\delta"] \arrow[u, "\runit_B"'] \arrow[r, "f"]
      & A \arrow[u, "\runit_A"]
        &
  \end{tikzcd}
  \]
In the first diagram, the morphism along the bottom is $\der_A$, by definition.  The second diagram shows that the morphism along the top of the first diagram is equal to $f$.  The triangle at the bottom right of that diagram is one of the coherence conditions for $\wk$, while the semicircle at the left commutes because the comultiplication $\delta$ is unital (with unit $*\from B\to I$).  

Lastly, we show uniqueness.  Suppose that $g\from B\to\oc A$ makes the diagram commute:
\[
  \begin{tikzcd}
      & B \arrow[r, "\delta"] \arrow[d, "g"] \arrow[dl, "f"']
        & B \tensor B \arrow[d, "g\tensor g"] \\
    A
      & \oc A \arrow[l, "\der_A"] \arrow[r, "\mu_A"]
        & \oc A \tensor \oc A
  \end{tikzcd}
  \]
We may convert this into the following diagram:
\[
  \begin{tikzcd}
    B \arrow[r, "\delta"] \arrow[d, "g"']
      & B \tensor B \arrow[r, "f\tensor\id_B"] \arrow[d, "g\tensor g"]
        & A \tensor B \arrow[r, "\wk"] \arrow[d, "\id_A\tensor g"]
          & A \sequoid B \arrow[d, "\id_A\sequoid g"] \\
    \oc A \arrow[r, "\mu_A"]
      & \oc A \tensor \oc A \arrow[r, "\der_A\tensor\id"]
        & A \tensor \oc A \arrow[r, "\wk"]
          & A \sequoid \oc A
  \end{tikzcd}
  \]
Here, the left hand square is taken straight from the previous diagram, while the middle square is the tensor product of the left hand triangle with a diagram that obviously commutes.  The right hand square commutes because $\wk$ is a natural transformation.  

By Lemma \ref{aFormulaForAlpha}, the morphism along the bottom is equal to $\alpha_A$ and therefore $g$ is the anamorphism for the morphism along the top; i.e., $g=f^\dag$.  

\end{document}